\documentclass{article}[12pt]
\usepackage{graphicx,epsfig,color}
 \usepackage{citesort}
\usepackage{enumerate}


\hyphenation{super-terse mea-sure semi-recursive non-recursive
             non-superterse}
\newcounter{savenumi}

\newtheorem{theoremfoo}{Theorem}
\newenvironment{theorem}{\pagebreak[1]\begin{theoremfoo}}{\end{theoremfoo}}

\newtheorem{propositionfoo}[theoremfoo]{Proposition}
\newenvironment{proposition}{\pagebreak[1]\begin{propositionfoo}}{\end{propositionfoo}}
\newtheorem{lemmafoo}[theoremfoo]{Lemma}
\newenvironment{lemma}{\pagebreak[1]\begin{lemmafoo}}{\end{lemmafoo}}
\newtheorem{conjecturefoo}[theoremfoo]{Conjecture}

\newtheorem{corollaryfoo}[theoremfoo]{Corollary}
\newenvironment{corollary}{\pagebreak[1]\begin{corollaryfoo}}{\end{corollaryfoo}}
\newtheorem{exercisefoo}{Exercise}

\newtheorem{openfoo}[theoremfoo]{Question}

\newtheorem{nttn}[theoremfoo]{Notation}

\newtheorem{dfntn}[theoremfoo]{Definition}
\newenvironment{definition}{\pagebreak[1]\begin{dfntn}\rm}{\end{dfntn}}

\newenvironment{proof}
    {\pagebreak[1]{\narrower\noindent {\bf Proof:\quad\nopagebreak}}}{\QED}




\newcommand{\floor}[1]{\left\lfloor#1\right\rfloor}
\newcommand{\ceiling}[1]{\left\lceil#1\right\rceil}

\newcommand{\poly}{{\rm poly}}
\def\nre.{$n$\/-r.e.}
\newcommand{\scrod}{\quad\nopagebreak}

\tolerance=2000

\newtheorem{factfoo}[theoremfoo]{Fact}

\newcommand{\squeeze}{
\textwidth 6in \textheight 8.8in \oddsidemargin 0.2in \topmargin
-0.4in }

\newtheorem{propertyfoo}[theoremfoo]{Property}

\makeatletter

\def\@makechapterhead#1{ \vspace*{50pt} { \parindent 0pt \raggedright
 \ifnum \c@secnumdepth >\m@ne \huge\bf \@chapapp{} \thechapter. \par
 \vskip 20pt \fi \Huge \bf #1\par
 \nobreak \vskip 40pt } }

\def\@sect#1#2#3#4#5#6[#7]#8{\ifnum #2>\c@secnumdepth
     \def\@svsec{}\else
     \refstepcounter{#1}\edef\@svsec{\csname the#1\endcsname.\hskip 1em }\fi
     \@tempskipa #5\relax
      \ifdim \@tempskipa>\z@
        \begingroup #6\relax
          \@hangfrom{\hskip #3\relax\@svsec}{\interlinepenalty \@M #8\par}
        \endgroup
       \csname #1mark\endcsname{#7}\addcontentsline
         {toc}{#1}{\ifnum #2>\c@secnumdepth \else
                      \protect\numberline{\csname the#1\endcsname}\fi
                    #7}\else
        \def\@svsechd{#6\hskip #3\@svsec #8\csname #1mark\endcsname
                      {#7}\addcontentsline
                           {toc}{#1}{\ifnum #2>\c@secnumdepth \else
                             \protect\numberline{\csname the#1\endcsname}\fi
                       #7}}\fi
     \@xsect{#5}}

\def\@begintheorem#1#2{\it \trivlist \item[\hskip \labelsep{\bf #1\ #2.}]}

\def\@opargbegintheorem#1#2#3{\it \trivlist
      \item[\hskip \labelsep{\bf #1\ #2\ (#3).}]}

\makeatother



\newif\ifshortconferences
\shortconferencesfalse
\newif\ifmediumconferences
\mediumconferencesfalse

\def\ending#1{{\count1=#1\relax
\count2=\count1 \divide\count2 by 100 \multiply\count2 by 100
\advance\count1 by -\count2
\ifnum\count1=11
th%
\else \ifnum\count1=12
th%
\else \ifnum\count1=13
th%
\else \count2=\count1 \divide\count1 by 10 \multiply\count1 by 10
\advance\count2 by -\count1 \ifnum\count2=1
st%
\else \ifnum\count2=2
nd%
\else \ifnum\count2=3
rd%
\else th%
\fi\fi\fi\fi\fi\fi }}

\def\Proceedingsofthe{\ifshortconferences Proc.\else\ifmediumconferences Proc.\else Proceedings of the\fi\fi}

\newcounter{confnum}

\def\conf#1#2{%
\setcounter{confnum}{#2}%
\addtocounter{confnum}{-\csname #1zero\endcsname}%
\ifnum\value{confnum}=1%
\expandafter\ifx\csname #1One\endcsname\relax%
\Proceedingsofthe\ \arabic{confnum}\ending{\value{confnum}}\ \csname #1name\endcsname%
\else \csname #1One\endcsname\fi%
\else%
\Proceedingsofthe\ \arabic{confnum}\ending{\value{confnum}}\ \csname
#1name\endcsname\fi}

\def\qsym{\vrule width0.7ex height0.9em depth0ex}

\newif\ifqed\qedtrue

\def\noqed{\global\qedfalse}

\def\qed{\ifqed{\penalty1000\unskip\nobreak\hfil\penalty50
\hskip2em\hbox{}\nobreak\hfil\qsym
\parfillskip=0pt \finalhyphendemerits=0\par\medskip}\fi\global\qedtrue}

\makeatletter
\def\eqnqed{\noqed
    \def\@tempa{equation}
    \ifx\@tempa\@currenvir\def\@eqnnum{\qsym}%
    \addtocounter{equation}{-1}\else%
    \def\@@eqncr{\let\@tempa\relax
    \ifcase\@eqcnt \def\@tempa{& & &}\or \def\@tempa{& &}%
      \else \def\@tempa{&}\fi
     \@tempa {\def\@eqnnum{{\qsym}}\@eqnnum}
     \global\@eqnswtrue\global\@eqcnt\z@\cr}\fi}


\def\eqnlabel#1#2{\if@filesw {\let\thepage\relax%
   \def\protect{\noexpand\noexpand\noexpand}%
   \edef\@tempa{\write\@auxout{\string
      \newlabel{#2}{{{#1}}{\thepage}}}}%
   \expandafter}\@tempa%
   \if@nobreak \ifvmode\nobreak\fi\fi\fi%
    \def\@tempa{equation}
    \ifx\@tempa\@currenvir\def\theequation{{#1}}%
    \addtocounter{equation}{-1}\else%
    \def\@@eqncr{\let\@tempa\relax
    \ifcase\@eqcnt \def\@tempa{& & &}\or \def\@tempa{& &}%
      \else \def\@tempa{&}\fi
     \@tempa {\def\@eqnnum{{#1}}\@eqnnum}
     \global\@eqnswtrue\global\@eqcnt\z@\cr}\fi}

\makeatother



\def\QED{\qed}

\makeatother




\squeeze

\newcommand{\RegularVersion}{}

\newcommand{\littleo}{{\rm o}}
\newcommand{\bigO}{{\rm O}}
\newcommand{\prob}{{\rm Prob}}

\newcommand{\randomElm}{{\rm RandomElement}}
\newcommand{\randomNum}{{\rm RandomNumber}}

\newcommand{\test}{{\rm test}}

\newcommand{\integerSet}{{\rm N}}

\newcommand{\realSet}{{\rm R}}

\newcommand{\setCount}{m} 

\newcommand{\setSize}{n} 

\newcommand{\hashingSize}{H}

\newcommand{\sampleCount}{ s}

\newcommand{\eventCount}{ t}

\newcounter{problem-counter}
\setcounter{problem-counter}{0}

\begin{document}

\ifdefined\RegularVersion

\else \setcounter{page}{0}

\fi

\date{}

\title{
Partial Sublinear Time Approximation and Inapproximation for Maximum
Coverage
\thanks{This research was supported in part by National Science
Foundation Early Career Award 0845376 and Bensten Fellowship of the
University of Texas - Rio Grande Valley.} }

\author{
Bin Fu
\\\\
Department of Computer Science\\
 University of Texas - Rio Grande Valley, Edinburg, TX 78539, USA\\
bin.fu@utrgv.edu
\\
\\
} \maketitle

\begin{abstract} We develop a randomized approximation algorithm for
the classical maximum coverage problem, which given a list of sets
$A_1,A_2,\cdots, A_{\setCount}$ and integer parameter $k$,  select
$k$ sets $A_{i_1}, A_{i_2},\cdots, A_{i_k}$ for maximum union
$A_{i_1}\cup A_{i_2}\cup\cdots\cup A_{i_k}$. In our algorithm, each
input set $A_i$ is a black box that can provide its size $|A_i|$,
generate a random element of $A_i$, and answer the membership query
$(x\in A_i?)$ in $O(1)$ time. Our algorithm gives $(1-{1\over
e})$-approximation for maximum coverage problem in
$O(\poly(k)\setCount\cdot\log \setCount)$ time, which is independent
of the sizes of the input sets. No existing
$\bigO(p(\setCount)\setSize^{1-\epsilon})$ time $(1-{1\over
e})$-approximation algorithm for the maximum coverage has been found
for any function $p(\setCount)$ that only depends on the number of
sets, where $\setSize=\max(|A_1|,\cdots,| A_{\setCount}|)$ (the
largest size of input sets). The notion of partial sublinear time
algorithm is introduced. For a computational problem with input size
controlled by two parameters $\setSize$ and $\setCount$, a partial
sublinear time algorithm for it runs in a
$O(p(\setCount)\setSize^{1-\epsilon})$ time or
$O(q(\setSize)\setCount^{1-\epsilon})$ time. The maximum coverage
has a partial sublinear time $O(\poly(\setCount))$ constant factor
approximation since $k\le \setCount$. On the other hand, we show
that the maximum coverage problem has no partial sublinear
$O(q(\setSize)\setCount^{1-\epsilon})$ time constant factor
approximation algorithm. This separates the partial sublinear time
computation from the conventional sublinear time computation by
disproving the existence of sublinear time approximation algorithm
for the maximum coverage problem.
\end{abstract}

\centerline{{\bf Key words:} Maximum Coverage, Greedy Method,
Approximation, Partial Sublinear Time.}

\ifdefined\RegularVersion
\else
\newpage
\fi

\section{Introduction}

The maximum coverage problem is a classical NP-hard problem with
many applications~\cite{CornuejolsFisherNemhauser77,HochbaumBook97},
and is directly related to set cover problem, one of Karp's
twenty-one NP-complete problems~\cite{Karp:NP-complete}. The input
has several sets and a number $k$. The sets may have some elements
in common. You must select at most $k$ of these sets such that the
maximum number of elements are covered, i.e. the union of the
selected sets has a maximum  size. The greedy algorithm for maximum
coverage chooses sets according to one rule: at each stage, choose a
set which contains the largest number of uncovered elements. It can
be shown that this algorithm achieves an approximation ratio of
$(1-{1\over
e})$~\cite{CornuejolsFisherNemhauser77,HochbaumPathria98}.
Inapproximability results show that the greedy algorithm is
essentially the best-possible polynomial time approximation
algorithm for maximum coverage~\cite{Feige98}. The existing
implementation for the greedy $(1-{1\over e})$-approximation
algorithm for the maximum coverage problem needs $\Omega(\setCount
\setSize)$ time for a list of $\setCount$  sets $A_1,\cdots,
A_{\setCount}$ with
$\setSize=|A_1|=|A_2|=\cdots=|A_{\setCount}|$~\cite{HochbaumPathria98,Vazirani01}.
We have not found any existing
$\bigO(p(\setCount)\setSize^{1-\epsilon})$ time algorithm for the
same ratio $(1-{1\over e})$ of approximation for any function
$p(\setCount)$ that only depends on the number of sets. The variant
versions and methods for this problem have been studied in a series
of
papers~\cite{KhullerMossNaor99,AgeevSviridenko04,ChekuriKumar04,CohenKatzir08,Srinivasan01}.


This paper sticks to the original definition of the maximum coverage
problem, and studies its complexity under several concrete models.
In the first model, each set is accessed as a black box that only
provides random elements and answers membership queries. When
$\setCount$ input sets $A_1,A_2,\cdots, A_{\setCount}$ are given,
our model allows random sampling from each of them, and the
cardinality $|A_i|$ (or approximation for $|A_i|$) of each $A_i$ is
also part of the input. The results of the first model can be
transformed into other conventional models. A set could be a set of
points in a geometric shape. For example, a set may be all lattice
points in a $d$-dimensional rectangular shape. If the center
position, and dimension parameters of the rectangle are given, we
can count the number of lattice points and provide a random sample
for them.

A more generalized maximum coverage problem was studied under the
model of submodular set function subject to a matroid
constraint~\cite{NemhauserWolseyFisher77,CalinescuChekuriPalVondrak11,FilmusWard12},
and has same approximation ratio $1-{1\over e}$.
 The maximum coverage
problem in the matroid model has time complexity
$\bigO(r^3\setCount^2\setSize)$~\cite{FilmusWard12}, and
$\bigO(r^2\setCount^3\setSize+\setCount^7)$~\cite{CalinescuChekuriPalVondrak11},
respectively, according to the analysis in~\cite{FilmusWard12},
 where $r$ is the rank of matroid, $\setCount$ is the number of
sets, and $\setSize$ is the size of the largest set.  The maximum
coverage problem in the matroid model has the oracle query to the
submodular function~\cite{CalinescuChekuriPalVondrak11} and is
counted $\bigO(1)$ time per query. Computing the size of union of
input sets is $\#$P-hard if each input set as a black box is a set
of high dimensional rectangular lattice points since \#DNF is
\#P-hard~\cite{ValiantSharp}. Thus, the generalization of submodular
function in the matroid model does not characterize the
computational complexity for these types of problems.
Our model can be applied to this high dimensional space maximum
coverage problem.


In this paper, we develop a randomized algorithm to approximate the
maximum coverage problem. We show an approximation algorithm for
maximum coverage problem with $(1-{1\over e})$-ratio.
For an input list $L$ of finite sets $A_1,\cdots, A_{\setCount}$, an
integer $k$, and parameter $\epsilon\in (0,1)$, our randomized
algorithm returns an integer $z$ and a subset $H\subseteq
\{1,2,\cdots, \setCount\}$ such that $|\cup_{j\in H}A_j|\ge
(1-{1\over e})C^*(L,k)$ and $|H|=k$, and $(1-\epsilon)|\cup_{j\in
H}A_j|\le z\le (1+\epsilon)|\cup_{j\in H}A_j|$, where $C^*(L,k)$ is
the maximum union size for a solution of maximum coverage. Its
complexity is $\bigO({k^6\over \epsilon^2}(\log ({3 \setCount\over
k})
)\setCount)$ and its probability to fail is less than ${1\over 4}$.

Our computational time is independent of the size of each set if the
membership checking for each input set takes one step. When each set
$A_i$ is already saved in an efficient data structure such as
B-tree, we can also provide an efficient random sample, and make a
membership query to each $A_i$ in a $\bigO(\log |A_i|)$ time. This
model also has practical importance because B-tree is often used to
collect a large set of data.  Our algorithms are suitable to
estimate the maximum coverage when there are multiple big data sets,
and each data set is stored in a efficient data structure that can
support efficient random sampling and membership query. The widely
used B-tree in modern data base clearly fits our algorithm. Our
model and algorithm are suitable to support online computation.


We apply the randomized algorithm to several versions of maximum
coverage problem: 1. Each set contains the lattice points in a
rectangle of $d$-dimensional space. It takes $\bigO(d)$ time for a
random element, or membership query. This gives an application to a
\#P-hard problem. 2. Each set is stored in a unsorted array. It
takes $\bigO(1)$ time for a random element, and $\bigO(n)$ time for
membership query. It takes $\bigO(\log n)$ time for a random
element, or membership query. 3. Each set is stored in a sorted
array. 4. Each set is stored in a B-tree. It takes $\bigO(\log n)$
time for a random element, or a membership query. Furthermore,
B-tree can support online version of maximum coverage that has
dynamic input. 5. Each set is stored in a hashing table. The time
for membership query needs some assumption about the performance of
hashing function. We show how the computational time of the
randomized algorithm for maximum coverage depends on the these data
structures.

Sublinear time algorithms have been found for many computational
problems, such as
checking polygon intersections ~\cite{ChazelleLiuMagen05},
estimating the cost of a minimum spanning
tree~\cite{ChazelleRubinfeldTrevisan06,CzumajErgun05,CzumajSohler04-journal},
finding geometric separators~\cite{FuChen06},
property testing~\cite{GoldreichGoldwasserRon98,GoldreichRon00},
etc.

The notion of partial sublinear time computation is introduced in
this paper. It characterizes a class of computational problems that
are sublinear in one of the input parameters, but not necessarily
the other ones. For a function $f(.)$ that maps a list of sets to
nonnegative integers, a $\bigO(p(\setCount)\setSize^{1-\epsilon})$
time or $\bigO(q(\setSize)\setCount^{1-\epsilon})$ time
approximation to $f(.)$ is a partial sublinear time computation. The
maximum coverage has a partial sublinear time constant factor
approximation scheme. We prove that the special case of maximum
coverage problem with equal size of sets, called {\it equal size
maximum coverage}, is as hard as the general case. On the other
hand, we show that the equal size maximum coverage problem has no
partial sublinear $\bigO(q(\setSize)\setCount^{1-\epsilon})$
constant factor approximation randomized algorithm in a randomized
model. Thus, the partial sublinear time computation is separated
from the conventional sublinear time computation via the maximum
coverage problem.

The paper is organized as follows: In Section~\ref{model-sec}, we
define our model of computation and complexity. In
Section~\ref{overview-sec}, we give an overview of our method for
approximating maximum coverage problem. In Section~\ref{union-sec},
we give randomized greedy approximation for the maximum coverage
problem. In Section~\ref{improved-sect}, a faster algorithm is
presented with one round random sampling, which is different from
the multiple rounds random sampling used in Section~\ref{union-sec}.
In Section~\ref{partial-sublinear-sec} , we introduce the notion of
partial sublinear time computation, and prove inapproximability for
maximum coverage if the time is
$\bigO(q(\setSize)\setCount^{1-\epsilon})$. In
Section~\ref{equali-size-section}, we show a special case of maximum
coverage problem that all input sets have the same size, and prove
that it is as hard as the general case. In
Section~\ref{concrete-models-section}, the algorithm is implemented
in more concrete data model for the maximum coverage problem. An
input set can be stored in a sorted array, unsorted array, B-tree,
or hashing function. A set may be represented by a small set of
parameters if it is a set of high dimensional points such as a set
of lattice points in a rectangle shape.

\section{Computational Model and Complexity}\label{model-sec}

In this section, we show our model of computation, and the
definition of complexity. Assume that  $A_1$ and $A_2$ are two sets.
Define $A_2-A_1$ to be the set of elements in $A_2$, but not
in $A_1$.
For a finite set $A$, we use $|A|$, {\it cardinality} of $A$, to be
the number of distinct elements in $A$. For a real number $x$, let
$\ceiling{x}$ be the least integer $y$ greater than or equal to $x$,
and $\floor{x}$ be the largest integer $z$ less than or equal to
$x$.
Let $\integerSet=\{0,1,2,\cdots\}$ be the set of nonnegative
integers, $\realSet=(-\infty,+\infty)$ be the set of all real
numbers,  and $\realSet^+=[0,+\infty)$ be the set of all nonnegative
real numbers.
An integer $s$ is a $(1+\epsilon)$-approximation for $|A|$ if
$(1-\epsilon)|A|\le s\le (1+\epsilon)|A|$.

\begin{definition}\label{input-list-def} The {\it type 0 model} of randomized computation for our algorithm is defined below:
An input $L$ is a list of sets $A_1,A_2,\cdots, A_{\setCount}$
 that provide the cardinality of $A_i$ is  $\setSize_i=|A_i|$ for $i=1,2,\cdots,
\setCount$,
the largest cardinality of input sets
$\setSize=\max\{\setSize_i:1\le i\le \setCount\}$, and
support the following operations:
\begin{enumerate}[1.]
\item
Function \randomElm$(A_i)$ returns a random element $x$ from $A_i$
for $i=1,2,\cdots, \setCount$.
\item
Function Query$(x, A_i$) returns $1$ if $x\in A_i$, and $0$
otherwise.
\end{enumerate}
\end{definition}

\begin{definition}Let parameters $\alpha_L$ and $\alpha_R$ be in
$[0,1)$.
An {\it $(\alpha_L,\alpha_R)$-biased generator} \randomElm$(A)$ for
set $A$ generates an element in $A$ such that for each $y\in A$,
$(1-\alpha_L)\cdot {1\over |A|}\le \prob(\randomElm(A)=y)\le
(1+\alpha_R)\cdot {1\over |A|}$.
\end{definition}

Definition~\ref{input-list-def1} gives the type 1 model, which is a
generalization of type 0 model. It is suitable to apply our
algorithm for high dimensional problems that may not give uniform
random sampling or exact set size. For example, it is not trivial to
count the number of lattice points or generate a random lattice
point in a $d$-dimensional ball with its center not at a lattice
point.

\begin{definition}\label{input-list-def1} The {\it type 1 model} of randomized computation for our algorithm is defined
below: Let real parameters $\alpha_L,\alpha_R,\delta_L, \delta_R$ be
in $[0,1)$. An input $L$ is a list of sets $A_1,A_2,\cdots,
A_{\setCount}$ that provide an approximate cardinality $s_i$ of
$A_i$ with $(1-\delta_L)|A_i|\le s_i\le (1+\delta_R)|A_i|$ for
$i=1,2,\cdots, \setCount$,
the largest approximate cardinality of input sets $s=\max\{s_i:1\le
i\le \setCount\}$, and
support the following operations:
\begin{enumerate}[1.]
\item
Function \randomElm$(A_i)$ is a $(\alpha_L,\alpha_R)$-biased random
generator for $A_i$ for $i=1,2,\cdots, \setCount$.
\item
Function Query$(x, A_i$) returns $1$ if $x\in A_i$, and $0$
otherwise.
\end{enumerate}
\end{definition}


The main problem, which is called maximum coverage, is that given a
list of sets $A_1,\cdots, A_{\setCount}$ and an integer $k$, find
$k$ sets from $A_1, A_2, \cdots, A_{\setCount}$ to maximize the size
of the union of the selected sets in the computational model defined
in Definition~\ref{input-list-def} or
Definition~\ref{input-list-def1}. For real number $a\in [0,1]$, an
approximation algorithm is a $(1-a)$-approximation for the maximum
coverage problem that has input of integer parameter $k$ and a list
of sets $A_1,\cdots, A_{\setCount}$ if it outputs a sublist of sets
$A_{i_1}, A_{i_2},\cdots, A_{i_k}$ such that $|A_{i_1}\cup
A_{i_2}\cup \cdots\cup A_{i_k}|\ge (1-a) |A_{j_1}\cup A_{j_2}\cup
\cdots\cup A_{j_k}|$, where $A_{j_1}, A_{j_2}, \cdots, A_{j_k}$ is a
solution with maximum size of union.

We use the triple $(T(.), R(.), Q(.))$ to characterize the
computational complexity, where
\begin{itemize}
\item
$T(.)$ is a function for the number of steps that each access to
\randomElm$(.)$ or Query(.) is counted one step,
\item
$R(.)$ is a function to count the number of random samples from
$A_i$ for $i=1, 2,\cdots, \setCount$. It is measured by the total
number of times to access those functions \randomElm$(A_i)$ for all
input sets $A_i$, and
\item
$Q(.)$ is a function to count the number of queries to $A_i$ for
$i=1, \cdots, A_{\setCount}$. It is measured by the total number of
times to access those functions Query$(x, A_i$) for all input sets
$A_i$.
\end{itemize}
The  parameters $\epsilon, \gamma,k,\setSize, \setCount$  can be
used to determine the three complexity functions,  where
$\setSize=\max(|A_1|,\cdots,| A_{\setCount}|)$ (the largest
cardinality of input sets), $\epsilon$ controls the accuracy of
approximation, and $\gamma$ controls the failure probability of a
randomized algorithm. Their types could be written as $T(\epsilon,
\gamma,k,\setCount), R(\epsilon, \gamma,k,\setCount)$, and
$Q(\epsilon, \gamma,k,\setCount)$. All of the complexity results of
this paper at both model 0 and model 1 are independent of parameter
$\setSize$ .

\begin{definition}\label{app-list-def}
For a list $L$ of sets $A_1,A_2,\cdots, A_{\setCount}$ and real
$\alpha_L,\alpha_R,\delta_L,\delta_R\in [0,1)$, it is called
$((\alpha_L,\alpha_R),(\delta_L,\delta_R))$-list if each set $A_i$
is associated with a number $s_i$ with $(1-\delta_L)|A_i|\le s_i\le
(1+\delta_R)|A_i|$ for $i=1,2,\cdots, \setCount$, and the set $A_i$
has a $(\alpha_L,\alpha_R)$-biased random generator
\randomElm($A_i$).
\end{definition}

\section{Outline of Our Methods}\label{overview-sec}

For two sets $A$ and $B$, we develop a randomized method to
approximate the cardinality of the difference $B-A$.   We
approximate the size of $B-A$ by sampling a small number of elements
from $B$ and calculating the ratio of the elements in $B-A$ by
querying the set $A$. The approximate $|A\cup B|$ is the sum of an
approximation of  $|A|$ and an approximation of $|B-A|$.

A greedy approach will be based on the approximate difference
between a new set and the union of sets already selected. Assume
that $A_1,A_2,\cdots, A_{\setCount}$ is the list of sets for the
maximum coverage problem. After $A_{i_1},\cdots, A_{i_t}$ have been
selected, the greedy approach needs to check the size
$|A_j-(A_{i_1}\cup A_{i_2}\cup \cdots \cup A_{i_t})|$ before
selecting the next set. Our method to estimate $|A_j-(A_{i_1}\cup
A_{i_2}\cup \cdots \cup A_{i_t})|$ is based on randomization in
order to make the time independent of the sizes of input sets. Some
random samples are selected from set $A_j$.

The classical greedy approximation algorithm provides $1-(1-{1\over
k})^k$ ratio for the maximum coverage problem. The randomized greedy
approach gives $1-(1-{1\over k})^k-\xi$ ratio, where $\xi$ depends
on the accuracy of estimation to $|A_j-(A_{i_1}\cup A_{i_2}\cup
\cdots \cup A_{i_t})|$. As  $(1-{1\over k})^k$ is increasing and
${1\over e}=(1-{1\over k})^k+\Omega({1\over k})$, we can let
$(1-{1\over k})^k+\xi\le {1\over e}$ by using sufficient number of
random samples for the estimation of set difference when selecting a
new set. Thus, we control the accuracy of the approximate
cardinality of the set difference so that it is enough to achieve
the approximation ratio $1-{1\over e}$ for the maximum coverage
problem.

During the accuracy analysis, Hoeffiding Inequality
\cite{Hoeffding63} plays an important role. It shows how the number
of samples determines the accuracy of approximation.



\begin{theorem}[\cite{Hoeffding63}]\label{hoeffiding-theorem}
Let $X_1,\ldots , X_{\sampleCount}$ be $\sampleCount$ independent
random $0$-$1$ variables and $X=\sum_{i=1}^{\sampleCount} X_i$.

 i. If $X_i$ takes $1$ with probability at most $p$ for $i=1,\ldots ,
\sampleCount$, then for any $\epsilon>0$,
$\Pr(X>p\sampleCount+\epsilon \sampleCount)<e^{-{1\over
2}\sampleCount\epsilon^2}$.

ii. If  $X_i$ takes $1$ with probability at least $p$ for
$i=1,\ldots , \sampleCount$, then for any $\epsilon>0$,
$\Pr(X<p\sampleCount-\epsilon \sampleCount)<e^{-{1\over
2}\sampleCount\epsilon^2}$.
\end{theorem}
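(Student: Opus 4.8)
The plan is to prove both parts by the exponential-moment (Chernoff) method: deduce part (ii) from part (i) by complementation, and absorb the ``probability at most $p$'' hypothesis into a statement about the true mean of $X$ before doing any real work.

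For part (i), write $p_i = \Pr(X_i = 1)$, so $p_i \le p$ and $E[X] = \sum_{i=1}^{\sampleCount} p_i \le p\sampleCount$. Hence the event $\{X > p\sampleCount + \epsilon\sampleCount\}$ is contained in $\{X - E[X] > \epsilon\sampleCount\}$, and it suffices to bound the probability of the latter. For any $t > 0$, Markov's inequality applied to the nonnegative random variable $e^{t(X - E[X])}$, together with independence of the $X_i$, gives
\[
   \Pr\!\left(X - E[X] > \epsilon\sampleCount\right)
   \;\le\; e^{-t\epsilon\sampleCount}\, E\!\left[e^{t(X - E[X])}\right]
   \;=\; e^{-t\epsilon\sampleCount}\prod_{i=1}^{\sampleCount} E\!\left[e^{t(X_i - p_i)}\right].
\]
The crucial estimate is Hoeffding's lemma: since $X_i - p_i$ has mean $0$ and takes values in the length-$1$ interval $[-p_i,\,1-p_i]$, one has $E[e^{t(X_i - p_i)}] \le e^{t^2/8}$. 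Substituting this turns the right-hand side into $e^{-t\epsilon\sampleCount + \sampleCount t^2/8}$, and minimizing the exponent over $t > 0$ at $t = 4\epsilon$ yields $\Pr(X - E[X] > \epsilon\sampleCount) \le e^{-2\sampleCount\epsilon^2}$, which is strictly smaller than the claimed $e^{-\sampleCount\epsilon^2/2}$.

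To prove Hoeffding's lemma I would use convexity of $y \mapsto e^{ty}$ on $[a,b]$: bounding $e^{ty}$ by the chord through the two endpoints, taking expectations, and using that the mean is $0$, the claim (with $a = -p_i$, $b = 1-p_i$, $q = p_i$, $h = t$) reduces to $\phi(h) \le h^2/8$ for $\phi(h) = -qh + \ln(1 - q + qe^h)$; this in turn follows from $\phi(0) = \phi'(0) = 0$, the bound $\phi''(h) = q(1-q)e^h/(1 - q + qe^h)^2 \le 1/4$ (an AM--GM estimate on the denominator), and Taylor's theorem with Lagrange remainder. For part (ii), I would apply part (i) to the independent $0$-$1$ variables $Y_i = 1 - X_i$: since now $p_i \ge p$, we have $\Pr(Y_i = 1) = 1 - p_i \le 1 - p$, and because $\sum_{i=1}^{\sampleCount} Y_i = \sampleCount - X$ the event $\{X < p\sampleCount - \epsilon\sampleCount\}$ is identical to $\{\sum_i Y_i > (1-p)\sampleCount + \epsilon\sampleCount\}$, so part (i) with $p$ replaced by $1-p$ closes the argument.

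The only ingredient here that is not pure bookkeeping is Hoeffding's lemma, and within it the one-variable inequality $\phi''(h) \le 1/4$; this is the step I expect to require the most care. Everything else — the reduction of the ``at most $p$'' hypothesis to the actual mean $E[X]$, the factorization over independent coordinates, the optimal choice $t = 4\epsilon$, and the complementation passing from (i) to (ii) — is routine, and the whole argument is in essence the one in \cite{Hoeffding63}.
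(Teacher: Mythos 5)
The paper does not prove this statement at all---it is imported verbatim from Hoeffding's 1963 paper as a cited black box---so there is no in-paper argument to compare against. Your proof is the standard and correct one (exponential-moment bound, Hoeffding's lemma via the chord bound and $\phi''\le 1/4$, complementation for part (ii)), and it in fact establishes the stronger exponent $e^{-2\sampleCount\epsilon^2}$, which strictly implies the $e^{-\sampleCount\epsilon^2/2}$ claimed in the statement.
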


We define the function $\mu(x)$ in order to simply the probability
mentioned in
Theorem~\ref{hoeffiding-theorem}
\begin{eqnarray}
\mu(x)=e^{-{1\over 2}x^2}\label{function-mu(x)-def}
\end{eqnarray}

 Chernoff Bound
(see~\cite{MotwaniRaghavan00}) is also used in the maximum coverage
approximation when our main result is applied in some concrete
model. It implies a similar result as
Theorem~\ref{hoeffiding-theorem} (for example,
see~\cite{LiMaWang99}).

\begin{theorem}\label{chernoff3-theorem}
Let $X_1,\ldots , X_{\sampleCount}$ be $\sampleCount$ independent
random $0$-$1$ variables, where $X_i$ takes $1$ with probability at
least $p$ for $i=1,\ldots , \sampleCount$. Let
$X=\sum_{i=1}^{\sampleCount} X_i$, and $\mu=E[X]$. Then for any
$\delta>0$,
 $\Pr(X<(1-\delta)p\sampleCount)<e^{-{1\over 2}\delta^2 p\sampleCount}$.
\end{theorem}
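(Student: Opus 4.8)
The statement is the multiplicative lower-tail (Chernoff) bound for a sum of independent indicators, so the plan is to run the standard exponential-moment argument and finish with one elementary real-analysis inequality; the quoted external reference notwithstanding, here is how I would carry it out.

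First I would dispose of the trivial cases and set up a reduction. If $\delta\ge 1$ then $(1-\delta)p\sampleCount\le 0\le X$, so $\Pr(X<(1-\delta)p\sampleCount)=0$ and there is nothing to prove; likewise the case $p=0$ is vacuous. So assume $0<\delta<1$ and $p>0$. Next, observe $\mu=E[X]=\sum_{i=1}^{\sampleCount}\Pr(X_i=1)\ge p\sampleCount$. Since $0<1-\delta$, this gives $(1-\delta)p\sampleCount\le(1-\delta)\mu$, hence $\{X<(1-\delta)p\sampleCount\}\subseteq\{X<(1-\delta)\mu\}$, and also $e^{-\mu\delta^2/2}\le e^{-p\sampleCount\delta^2/2}$. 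Therefore it suffices to prove the cleaner bound $\Pr(X<(1-\delta)\mu)<e^{-\mu\delta^2/2}$.

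For that bound I would fix $t>0$ and apply Markov's inequality to the nonnegative variable $e^{-tX}$, using that $x\mapsto e^{-tx}$ is decreasing:
\[
\Pr(X<(1-\delta)\mu)=\Pr(e^{-tX}>e^{-t(1-\delta)\mu})\le e^{t(1-\delta)\mu}\,E[e^{-tX}].
\]
By independence $E[e^{-tX}]=\prod_{i=1}^{\sampleCount}E[e^{-tX_i}]$, and with $p_i=\Pr(X_i=1)$ the inequality $1+y\le e^y$ gives $E[e^{-tX_i}]=1+p_i(e^{-t}-1)\le e^{p_i(e^{-t}-1)}$, so $E[e^{-tX}]\le e^{\mu(e^{-t}-1)}$ because $\sum_i p_i=\mu$. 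Hence $\Pr(X<(1-\delta)\mu)\le e^{\mu[(e^{-t}-1)+t(1-\delta)]}$, and choosing $t$ to minimize the exponent, namely $e^{-t}=1-\delta$ (so $t=\ln\frac{1}{1-\delta}>0$), yields
\[
\Pr(X<(1-\delta)\mu)\le\left(\frac{e^{-\delta}}{(1-\delta)^{1-\delta}}\right)^{\!\mu}=e^{-\mu[\delta+(1-\delta)\ln(1-\delta)]}.
\]

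The last step — and the only place that needs a little care — is the elementary inequality $\delta+(1-\delta)\ln(1-\delta)\ge\delta^2/2$ for $\delta\in(0,1)$. I would prove it by setting $f(\delta)=\delta+(1-\delta)\ln(1-\delta)-\delta^2/2$ and checking $f(0)=0$, $f'(\delta)=-\ln(1-\delta)-\delta$ with $f'(0)=0$, and $f''(\delta)=\frac{1}{1-\delta}-1=\frac{\delta}{1-\delta}\ge 0$ on $[0,1)$; thus $f'$ is nondecreasing, so $f'\ge 0$ and $f\ge 0$ there. Combining this with the previous display gives $\Pr(X<(1-\delta)\mu)\le e^{-\mu\delta^2/2}$, and the inequality is in fact strict since there is strict slack in $1+y\le e^y$ at $y=p_i(e^{-t}-1)\ne 0$. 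The reduction from the second paragraph then converts this into $\Pr(X<(1-\delta)p\sampleCount)<e^{-\frac12\delta^2 p\sampleCount}$, as claimed. The same exponential-moment template, with $e^{tX}$ in place of $e^{-tX}$, also reproduces the Hoeffding-type estimates of Theorem~\ref{hoeffiding-theorem}, so those require no separate treatment.
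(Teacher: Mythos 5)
Your proof is correct. The paper itself offers no proof of this statement --- it is quoted as a known Chernoff bound with a pointer to \cite{MotwaniRaghavan00} --- so there is nothing to compare against except the standard textbook argument, which is exactly what you give: reduce to the mean $\mu\ge p\sampleCount$, apply Markov to $e^{-tX}$, bound each factor via $1+y\le e^y$, optimize at $e^{-t}=1-\delta$, and finish with $\delta+(1-\delta)\ln(1-\delta)\ge\delta^2/2$. Your handling of the edge cases and of strictness (which needs $p_i\ge p>0$ so that the slack in $1+y\le e^y$ is genuine) is also correct.
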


\begin{theorem}\label{ourchernoff2-theorem}
Let $X_1,\ldots , X_{\sampleCount}$ be $\sampleCount$ independent
random $0$-$1$ variables, where $X_i$ takes $1$ with probability at
most $p$ for $i=1,\ldots , \sampleCount$. Let
$X=\sum_{i=1}^{\sampleCount} X_i$. Then for any $\delta>0$,
$\Pr(X>(1+\delta)p\sampleCount)<\left[{e^{\delta}\over
(1+\delta)^{(1+\delta)}}\right]^{p\sampleCount}$.
\end{theorem}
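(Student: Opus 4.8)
The plan is to prove this by the standard exponential-moment (Chernoff) method, i.e.\ by applying Markov's inequality to $e^{tX}$ for a well-chosen $t>0$. Fix $t>0$. Since $x\mapsto e^{tx}$ is strictly increasing,
\[
\Pr\bigl(X>(1+\delta)p\sampleCount\bigr)\le\Pr\bigl(e^{tX}\ge e^{t(1+\delta)p\sampleCount}\bigr)\le\frac{E[e^{tX}]}{e^{t(1+\delta)p\sampleCount}},
\]
the last step being Markov's inequality. Thus everything reduces to an upper bound on the moment generating function $E[e^{tX}]$.

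First I would factor $E[e^{tX}]=\prod_{i=1}^{\sampleCount}E[e^{tX_i}]$ using independence. For each $i$, writing $p_i=\Pr(X_i=1)\le p$, we have $E[e^{tX_i}]=1-p_i+p_ie^t=1+p_i(e^t-1)$; since $t>0$ makes $e^t-1>0$, this is nondecreasing in $p_i$, so $E[e^{tX_i}]\le 1+p(e^t-1)\le e^{p(e^t-1)}$, using $1+x\le e^x$. Multiplying over $i$ gives $E[e^{tX}]\le e^{p\sampleCount(e^t-1)}$, hence
\[
\Pr\bigl(X>(1+\delta)p\sampleCount\bigr)\le\Bigl(e^{\,e^t-1-t(1+\delta)}\Bigr)^{p\sampleCount}.
\]
The final step is to optimize: the exponent $e^t-1-t(1+\delta)$ is minimized at $e^t=1+\delta$, i.e.\ $t=\ln(1+\delta)$, which is a legitimate choice ($>0$) because $\delta>0$. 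Substituting, $e^t-1-t(1+\delta)=\delta-(1+\delta)\ln(1+\delta)$, so the bound becomes $\bigl(e^{\delta-(1+\delta)\ln(1+\delta)}\bigr)^{p\sampleCount}=\bigl[e^{\delta}/(1+\delta)^{(1+\delta)}\bigr]^{p\sampleCount}$, as required.

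For the \emph{strict} inequality: if all $p_i=0$ then $X\equiv 0$, the left side is $0$, and the right side is a positive number strictly less than $1$, so the claim holds; otherwise some $p_{i_0}>0$, and for that factor $1+p_{i_0}(e^t-1)<e^{p_{i_0}(e^t-1)}$ strictly (since $1+x<e^x$ for $x\ne 0$), which makes the bound $E[e^{tX}]\le e^{p\sampleCount(e^t-1)}$ strict, and strictness then propagates through the Markov estimate. There is no serious obstacle here --- this is the textbook multiplicative Chernoff bound for the case $\Pr(X_i=1)\le p$ --- the only points requiring a little care are that the $p_i$ need not be equal (handled by the monotonicity of $p_i\mapsto 1+p_i(e^t-1)$) and the one-line calculus minimization of $t$.
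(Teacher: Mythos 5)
Your proof is correct. The paper itself gives no proof of this theorem---it is quoted as a standard Chernoff-type bound (with a pointer to Motwani--Raghavan), so there is no argument of the paper's to compare against; your exponential-moment derivation, with Markov applied to $e^{tX}$, the monotone bound $1+p_i(e^t-1)\le 1+p(e^t-1)\le e^{p(e^t-1)}$ handling unequal $p_i$, and the optimization $t=\ln(1+\delta)$, is exactly the canonical argument, and your case analysis for why the inequality is strict is a sound way to justify the strict ``$<$'' in the statement.
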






A well known fact in probability theory is the inequality
\begin{eqnarray}\Pr(E_1\cup E_2 \ldots \cup E_{\eventCount})\le
\Pr(E_1)+\Pr(E_2)+\ldots+\Pr(E_{\eventCount}),\label{prob-Add-Ineqn}
\end{eqnarray}
 where $E_1,E_2,\ldots, E_{\eventCount}$ are
$\eventCount$ events that may not be independent. In the analysis of
our randomized algorithm, there are multiple events such that the
failure from any of them may fail the entire algorithm. We often
characterize the failure probability of each of those events, and
use the above inequality to show that the whole algorithm has a
small chance to fail, after showing that each of them has a small
chance to fail.

Our algorithm performance will depend on the initial accuracy of
approximation to each set size, and how biased the random sample
from each input set. This consideration is based on the applications
to high dimensional geometry problems which may be hard to count the
exact number of elements in a set, and  is also hard to provide
perfect uniform random source. We plan to release more applications
to high dimensional geometry problems that need approximate counting
and biased random sampling.

Overall, our method is an approximate randomized greedy approach for
the maximum coverage problem. The numbers of random samples is
controlled so that it has enough accuracy to derive the classical
approximation ratio $1-{1\over e}$. The main results are stated at
Theorem~\ref{appr2-cover-theorem} (type 1 model) and
Corollary~\ref{appr2-cover-corol} (type 0 model).


\begin{definition}
Let the maximum  coverage problem have integer parameter $k$, and a
list $L$ of sets $A_1,A_2,\cdots, A_{\setCount}$ as input. We always
assume $k\le \setCount$ . Let $C^*(L,k)=|A_{t_1}\cup
A_{t_2}\cup\cdots\cup A_{t_k}|$ be the maximum union size of a
solution $A_{t_1},\cdots, A_{t_k}$ for the maximum coverage.
\end{definition}

\begin{theorem}\label{appr2-cover-theorem}Let $\rho$ be a constant in
$(0,1)$. For  parameters $\epsilon,\gamma\in (0,1)$ and
$\alpha_L,\alpha_R,\delta_L,\delta_R\in [0,1-\rho]$, there is an
algorithm to give a $(1-{1\over e^{\beta}})$ approximation for the
maximum cover problem, such that given a
$((\alpha_L,\alpha_R),(\delta_L,\delta_R))$-list $L$ of finite sets
$A_1,\cdots, A_{\setCount}$ and an integer $k$,  with probability at
least $1-\gamma$, it returns an integer $z$ and a subset $H\subseteq
\{1,2,\cdots, \setCount\}$ that satisfy
\begin{enumerate}[1.]
\item
$|\cup_{j\in H}A_j|\ge (1-{1\over e^{\beta}})C^*(L,k)$ and $|H|=k$,
\item
$((1-\alpha_L)(1-\delta_L)-\epsilon)|\cup_{j\in H}A_j|\le z\le
((1+\alpha_R)(1+\delta_R)+\epsilon)|\cup_{j\in H}A_j|$, and
\item
Its complexity is  $(T(\epsilon,\gamma,k,\setCount),
R(\epsilon,\gamma ,k,\setCount  ), Q(\epsilon,\gamma ,k,\setCount))$
with
\begin{eqnarray*}
T(\epsilon,\gamma,k,\setCount)&=&\bigO({k^5\over
\epsilon^2}(k\log({3 \setCount\over  k})+\log
{1\over \gamma})\setCount),\\
R(\epsilon,\gamma,k,\setCount)&=&\bigO({k^4\over \epsilon^2}(k\log
({3 \setCount\over  k})+\log
{1\over \gamma})\setCount),\ \ {\rm and} \\
Q(\epsilon,\gamma,k,\setCount)&=&\bigO({k^5\over
\epsilon^2}(k\log({3 \setCount\over  k})+\log {1\over
\gamma})\setCount),
\end{eqnarray*}
where $\beta={(1-\alpha_L)(1-\delta_L)\over
(1+\alpha_R)(1+\delta_R)}$.
\end{enumerate}
\end{theorem}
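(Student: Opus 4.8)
The plan is to build an approximate randomized version of the classical greedy algorithm on top of a single sampling primitive. Given a ``current union'' $U=A_{i_1}\cup\cdots\cup A_{i_t}$, kept only as its index list, and a candidate set $B=A_j$, I estimate the marginal coverage $|B-U|$ as follows: draw a sample set $S$ of $s$ elements of $B$ from its $(\alpha_L,\alpha_R)$-biased generator, use Query to test each sampled element against $A_{i_1},\dots,A_{i_t}$ (at most $t\le k$ queries per sample), put $\hat p=|\{x\in S:x\notin U\}|/s$, and output $\hat p\cdot s_j$, where $s_j$ is the supplied approximate size of $B$. By Theorem~\ref{hoeffiding-theorem}, with $s=\Theta((\epsilon')^{-2}\log(1/\gamma'))$ samples one gets $|\hat p-p'|\le\epsilon'$ except with probability $\gamma'$, where $p'$ is the probability that a single $B$-sample misses $U$; this $p'$ lies in $\left[(1-\alpha_L)\frac{|B-U|}{|B|},\ (1+\alpha_R)\frac{|B-U|}{|B|}\right]$. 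Combining the generator bias in the range of $p'$ with the $(\delta_L,\delta_R)$-bias on $s_j$ shows the output lies between $(1-\alpha_L)(1-\delta_L)|B-U|-O(\epsilon'|B|)$ and $(1+\alpha_R)(1+\delta_R)|B-U|+O(\epsilon'|B|)$; this is exactly where the exponent $\beta=\frac{(1-\alpha_L)(1-\delta_L)}{(1+\alpha_R)(1+\delta_R)}$ originates.

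Next I run $k$ greedy rounds. To keep $R$ down to $O\big(\frac{k^4}{\epsilon^2}(\cdots)\setCount\big)$ I draw the sample set $S_j$ from each $A_j$ \emph{once}, at the start, reuse it in every round, and maintain for each sampled element one bit recording whether it is already covered by $U$, refreshing these bits with a single Query per still-uncovered sample against the set most recently appended to $U$. In round $t$ I recompute, from $S_j$ and the bits, the estimate of $|A_j-U_t|$ for every not-yet-chosen $A_j$, append the maximizer's index to $H$ and to $U$, and record its estimated marginal as $\hat m_t$; at the end I return $H$ and $z:=\sum_{t=1}^{k}\hat m_t$, a telescoping estimate of $|\cup_{j\in H}A_j|$. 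Then $R$ counts only the $\le\setCount$ initial sample sets while $Q$ and $T$ carry one extra factor $k$ for the $k$ rounds of bit-refreshing; once $s$ is fixed below this gives item~3. Item~2 is then immediate from the per-estimate sandwich above: taking $\epsilon'$ at most a constant times $\epsilon/k$, each $\hat m_t$ is squeezed between the $(1-\alpha_L)(1-\delta_L)-\epsilon$ and $(1+\alpha_R)(1+\delta_R)+\epsilon$ multiples of the true round-$t$ marginal, and summing keeps the same two-sided sandwich for $z$ against $|\cup_{j\in H}A_j|$ (using that $|\cup_{j\in H}A_j|$ is $\Omega(\setSize)$).

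For item~1, I run the usual greedy potential argument. At the start of round $t$ some member of an optimal solution has true marginal at least $\frac1k\big(C^*(L,k)-|U_t|\big)$; since the primitive's lower estimate carries the numerator of $\beta$ and its upper estimate the denominator, the set actually chosen has true marginal at least $\frac{\beta}{k}\big(C^*(L,k)-|U_t|\big)-O(\epsilon'\setSize)$. Because $\setSize=\max_i|A_i|\le C^*(L,k)$ (a single set is a feasible solution), the additive error is $O(\epsilon')\cdot C^*(L,k)$; and because $\alpha_L,\alpha_R,\delta_L,\delta_R\le 1-\rho$ we have $\rho^2/4\le\beta\le 1$, so $\beta$ is bounded away from $0$. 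Unrolling $C^*(L,k)-|U_{t+1}|\le\big(1-\frac{\beta}{k}\big)\big(C^*(L,k)-|U_t|\big)+O(\epsilon')\,C^*(L,k)$ from $U_0=\emptyset$ yields $|U_k|\ge\big(1-(1-\frac{\beta}{k})^k-O(\epsilon' k/\beta)\big)C^*(L,k)$. Since $(1-\frac{\beta}{k})^k\le e^{-\beta}-\Omega(\beta^2 e^{-\beta}/k)$, choosing $\epsilon'=\Theta\big(\min(1/k^2,\ \epsilon/k)\big)$ --- whence $s=\Theta\big((\epsilon')^{-2}(\cdots)\big)=O\big(\frac{k^4}{\epsilon^2}(\cdots)\big)$ --- drives the $O(\epsilon' k/\beta)$ slack strictly below the $\Omega(1/k)$ gap, so $|U_k|\ge\big(1-{1\over e^\beta}\big)C^*(L,k)$, which is item~1.

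Lastly, the failure probability. Every estimate made during the run must be simultaneously accurate, and since the sample sets $S_j$ are fixed but get evaluated against adaptively chosen unions $U_t$ (each a union of at most $k$ input sets), I take a union bound, via~(\ref{prob-Add-Ineqn}), over all $\le\setCount\cdot{\setCount\choose k}$ pairs $(A_j,U)$ --- pushing each event below $\gamma/(\setCount{\setCount\choose k})$ by Theorem~\ref{hoeffiding-theorem} --- which turns the confidence term inside $s$ into $O\big(k\log(3\setCount/k)+\log\frac1\gamma\big)$, the source of that factor in $R$, $Q$, $T$. I expect the main obstacle to be this third step: propagating the sampling bias $(\alpha_L,\alpha_R)$ and the cardinality bias $(\delta_L,\delta_R)$ through an \emph{adaptive} greedy recursion and checking, with explicit constants and uniformly over all admissible $\alpha_L,\alpha_R,\delta_L,\delta_R$, that the accumulated additive error really does fit inside the $\Theta(1/k)$ window between $(1-\beta/k)^k$ and $e^{-\beta}$ --- which is precisely what forces the $\poly(k)$ sample complexity.
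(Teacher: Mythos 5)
Your proposal is correct and follows essentially the same route as the paper: a one-shot sampling of each $A_i$ reused across all $k$ greedy rounds (the paper's Virtual Function Implementation 2), a union bound over all $\setCount\cdot\sum_{i\le k}{\setCount\choose i}$ pairs of a set and a union of at most $k$ sets to handle the adaptivity (Lemma~\ref{improve-ramdom-sample-lemma}), the approximate greedy potential recursion with the $\beta$ asymmetry between lower and upper estimate factors (Lemmas~\ref{cover-lemma} and~\ref{appr1-cover-lemma}), and the $\Theta(1/k)$ gap between $(1-\beta/k)^k$ and $e^{-\beta}$ absorbed by taking the per-round error $\epsilon'=\Theta(\epsilon/k^2)$ (Lemma~\ref{help-app-lemma}). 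The only differences are cosmetic (your gap constant $\Omega(\beta^2 e^{-\beta}/k)$ versus the paper's $\eta\beta/(2e^{\beta}k)$, and your $\Omega(\setSize)$ justification of item~2 where the paper simply uses $A_{t_j}\subseteq\cup_{j\in H}A_j$).
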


Corollary~\ref{appr2-cover-corol} gives the importance case that we
have exact sizes for all input sets, and uniform random sampling for
each of them. Such an input is called $((0,0),(0,0))$-list according
to Definition~\ref{app-list-def}.

\begin{corollary}\label{appr2-cover-corol} For  parameters $\epsilon,$ and $\gamma$ in $(0,1)$, there is a randomized algorithm
to give a $(1-{1\over e})$ approximation for the maximum  cover
problem, such that given a $((0,0),(0,0))$-list $L$ of finite sets
$A_1,\cdots, A_{\setCount}$ and an integer $k$, with probability at
least $1-\gamma$, it returns an integer $z$ and a subset $H\subseteq
\{1,2,\cdots, \setCount\}$ that satisfy
\begin{enumerate}[1.]
\item
$|\cup_{j\in H}A_j|\ge (1-{1\over e})C^*(L,k)$ and $|H|=k$,
\item
$(1-\epsilon)|\cup_{j\in H}A_j|\le z\le (1+\epsilon)|\cup_{j\in
H}A_j|$, and
\item
Its complexity is  $(T(\epsilon,\gamma,k,\setCount),
R(\epsilon,\gamma,k,\setCount), Q(\epsilon,\gamma,k,\setCount))$
with
\begin{eqnarray*}
T(\epsilon,\gamma,k,\setCount)&=&\bigO({k^5\over \epsilon^2}(k\log
({3 \setCount\over  k})+\log
{1\over \gamma})\setCount),\\
R(\epsilon,\gamma,k,\setCount)&=&\bigO({k^4\over \epsilon^2}(k\log
({3 \setCount\over  k})+\log
{1\over \gamma})\setCount),\ \ {\rm and} \\
Q(\epsilon,\gamma,k,\setCount)&=&\bigO({k^5\over \epsilon^2}(k\log
({3 \setCount\over  k})+\log {1\over \gamma})\setCount).
\end{eqnarray*}
\end{enumerate}
\end{corollary}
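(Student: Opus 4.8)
The plan is to derive Corollary~\ref{appr2-cover-corol} as the special case of Theorem~\ref{appr2-cover-theorem} obtained by setting $\alpha_L=\alpha_R=\delta_L=\delta_R=0$, since a $((0,0),(0,0))$-list is precisely a list in which every set size $s_i=|A_i|$ is exact and every \randomElm$(A_i)$ is an unbiased uniform generator. First I would check that the hypotheses of the theorem are met: we may pick any constant $\rho\in(0,1)$, say $\rho=\tfrac12$, and then $0\in[0,1-\rho]$, so all four bias/accuracy parameters lie in the required range. With these substitutions the exponent becomes
\[
\beta=\frac{(1-\alpha_L)(1-\delta_L)}{(1+\alpha_R)(1+\delta_R)}=\frac{1\cdot 1}{1\cdot 1}=1,
\]
so the approximation factor $1-\tfrac1{e^\beta}$ collapses to $1-\tfrac1e$, matching the statement.

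Next I would instantiate the three conclusions. Conclusion~1 of the theorem gives $|\cup_{j\in H}A_j|\ge(1-\tfrac1{e^\beta})C^*(L,k)=(1-\tfrac1e)C^*(L,k)$ and $|H|=k$, which is exactly conclusion~1 of the corollary. For conclusion~2, the theorem's sandwich
\[
((1-\alpha_L)(1-\delta_L)-\epsilon)\,|\cup_{j\in H}A_j|\le z\le ((1+\alpha_R)(1+\delta_R)+\epsilon)\,|\cup_{j\in H}A_j|
\]
becomes $(1-\epsilon)|\cup_{j\in H}A_j|\le z\le(1+\epsilon)|\cup_{j\in H}A_j|$ once the zero parameters are plugged in, which is conclusion~2 of the corollary. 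Conclusion~3 (the complexity triple $(T,R,Q)$) is literally unchanged: none of $\alpha_L,\alpha_R,\delta_L,\delta_R$ appears in the stated bounds for $T(\epsilon,\gamma,k,\setCount)$, $R(\epsilon,\gamma,k,\setCount)$, $Q(\epsilon,\gamma,k,\setCount)$, so the same $\bigO$ expressions carry over verbatim. The failure probability bound $1-\gamma$ is also inherited directly.

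There is essentially no obstacle here: the corollary is a pure specialization, and the only thing to verify carefully is that the parameter ranges in the theorem genuinely allow the value $0$ for all four bias parameters (they do, via the freedom to choose $\rho$) and that the arithmetic $\beta=1$ is correct. The one point worth a sentence of justification is why a $((0,0),(0,0))$-list in the sense of Definition~\ref{app-list-def} is a legitimate input for the type~1 model of Definition~\ref{input-list-def1}: a $(0,0)$-biased generator satisfies $\tfrac1{|A|}\le\prob(\randomElm(A)=y)\le\tfrac1{|A|}$, i.e.\ it is exactly uniform, and an associated number $s_i$ with $(1-0)|A_i|\le s_i\le(1+0)|A_i|$ is exactly $|A_i|$, so such a list is the type~0 input of Definition~\ref{input-list-def} viewed inside the more general type~1 framework. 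Hence the theorem applies and the corollary follows immediately.
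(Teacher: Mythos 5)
Your proposal is correct and is exactly the paper's own argument: the paper proves the corollary in one line by noting that $\alpha_L=\alpha_R=\delta_L=\delta_R=0$ forces $\beta=1$ and then invoking Theorem~\ref{appr2-cover-theorem}. Your additional checks (that $0\in[0,1-\rho]$ for a suitable $\rho$, and that a $((0,0),(0,0))$-list is a legitimate type~1 input) are sound and merely make explicit what the paper leaves implicit.
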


\begin{proof}
Since $\alpha_L=\alpha_R=\delta_L=\delta_R=0$ implies $\beta=1$, it
follows from Theorem~\ref{appr2-cover-theorem}.
\end{proof}

\section{Randomized Algorithm for Maximum Coverage}\label{union-sec}

We give a randomized algorithm for approximating the maximum
coverage. It is based on an approximation to the cardinality of set
difference. The algorithms are described at type 1 model, and has
corollaries for type 0 model.

\subsection{Randomized Algorithm for Set Difference
Cardinality}\label{set-diff-sec}

In this section, we develop a method to approximate the cardinality
of $B-A$ based on random sampling. It will be used as a submodule to
approximate the maximum coverage.




\begin{definition}\label{test-function-def}
Let $R=x_1,x_2,\cdots, x_w$ be a list of elements from set $B$, and
let $L$ be a list of sets $A_1,A_2,\cdots, A_u$. Define $\test(L,
R)=|\{j: 1\le j\le w, {\rm and}\ x_j\not\in (A_1\cup
A_2\cup\cdots\cup A_u)\}|$.
\end{definition}

 The Algorithm ApproximateDifference(.) gives an approximation $s$ for the
size of $B-A$. It is very time consuming to approximate $|B-A|$ when
$|B-A|$ is much less than $|B|$. The algorithm
ApproximateDifference(.) returns an approximate value $s$ for
$|B-A|$ with a range in  $[(1-\delta)|B-A|-\epsilon |B|,
(1+\delta)|B-A|+\epsilon |B|]$, and will not lose much accuracy when
it is applied to approximate the maximum coverage by controlling the
two parameters $\delta$ and $\epsilon$.


\newcounter{RandomTest}
\setcounter{RandomTest}{0}

\addtocounter{problem-counter}{1}
\addtocounter{RandomTest}{\arabic{problem-counter}}


\vskip 10pt

{\bf Algorithm~{\arabic{problem-counter} }:  RandomTest($L,B, w$)}

Input: $L$ is a list of sets $A_1, A_2,\cdots, A_u$, $B$ is another
set with a random generator \randomElm$(B)$, and $w$ is an integer
to control the number of random samples from $B$.

\begin{enumerate}[1.]

\item
\qquad For $i=1$ to $w$
 let $x_i=\randomElm(B)$;

\item
\qquad For $i=1$ to $w$

\item

\qquad\qquad
 Let $y_i=0$ if $(x_i \in A_1\cup A_2\cup \cdots\cup A_u)$, and
$1$ otherwise;

\item
\qquad Return $t=y_1+\cdots+y_w$;

\end{enumerate}

{\bf End of Algorithm}

 \vskip 10pt
\addtocounter{problem-counter}{1}

 {\bf Algorithm~{\arabic{problem-counter} }: ApproximateDifference($L,B$,$s_2, 
 \epsilon,\gamma$)}

Input: $L$ is a list of sets $A_1, A_2,\cdots, A_u$, $B$ is another
set with a random generator \randomElm$(B)$,
integer $s_2$ is an approximation for $|B|$ with
$(1-\delta_L)|B|\le s_2\le (1+\delta_R)|B|$, and $\epsilon$ and
$\gamma$ are real parameters in $(0,1)$, where $\delta\in [0,1]$.

{\bf Steps:}

\begin{enumerate}[1.]

\item\label{line-w-set-line}
\qquad Let $w$ be an integer with  $\mu({\epsilon\over 3})^w\le
{\gamma\over 4}$, where $\mu(x)$ is defined in equation
(\ref{function-mu(x)-def}).


\item
\qquad Let $t=$RandomTest($L,B, w$);

\item\label{alg-approx-diff-return-line}
\qquad
Return $s={t\over w}\cdot s_2$

\end{enumerate}

{\bf End of Algorithm}

\vskip 10pt

Lemma~\ref{difference-lemma} shows how Algorithm
ApproximateDifference(.) returns an approximation $s$  for $|B-A|$
with a small failure probability $\gamma$, and its complexity
depends on the accuracy $\epsilon$ of approximation and probability
$\gamma$. Its accuracy is controlled for the application to the
approximation algorithms for maximum coverage problem.

\begin{lemma}\label{difference-lemma}Assume that real number $\epsilon \in[0,1]$,  $B$ is a set with
$(\alpha_L,\alpha_R)$-biased random generator \randomElm$(B)$ and an
approximation $s_2$ for $|B|$ with $(1-\delta_L)|B|\le s_2\le
(1+\delta_R)|B|$, and $L$ is a list of sets $A_1, A_2,\cdots, A_u$.
Then
\begin{enumerate}[1.]
\item
If $R=x_1,x_2,\cdots, x_w$ be a list of elements generated by
\randomElm$(B)$, and $\mu({\epsilon\over 3})^w\le {\gamma\over 4}$,
then with probability at most $\gamma$, the  value $s={t\over
w}\cdot s_2$ fails to
 satisfy inequality (\ref{set-diff-approx-condition})
\begin{eqnarray}
(1-\alpha_L)(1-\delta_L)|B-A|-\epsilon|B|\le s\le
(1+\alpha_R)(1+\delta_R)|B-A|+\epsilon|B|,\label{set-diff-approx-condition}
\end{eqnarray}
where $A=A_1\cup A_2\cup\cdots\cup A_u$ is the union of sets in the
input list $L$.
\item
 With probability at most $\gamma$,  the returned value $s$ by the
algorithm ApproximateDifference(.) fails to
 satisfy inequality (\ref{set-diff-approx-condition}), and
\item If the implementation of RandomTest(.) in Algorithm~{\arabic{RandomTest}} is used,
then the complexity of ApproximateDifference(.) is
  $(T_D(\epsilon,\gamma,u),
R_D(\epsilon,\gamma,u), Q_D(\epsilon,\gamma,u))$ with
  $T_D(\epsilon,\gamma,u)=\bigO({u\over \epsilon^2}\log {1\over \gamma})$,
  $R_D(\epsilon,\gamma,u)=\bigO({1\over \epsilon^2}\log {1\over
  \gamma})$,
and
   $Q_D(\epsilon,\gamma,u)=\bigO({u\over \epsilon^2}\log {1\over \gamma})$.
\end{enumerate}
\end{lemma}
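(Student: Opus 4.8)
The plan is to prove the three parts in sequence: part~1 carries the analytic content, and parts~2 and~3 follow by bookkeeping. For part~1, write $A=A_1\cup A_2\cup\cdots\cup A_u$ and put $p=|B-A|/|B|$, so that $|B-A|=p|B|$. Inside RandomTest each $y_i$ is the indicator of the event $x_i\in B-A$, and since \randomElm$(B)$ is an $(\alpha_L,\alpha_R)$-biased generator, summing its per-element probabilities over the elements of $B-A$ shows $\Pr(y_i=1)\in[(1-\alpha_L)p,(1+\alpha_R)p]$. The $y_i$ are independent $0$-$1$ variables with $t=y_1+\cdots+y_w$, so I would apply Theorem~\ref{hoeffiding-theorem}(i) with upper success probability $(1+\alpha_R)p$ and deviation parameter $\epsilon/3$, and Theorem~\ref{hoeffiding-theorem}(ii) with lower success probability $(1-\alpha_L)p$ and deviation parameter $\epsilon/3$; each of the two failure events has probability below $\mu(\epsilon/3)^w\le\gamma/4$, so by the union bound~(\ref{prob-Add-Ineqn}), with probability at least $1-\gamma/2\ge 1-\gamma$ one has $(1-\alpha_L)p-\epsilon/3\le t/w\le(1+\alpha_R)p+\epsilon/3$.

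It then remains to pass from $t/w$ to $s=(t/w)s_2$ and absorb the set-size slack. Because $s_2\ge 0$ and $p\,s_2=|B-A|\cdot(s_2/|B|)\in[(1-\delta_L)|B-A|,(1+\delta_R)|B-A|]$, the upper estimate gives $s\le(1+\alpha_R)(1+\delta_R)|B-A|+(\epsilon/3)s_2$, and $(\epsilon/3)s_2<\epsilon|B|$ since $s_2\le(1+\delta_R)|B|<2|B|$; this is the right-hand side of~(\ref{set-diff-approx-condition}). For the left-hand side I would split on the sign of the coefficient $(1-\alpha_L)p-\epsilon/3$: if it is nonnegative, then $s\ge((1-\alpha_L)p-\epsilon/3)s_2\ge((1-\alpha_L)p-\epsilon/3)(1-\delta_L)|B|\ge(1-\alpha_L)(1-\delta_L)|B-A|-\epsilon|B|$; if it is negative, then $(1-\alpha_L)|B-A|<(\epsilon/3)|B|$, so $(1-\alpha_L)(1-\delta_L)|B-A|-\epsilon|B|<0\le s$ automatically. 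Hence~(\ref{set-diff-approx-condition}) holds in all cases, which is part~1.

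Part~2 is then immediate: in ApproximateDifference(.), Step~\ref{line-w-set-line} fixes $w$ with $\mu(\epsilon/3)^w\le\gamma/4$, its second step runs RandomTest$(L,B,w)$ whose output is exactly the $t=y_1+\cdots+y_w$ analyzed above over $w$ draws of \randomElm$(B)$, and Step~\ref{alg-approx-diff-return-line} returns $s=(t/w)s_2$, so part~1 applies verbatim. For part~3, since $\mu(x)=e^{-x^2/2}$ the requirement $\mu(\epsilon/3)^w=e^{-w\epsilon^2/18}\le\gamma/4$ is met once $w\ge(18/\epsilon^2)\ln(4/\gamma)$, so Step~\ref{line-w-set-line} can take $w=\bigO((1/\epsilon^2)\log(1/\gamma))$. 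RandomTest issues $w$ calls to \randomElm$(B)$, giving $R_D=\bigO((1/\epsilon^2)\log(1/\gamma))$; for each of the $w$ samples it tests membership in each of $A_1,\ldots,A_u$ in turn, giving $Q_D=\bigO((u/\epsilon^2)\log(1/\gamma))$ queries and $T_D=\bigO((u/\epsilon^2)\log(1/\gamma))$ total steps, the two length-$w$ loops and the final summation being lower order.

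The one point that needs care is the error combination in part~1: the three independent sources of slack --- the $(\alpha_L,\alpha_R)$ sampling bias, the Hoeffding deviation, and the $(\delta_L,\delta_R)$ error in $s_2$ --- all have to be folded into the single additive term $\epsilon|B|$ on each side of~(\ref{set-diff-approx-condition}); this is precisely what dictates both the choice of deviation $\epsilon/3$ in the two Hoeffding applications and the sign case-split used in the lower bound (the coefficient $(1-\alpha_L)p-\epsilon/3$ may be negative, in which case the naive bound $s\ge(\text{coeff})\cdot s_2$ would be wrong and one instead uses $s\ge 0$). Everything else is routine estimation and step-counting.
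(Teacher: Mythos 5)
Your proposal is correct and follows essentially the same route as the paper: two applications of Hoeffding with deviation $\epsilon/3$ around the biased success probability $p=|B-A|/|B|$, conversion from $t/w$ to $s=(t/w)s_2$ using the $(\delta_L,\delta_R)$ bounds on $s_2$, a union bound giving total failure probability $2\cdot\gamma/4<\gamma$, and the count of $w=\bigO(\epsilon^{-2}\log(1/\gamma))$ samples and $uw$ queries. Your explicit sign case-split on the coefficient $(1-\alpha_L)p-\epsilon/3$ in the lower bound is a small but genuine improvement in rigor --- the paper's chain of inequalities silently assumes this coefficient is nonnegative, and your observation that the bound holds trivially (since $s\ge 0$) when it is negative closes that gap.
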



\ifdefined\RegularVersion

\begin{proof}
Let $A=A_1\cup A_2\cdots\cup A_u$. The $w$ random elements from $B$
are via the $(\alpha_L,\alpha_R)$-biased random generator
\randomElm$(B)$. We get $t$ to be the number of the $w$ items in
$B-A$. Value $s={t\over w}\cdot s_2$ is an approximation for
$|B-A|$. Let $p={|B-A|\over |B|}$, $p_L=(1-\alpha_L)p$, and
$p_R=(1+\alpha_R)p$.
By Theorem~\ref{hoeffiding-theorem}, with probability at most
$P_1=\mu({\epsilon\over 3})^w$,
we have $t>p_Rw+{\epsilon\over 3}\cdot
w=(1+\alpha_R)pw+{\epsilon\over 3}\cdot w$.

If $t\le (1+\alpha_R)pw+{\epsilon\over 3}\cdot  w$, then the value
\begin{eqnarray*}
s&=&{t\over w}\cdot s_2\le {(1+\alpha_R)pw+{\epsilon\over 3}\cdot
w\over w}\cdot
s_2\le ((1+\alpha_R)p+{\epsilon\over 3})s_2\\
&\le& ((1+\alpha_R)p+{\epsilon\over 3} )(1+\delta_R)|B|\le
(1+\alpha_R)(1+\delta_R)|B-A|+{\epsilon\over 3}\cdot
(1+\delta_R)|B|\\
&\le&
(1+\alpha_R)(1+\delta_R)|B-A|+\epsilon|B|.
\end{eqnarray*}

By Theorem~\ref{hoeffiding-theorem}, with probability at most
$P_2=\mu({\epsilon\over 3})^w$,
we have $t<p_Lw-{\epsilon\over 3}\cdot w=(1-\alpha)pw-{\epsilon\over
3}\cdot w$.

If $t\ge (1-\alpha_L)pw-{\epsilon\over 3}\cdot w$, then the value
\begin{eqnarray*}
s&=&{t\over w}\cdot s_2\ge {(1-\alpha_L)pw-{\epsilon\over 3}\cdot
w\over w}\cdot
s_2\ge ((1-\alpha_L)p-{\epsilon\over 3})s_2\\
&\ge& ((1-\alpha_L)p-{\epsilon\over 3})(1-\delta_L)|B|\ge
(1-\alpha_L)(1-\delta_L)|B-A|-{\epsilon\over 3}\cdot |B|\\
&\ge& (1-\alpha_L)(1-\delta_L)|B-A|-\epsilon |B|.
\end{eqnarray*}


By  line~\ref{line-w-set-line} of ApproximateDifference(.), we need
$w=\bigO({1\over \epsilon^2}\log {1\over \gamma})$ random samples in
$B$ so that the total failure probability is at most $P_1+P_2\le
2\cdot {\gamma\over 4}<\gamma$ (by
inequality~(\ref{prob-Add-Ineqn})).
The number of queries to $A$ is $w$. Thus, the number of total
queries to $A_1,A_2,\cdots, A_u$ is $uw$.


Therefore, we have its complexity $(T_D(\epsilon,\gamma),
R_D(\epsilon,\gamma), Q_D(\epsilon,\gamma))$ with
\begin{eqnarray*}
  T_D(\epsilon,\gamma)&=&\bigO(uw)=\bigO({u\over \epsilon^2}\log {1\over \gamma}),\\
  R_D(\epsilon,\gamma)&=&w=\bigO({1\over \epsilon^2}\log {1\over
  \gamma}),   \ {\rm and}\\
  Q_D(\epsilon,\gamma)&=&\bigO(uw)=\bigO({u\over \epsilon^2}\log {1\over \gamma}).
\end{eqnarray*}

This completes the proof of Lemma~\ref{difference-lemma}.
\end{proof}

\fi

\subsection{A Randomized Algorithm for Set Union Cardinality}

We describe a randomized algorithm for estimating the cardinality
for set union. It will use the algorithm for set difference
developed in Section~\ref{set-diff-sec}. The following lemma gives
an approximation for the size of sets union. Its accuracy is enough
when it is applied in the approximation algorithms for maximum
coverage problem.

\begin{lemma}\label{main-lemma} Assume $\epsilon, \delta_L,\delta_R,,\delta_{2,L},\delta_{2,R},\alpha_L,\alpha_R\in [0,1]$,
$(1-\delta_L)\le (1-\alpha_L)(1-\delta_{2,L})$ and $(1+\delta_R)\ge
(1+\alpha_R)(1+\delta_{2,R})$. Assume that $L$ is a list of sets
$A_1, A_2,\cdots, A_u$,  and $X_2$ is set with an
$(\alpha_L,\alpha_R)$-biased random generator \randomElm$(X_2)$. Let
integers $s_1$ and $s_2$ satisfy $(1-\delta_L)|X_1|\le s_1\le
(1+\delta_R)|X_1|$, and $(1-\delta_{2,L})|X_2|\le s_2\le
(1+\delta_{2,R})|X_2|$, then
\begin{enumerate}
\item\label{case1-diff}
If $t$ satisfies
$(1-\alpha_L)(1-\delta_{2,L})|X_2-X_1|-\epsilon|X_2|\le t \le
(1+\alpha_R)(1+\delta_{2,R})|X_2-X_1|+\epsilon|X_2|$, then $s_1+t$
satisfies
\begin{eqnarray}(1-\delta_L-\epsilon)|X_1\cup X_2|\le
s_1+t\le (1+\delta_R+\epsilon)|X_1\cup X_2|.\label{union-ineqn}
\end{eqnarray}
\item\label{case2-diff}
If $t=$ApproximateDifference($L,X_2, s_2,\epsilon, \gamma$),
 with probability at most $\gamma$,
$s_1+t$ does  not have inequality (\ref{union-ineqn}),
\end{enumerate}
where $X_1=A_1\cup A_2\cup \cdots\cup A_u$.
\end{lemma}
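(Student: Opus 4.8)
The plan is to reduce both parts to the set-difference estimate already analyzed in Lemma~\ref{difference-lemma}, using the basic identity $|X_1 \cup X_2| = |X_1| + |X_2 - X_1|$ together with the hypotheses relating the composite biases $(1-\alpha_L)(1-\delta_{2,L})$ and $(1+\alpha_R)(1+\delta_{2,R})$ to $(1-\delta_L)$ and $(1+\delta_R)$. Part~\ref{case2-diff} will be an immediate consequence of Part~\ref{case1-diff}: by Lemma~\ref{difference-lemma}(2), with probability at least $1-\gamma$ the returned value $t = \mathrm{ApproximateDifference}(L, X_2, s_2, \epsilon, \gamma)$ satisfies the set-difference inequality $(1-\alpha_L)(1-\delta_{2,L})|X_2-X_1| - \epsilon|X_2| \le t \le (1+\alpha_R)(1+\delta_{2,R})|X_2-X_1| + \epsilon|X_2|$ (here the list $L$ plays the role of the set $A$, and $X_2$ plays the role of $B$, with the approximation $s_2$ for $|X_2|$ having bias parameters $\delta_{2,L}, \delta_{2,R}$); on that event the hypothesis of Part~\ref{case1-diff} holds, so $s_1 + t$ satisfies inequality~(\ref{union-ineqn}). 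So the whole content is Part~\ref{case1-diff}.

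For Part~\ref{case1-diff}, first I would establish the upper bound. Starting from the assumed upper bound on $t$ and using $|X_2| = |X_1 \cup X_2| - |X_1| \le |X_1 \cup X_2|$ and $|X_2 - X_1| \le |X_1 \cup X_2|$, I get
\begin{eqnarray*}
s_1 + t &\le& (1+\delta_R)|X_1| + (1+\alpha_R)(1+\delta_{2,R})|X_2 - X_1| + \epsilon|X_2| \\
&\le& (1+\delta_R)|X_1| + (1+\delta_R)|X_2 - X_1| + \epsilon|X_1 \cup X_2| \\
&=& (1+\delta_R)|X_1 \cup X_2| + \epsilon|X_1 \cup X_2| = (1+\delta_R+\epsilon)|X_1 \cup X_2|,
\end{eqnarray*}
where the second line uses $s_1 \le (1+\delta_R)|X_1|$ and the hypothesis $(1+\delta_R) \ge (1+\alpha_R)(1+\delta_{2,R})$. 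The lower bound is symmetric: from $s_1 \ge (1-\delta_L)|X_1|$, the assumed lower bound on $t$, the hypothesis $(1-\delta_L) \le (1-\alpha_L)(1-\delta_{2,L})$, and again $|X_2| \le |X_1 \cup X_2|$, I obtain $s_1 + t \ge (1-\delta_L)|X_1| + (1-\delta_L)|X_2 - X_1| - \epsilon|X_1 \cup X_2| = (1-\delta_L-\epsilon)|X_1 \cup X_2|$. Combining the two bounds gives inequality~(\ref{union-ineqn}).

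I do not anticipate a real obstacle here; the lemma is essentially bookkeeping. The one point requiring a little care is making sure the direction of each bias inequality is used correctly — the composite upper bias must be dominated by $1+\delta_R$ and the composite lower bias must dominate $1-\delta_L$ — and that the additive $\epsilon|X_2|$ error from the difference estimate is absorbed into $\epsilon|X_1\cup X_2|$ rather than something larger; this works precisely because $|X_2|\le |X_1\cup X_2|$. It is also worth noting that $s_1$ is an approximation for $|X_1|$ where $X_1 = A_1 \cup \cdots \cup A_u$ is exactly the union of the sets in $L$, so this lemma is set up to be applied inductively in the greedy algorithm, with $X_1$ the union of already-chosen sets and $X_2$ the candidate new set.
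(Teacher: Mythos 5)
Your proposal is correct and follows essentially the same route as the paper: both parts reduce to the identity $|X_1\cup X_2|=|X_1|+|X_2-X_1|$, the hypotheses $(1+\delta_R)\ge(1+\alpha_R)(1+\delta_{2,R})$ and $(1-\delta_L)\le(1-\alpha_L)(1-\delta_{2,L})$ collapse the composite biases, the additive error is absorbed via $|X_2|\le|X_1\cup X_2|$, and Part~2 follows from Part~1 together with Lemma~\ref{difference-lemma}. No gaps.
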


\ifdefined\RegularVersion

\begin{proof}
Assume that $s_1$ and $s_2$ satisfy
\begin{eqnarray}
(1-\delta_L)|X_1|&\le& s_1\le (1+\delta_R)|X_1|, {\rm\ \  and }\\
(1-\delta_{2,L})|X_2|&\le& s_2\le (1+\delta_{2,R})|X_2|.
\end{eqnarray}

Since $(1+\delta_R)\ge (1+\alpha_R)(1+\delta_{2,R})$, we have
\begin{eqnarray*}
s_1+t&\le& (1+\delta_R)|X_1|+(1+\alpha_R)(1+\delta_{2,R})|X_2-X_1|+\epsilon|X_2|\\
 &\le&
(1+\delta_R)(|X_1|+|X_2-X_1|)+\epsilon|X_2|\\
&=& (1+\delta_R)|X_1\cup X_2|+\epsilon |X_2|\\
&\le& (1+\delta_R+\epsilon)|X_1\cup X_2|.
\end{eqnarray*}

Since $(1-\delta_L)\le (1-\alpha_L)(1-\delta_{2,R})$, we have
\begin{eqnarray*}
s_1+t&\ge& (1-\delta_L)|X_1|+(1-\alpha_L)(1-\delta_{2,L})|X_2-X_1|-\epsilon|X_2|\\
 &\ge&
(1-\delta_L)(|X_1|+|X_2-X_1|)-\epsilon|X_2|\\
&=& (1-\delta_L)|X_1\cup X_2|-\epsilon|X_2|\\
&\ge& (1-\delta_L-\epsilon)|X_1\cup X_2|.
\end{eqnarray*}

Case~\ref{case2-diff} follows from Case~\ref{case1-diff}, and
Lemma~\ref{difference-lemma}. By executing
$t=$ApproximateDifference($X_1,X_2, s_2$,$\epsilon$, $\gamma$), we
have $(1-\alpha_L)(1-\delta_{2,L})|X_2-X_1|-\epsilon|X_2|\le t \le
(1+\alpha_R)(1+\delta_{2,R})|X_2-X_1|+\epsilon|X_2|$. The
probability to fail inequality (\ref{union-ineqn}) is at most
$\gamma$ by Lemma~\ref{difference-lemma}.
\end{proof}

\fi

\subsection{Approximation to the Maximum  Coverage
Problem}\label{max-cover-sec}

In this section, we show that our randomized approach to the
cardinality of set union can be applied to the maximum  coverage
problem. Lemma~\ref{cover-lemma} gives the approximation performance
of greedy method for the maximum coverage problem. It is adapted to
a similar result\cite{HochbaumPathria98} with our approximation
accuracy to the size of set difference.


\begin{definition}
For a list $L$ of sets $A_{1},A_{2},\cdots, A_{\setCount}$,  define
its initial $h$ sets by $L(h)=A_{1}, A_{2}, \cdots, A_{h}$, and the
union of sets in $L$ by $U(L)=A_{1}\cup A_{2}\cup \cdots\cup
A_{\setCount}$.
\end{definition}

\begin{lemma}\label{cover-lemma}Let $L'$ be a sublist of sets $A_{t_1}, A_{t_2},\cdots, A_{t_k}$ selected from the list $L$ of sets $A_1,A_2,\cdots, A_{\setCount}$. If each
subset $A_{t_{j+1}} (j=0,2,\cdots, k-1)$ in $L'$ satisfies
$|A_{t_{j+1}}-U(L'(j))|\ge \theta\cdot {C^*(L,k)-|U(L'(j))|\over
k}-\delta C^*(L,k)$,
then $|U(L'(l))|\ge (1-(1-{\theta\over k})^l)C^*(L,k)-l\cdot \delta
C^*$ for $l=1,2,\cdots, k$.
\end{lemma}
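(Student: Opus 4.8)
The plan is a straightforward induction on $l$, treating the hypothesis as a per-step progress guarantee and unrolling it. Write $C^*=C^*(L,k)$, and for $j=0,1,\dots,k$ set $u_j=|U(L'(j))|$, so that $u_0=|U(L'(0))|=0$ (the empty union). The one substantive observation is the telescoping identity
\[
u_{j+1}=u_j+|A_{t_{j+1}}-U(L'(j))| ,
\]
which holds because adjoining $A_{t_{j+1}}$ to $U(L'(j))$ increases the cardinality of the union by exactly the number of elements of $A_{t_{j+1}}$ not already covered by $U(L'(j))$.

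Next I would substitute the hypothesis $|A_{t_{j+1}}-U(L'(j))|\ge \theta\cdot{C^*-u_j\over k}-\delta C^*$ into this identity to get, for $j=0,1,\dots,k-1$,
\[
u_{j+1}\ \ge\ u_j+\theta\cdot{C^*-u_j\over k}-\delta C^* ,
\]
and then subtract both sides from $C^*$ to obtain the clean recursion
\[
C^*-u_{j+1}\ \le\ \left(1-{\theta\over k}\right)\left(C^*-u_j\right)+\delta C^* .
\]
This step is purely algebraic and needs no sign hypothesis on $C^*-u_j$.

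Finally I would iterate this recursion from $j=0$ up to $j=l-1$. Since $C^*-u_0=C^*$, a one-line induction gives
\[
C^*-u_l\ \le\ \left(1-{\theta\over k}\right)^{l}C^*+\delta C^*\sum_{i=0}^{l-1}\left(1-{\theta\over k}\right)^{i} .
\]
As $0\le\theta\le k$ (in the intended application $\theta\le 1\le k$), each factor $1-{\theta\over k}$ lies in $[0,1]$, so the geometric sum is at most $l$; hence $C^*-u_l\le\left(1-{\theta\over k}\right)^{l}C^*+l\,\delta C^*$. Rearranging yields exactly
\[
|U(L'(l))|=u_l\ \ge\ \left(1-\left(1-{\theta\over k}\right)^{l}\right)C^*-l\,\delta C^* ,
\]
which is the claim.

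There is essentially no hard step: the content is the identity $u_{j+1}=u_j+|A_{t_{j+1}}-U(L'(j))|$ together with the induction above. The only points needing a moment's care are (i) initializing $u_0=0$ from the empty union $U(L'(0))$, and (ii) observing that $1-{\theta\over k}\in[0,1]$ so the accumulated additive errors total at most $l\,\delta C^*$ rather than blowing up (if one wanted full generality, $|1-{\theta\over k}|\le 1$ still bounds each power of $1-{\theta\over k}$ by $1$). I would also note that, unlike the textbook greedy analysis, the averaging step ``$\max_i|A_i-S|\ge (C^*-|S|)/k$'' is not invoked in this lemma; that fact is what makes the hypothesis hold when the lemma is later applied, but in the lemma itself the per-step inequality is simply assumed.
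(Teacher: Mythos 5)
Your proof is correct and follows essentially the same route as the paper's: both rest on the identity $|U(L'(j+1))|=|U(L'(j))|+|A_{t_{j+1}}-U(L'(j))|$, substitute the per-step hypothesis, and induct, with the accumulated error bounded by $l\delta C^*$ using $1-\theta/k\in[0,1]$. Your reformulation in terms of the uncovered quantity $C^*-u_j$ and the unrolled geometric sum is just a cosmetic repackaging of the paper's direct induction, and your explicit remarks about $u_0=0$ and the sign of $1-\theta/k$ make the same implicit assumptions the paper uses.
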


\ifdefined\RegularVersion

\begin{proof}
It is proven by induction. It is trivial at $l=1$ as
$L'(0)=\emptyset$. Assume $|U(L'(l))|\ge (1-(1-{\theta\over
k})^l)C^*(L,k)-l\cdot \delta C^*(L,k)$. Consider the case $l+1$.

Let $A_{t_{l+1}}$ satisfy $|A_{t_{l+1}}-U(L'(l))|\ge\theta\cdot
{C^*(L,k)-|U(L'(l))|\over k}-\delta C^*(L,k)$.

Therefore,
\begin{eqnarray*}
|U(L'(l+1))|&=&|U(L'(l))|+|A_{t_{l+1}}-U(L'(l))|\\
&\ge&|U(L'(l))|+ \theta\cdot {C^*(L,k)-|U(L'(l))|\over k}-\delta
C^*(L,k)\\
&=&(1-{\theta\over k})|U(L'(l))|+ {\theta C^*(L,k)\over k} -\delta
C^*(L,k)\\
&\ge& (1-{\theta\over k})((1-(1-{\theta\over k})^l)C^*(L,k)-l\delta
C^*(L,k))+{\theta C^*(L,k)\over k}- \delta
C^*(L,k)\\
&\ge& (1-{\theta\over k})(1-(1-{\theta\over
k})^l)C^*(L,k)-(1-{\theta\over k})\cdot l\cdot \delta
C^*(L,k)+{\theta C^*(L,k)\over k}- \delta
C^*(L,k)\\
&\ge& (1-{\theta\over k})(1-(1-{\theta\over k})^l)C^*(L,k)+{\theta
C^*(L,k)\over k}-l\cdot \delta C^*(L,k)- \delta
C^*(L,k)\\
&=&(1-(1-{\theta\over k})^{l+1})C^*(L,k)-(l+1)\cdot \delta C^*(L,k).
\end{eqnarray*}
\end{proof}

\fi

\begin{definition}
If $L'$ is a list of sets $B_1,B_2,\cdots, B_u$, and $B_{u+1}$ is
another set, define Append$(L', B_{u+1})$ to be the list
$B_1,B_2,\cdots, B_u, B_{u+1}$, which is to append $B_{u+1}$ to the
end of $L'$.
\end{definition}

In Algorithm ApproximateMaximumCover(.), there are several virtual
functions including RandomSamples(.),
ApproximateSetDifferenceSize(.), and ProcessSet(.), which have
variant implementations and will be given in Virtual Function
Implementations 1,2 and 3. We use a virtual function
ApproximateSetDifferenceSize$(L', A_i,s_i, \epsilon', \gamma,
k,\setCount)$ to approximate $|A_i-\cup_{A_j\ is \ in\ L'}A_j|$. We
will have variant implementations for this function, and get
different time complexity. One implementation will be given at
Lemma~\ref{appr-cover-complexity-lemma}, and the other one will be
given at Lemma~\ref{improve-ramdom-sample-lemma}. Another function
ProcessSet($A_j)$ also has variant implementations. Its purpose is
to process a newly selected set $A_j$ to list $L'$ of existing
selected sets, and may sort it in one of the implementations. The
function RandomSamples$(.)$ is also virtual and will have two
different implementations.

\vskip 10pt


\addtocounter{problem-counter}{1}

\newcounter{problem-counter-max-cover}
\setcounter{problem-counter-max-cover}{0}


\addtocounter{problem-counter-max-cover}{\arabic{problem-counter}}

 {\bf Algorithm~{\arabic{problem-counter} }:} ApproximateMaximumCover($L,k,\xi,\gamma)$

Input: a list $((\alpha_L,\alpha_R),(\delta_L,\delta_R))$-list $L$
of $\setCount$ sets $A_1, A_2,\cdots, A_{\setCount}$,
an integer parameter $k$, and two real parameters $\xi,\gamma\in
(0,1)$. Each $A_i$ has a $(\alpha_L,\alpha_R)$-biased random
generator \randomElm$(A_i)$, and an approximation $s_i$ for $|A_i|$.

{\bf Steps:}

\begin{enumerate}[1.]

\item\label{initial-start}
Let $H=\emptyset$, and list $L'$ be empty;

\item
Let $z=0$;

\item\label{epsilon'-line}
Let $\epsilon'={\xi\over 4k}$;

\item
For $i=1$ to $\setCount$ let $R_i=$RandomSamples$(A_i,
\xi,\gamma,k,\setCount)$;


\item\label{For-loop-start}
For $j=1$ to $k$

\item $\{$

\item
\qquad Let $s_j^*=-1$;\label{sj-init-line}

\item\label{call-app-diff-line-max-cover}
\qquad For each $A_i$ in $L$,

\item
\qquad $\{$

\item\label{appr-diff-line}
 \qquad\qquad Let $s_{i,j}=$ApproximateSetDifferenceSize$(L',
A_i,s_i, R_i, \epsilon', \gamma, k,\setCount)$;


\item
\qquad\qquad If $(s_{i,j}>s_j^*)$ then let $s_j^*=s_{i,j}$ and
$t_j=i$;


\item
\qquad $\}$

\item
\qquad Let $H=H\cup \{t_j\}$;

\item
\qquad Let $z=z+s_{t_j,j}$;

\item
\qquad ProcessSet$(A_{t_j})$;

\item\label{merege-line}
 \qquad Let $L'=$Append($L', A_{t_j})$;

\item
\qquad Remove $A_{t_j}$ from list $L$;

\item\label{call-app-diff-line-max-cover2} $\}$


\item
Return $z$ and $H$;

\end{enumerate}

{\bf End of Algorithm}

\vskip 10pt

The algorithm ApproximateMaximumCover(.) is a randomized greedy
approach for the maximum coverage problem. It adds the set $A_{t_j}$
that has an approximate largest $|A_{t_j}-(\cup_{A_i\in L'} A_i)|$
to the existing partial solution saved in the list $L'$. The
accuracy control for the estimated size of the set
$A_{t_j}-(\cup_{A_i\in L'} A_i)$ will be enough to achieve the same
approximation ratio as the classical deterministic algorithm. Since
$s_j^*$ starts from $-1$ at line~\ref{sj-init-line} in the algorithm
ApproximateMaximumCover(.), each iteration picks up one set from the
input list $L$, and remove it from $L$. By the end of the algorithm,
$L'$ contains $k$ sets if $k\le \setCount $.

Lemma~\ref{appr1-cover-lemma} shows the approximation accuracy for
the maximum coverage problem if $s_{i,j}$ is accurate enough to
approximate $|A_i- U(L')|$. It may be returned by
ApproximateDifference(.) with a small failure probability and
complexity shown at Lemma~\ref{appr-cover-complexity-lemma}.

\begin{lemma}\label{appr1-cover-lemma}
Let $\xi\in (0,1)$, and $\alpha_L,\alpha_R,\delta_L,\delta_R\in
[0,1)$,  $L$ be a $((\alpha_L,\alpha_R),(\delta_L,\delta_R))$-list
of sets $A_1,\cdots, A_{\setCount}$, and $L'_*$ be the sublist $L'$
after algorithm ApproximateMaximumCover(.) is completed.  If every
time $s_{i,j}$ in the line~\ref{appr-diff-line} of the algorithm
ApproximateMaximumCover(.) satisfies $$
(1-\alpha_L)(1-\delta_L)|A_i-U(L'_*(j))|-\epsilon'|A_i|\le
s_{i,j}\le
(1+\alpha_R)(1+\delta_R)|A_i-U(L'_*(j))|+\epsilon'|A_i|,$$
 then it
returns an integer $z$ and a size $k$ subset $H\subseteq
\{1,2,\cdots, \setCount\}$ that satisfy
\begin{enumerate}[1.]
\item
$|\cup_{j\in H}A_j|\ge (1-(1-{\beta\over k})^k-\xi )C^*(L,k)$, and
\item\label{case-union-approx}
$((1-\alpha_L)(1-\delta_L)-\xi)|\cup_{j\in H}A_j|\le z\le
((1+\alpha_R)(1+\delta_R)+\xi)|\cup_{j\in H}A_j|$,
\end{enumerate}
where $\beta={(1-\alpha_L)(1-\delta_L)\over
(1+\alpha_R)(1+\delta_R)}$.
\end{lemma}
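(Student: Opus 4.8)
The plan is to split the argument into two parts corresponding to the two conclusions. For the approximation-ratio claim (part~1), I would show that the hypothesis on $s_{i,j}$ forces each greedily selected set $A_{t_{j+1}}$ to satisfy the recurrence hypothesis of Lemma~\ref{cover-lemma} with $\theta=\beta$ and some small additive slack $\delta$ proportional to $\epsilon'$, and then invoke that lemma directly. Concretely, at stage $j$ the algorithm picks $t_{j+1}$ maximizing $s_{i,j+1}$. Let $A_{t^*}$ be a set achieving the greedy optimum for the \emph{true} set-difference sizes, i.e.\ $|A_{t^*}-U(L'_*(j))|\ge \frac{C^*(L,k)-|U(L'_*(j))|}{k}$ (such a set exists by the standard averaging argument over an optimal $k$-cover). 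Using the lower bound on $s_{t^*,j+1}$ and the upper bound on $s_{t_{j+1},j+1}$, together with $|A_i|\le \frac{1}{1-\delta_R}\,s_i$ is not needed — rather I use $|A_i|\le \text{(largest set)}$ only implicitly; what actually matters is that $\epsilon'|A_i|\le \epsilon' C^*(L,k)$ since any single input set has size at most $C^*(L,k)$ when $k\ge 1$. Chaining the inequalities gives
\[
(1+\alpha_R)(1+\delta_R)|A_{t_{j+1}}-U(L'_*(j))| + \epsilon' C^*(L,k)
\;\ge\; s_{t_{j+1},j+1}\;\ge\; s_{t^*,j+1}\;\ge\;
(1-\alpha_L)(1-\delta_L)\cdot\frac{C^*(L,k)-|U(L'_*(j))|}{k} - \epsilon' C^*(L,k),
\]
so dividing by $(1+\alpha_R)(1+\delta_R)$ yields
$|A_{t_{j+1}}-U(L'_*(j))|\ge \beta\cdot\frac{C^*(L,k)-|U(L'_*(j))|}{k} - \frac{2\epsilon'}{(1+\alpha_R)(1+\delta_R)}C^*(L,k)$,
which is exactly the hypothesis of Lemma~\ref{cover-lemma} with $\delta = \frac{2\epsilon'}{(1+\alpha_R)(1+\delta_R)}\le 2\epsilon'$. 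Lemma~\ref{cover-lemma} then gives $|U(L'_*(k))|\ge (1-(1-\frac{\beta}{k})^k)C^*(L,k) - k\cdot 2\epsilon' C^*(L,k)$. Since $\epsilon'=\frac{\xi}{4k}$ (line~\ref{epsilon'-line}), the additive loss is $2k\epsilon' = \frac{\xi}{2}\le \xi$, so $|\cup_{j\in H}A_j| = |U(L'_*(k))|\ge (1-(1-\frac{\beta}{k})^k-\xi)C^*(L,k)$, which is conclusion~1 (noting $|H|=k$ since $k\le\setCount$, as remarked after the algorithm).

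For part~2 I would track the accumulator $z$. At the $j$-th iteration the algorithm sets $z \mathrel{+}= s_{t_j,j}$, and at that moment (by construction $L'_*(j)$ is the partial solution just before selecting $A_{t_j}$, so the hypothesized two-sided bound on $s_{t_j,j}$ applies) we have
\[
(1-\alpha_L)(1-\delta_L)|A_{t_j}-U(L'_*(j-1))| - \epsilon'|A_{t_j}|
\;\le\; s_{t_j,j}\;\le\;
(1+\alpha_R)(1+\delta_R)|A_{t_j}-U(L'_*(j-1))| + \epsilon'|A_{t_j}|.
\]
Summing over $j=1,\dots,k$ and using the telescoping identity $\sum_{j=1}^k |A_{t_j}-U(L'_*(j-1))| = |U(L'_*(k))| = |\cup_{j\in H}A_j|$, together with $\sum_j \epsilon'|A_{t_j}|\le k\epsilon'\cdot C^*(L,k)$ and the already-established $|\cup_{j\in H}A_j|\ge (1-\frac1e-\xi)C^*(L,k)\ge c\cdot C^*(L,k)$ for a constant $c$ (so that $C^*(L,k)\le \frac1c|\cup_{j\in H}A_j|$), one converts the additive error $k\epsilon' C^*(L,k)=\frac{\xi}{4}C^*(L,k)$ into a multiplicative error bounded by a constant times $\xi$ relative to $|\cup_{j\in H}A_j|$; absorbing the constant (or simply re-scaling $\xi$) gives conclusion~2: $((1-\alpha_L)(1-\delta_L)-\xi)|\cup_{j\in H}A_j|\le z\le ((1+\alpha_R)(1+\delta_R)+\xi)|\cup_{j\in H}A_j|$.

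The main obstacle is bookkeeping the slack constants so that everything fits inside a single $\xi$: the additive errors $\epsilon'|A_i|$ must be controlled uniformly, which requires the observation $|A_i|\le C^*(L,k)$ (true since $k\ge 1$ and any one set is a feasible $1$-cover hence $\le$ the $k$-cover optimum), and then converting additive-in-$C^*$ error to multiplicative-in-$|\cup_{j\in H}A_j|$ error needs the lower bound from part~1 — so part~2 genuinely depends on part~1. A secondary subtlety is that the hypothesis is stated with $U(L'_*(j))$, the \emph{final} partial lists, so one must check that at iteration $j$ the list $L'$ held by the algorithm is indeed $L'_*(j-1)$ (respectively $L'_*(j)$), which follows because the first $j-1$ selections are never undone; this lets us apply the hypothesis retroactively with the correct argument. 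No probabilistic reasoning enters here — this lemma is deterministic given the accuracy hypothesis on $s_{i,j}$, and the failure probability is folded in separately when the hypothesis is discharged via Lemma~\ref{difference-lemma} in Lemma~\ref{appr-cover-complexity-lemma}.
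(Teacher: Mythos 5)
Your part~1 is essentially the paper's own argument: identify a set $A_{v_j}$ (your $A_{t^*}$) achieving the averaging bound $|A_{v_j}-U(L'_*(j))|\ge\frac{C^*(L,k)-|U(L'_*(j))|}{k}$, chain $s_{t_{j+1},j+1}\ge s_{t^*,j+1}$ through the two-sided hypothesis using $|A_i|\le C^*(L,k)$, feed the resulting inequality into Lemma~\ref{cover-lemma} with $\theta=\beta$ and $\delta=2\epsilon'$, and finish with $2k\epsilon'=\xi/2\le\xi$. That part is correct and matches the paper.

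Part~2 has a genuine gap. The telescoping identity $\sum_{j=1}^k|A_{t_j}-U(L'_*(j-1))|=|\cup_{j\in H}A_j|$ is fine, and summing the two-sided bounds is morally the same as the paper's step-by-step induction via Lemma~\ref{main-lemma}; the problem is how you dispose of the accumulated additive error $\sum_j\epsilon'|A_{t_j}|$. You bound it by $k\epsilon'\,C^*(L,k)=\frac{\xi}{4}C^*(L,k)$ and then convert $C^*(L,k)$ into $|\cup_{j\in H}A_j|$ via ``$|\cup_{j\in H}A_j|\ge(1-\frac{1}{e}-\xi)C^*(L,k)\ge c\,C^*(L,k)$ for a constant $c$.'' But part~1 only yields the factor $1-(1-\frac{\beta}{k})^k-\xi$, and the lemma's hypotheses allow $\alpha_L,\delta_L$ arbitrarily close to $1$, hence $\beta$ arbitrarily close to $0$: for, say, $\beta=0.01$ and $k=10$ one has $(1-\frac{\beta}{k})^k\approx 0.99$, so the factor is negative for every $\xi>0.01$ and no positive $c$ exists; the conversion collapses. ``Re-scaling $\xi$'' is not available either, since $\epsilon'=\frac{\xi}{4k}$ is fixed by the algorithm and conclusion~2 must hold for the stated $\xi$. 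The repair is simpler than your detour through $C^*(L,k)$ and makes part~2 independent of part~1: since $t_j\in H$, each $A_{t_j}\subseteq\cup_{j'\in H}A_{j'}$, so $|A_{t_j}|\le|\cup_{j\in H}A_j|$ and $\sum_j\epsilon'|A_{t_j}|\le k\epsilon'|\cup_{j\in H}A_j|=\frac{\xi}{4}|\cup_{j\in H}A_j|\le\xi|\cup_{j\in H}A_j|$ --- which is exactly what the paper's induction (accumulating an $i\epsilon'$ relative error against the current union after $i$ iterations) delivers.
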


\ifdefined\RegularVersion

\begin{proof}
Each time the randomized greedy algorithm selects a set from the
input list that is close to have the best improvement for coverage.
Let $L'_*$ be the list $L'$ after the algorithm
ApproximateMaximumCover(.) is completed ($L'$ is dynamic list during
the execution of the algorithm, and $L'_*$ is static after the
algorithm ends). The list $L'_*$ contains $k$ subsets $A_{t_1},
\cdots, A_{t_k}$. According to the algorithm, $L'_*(j)$ is the list
of $j$ subsets $A_{t_1}, \cdots, A_{t_j}$ that have been appended to
$L'$ after the first $j$ iterations for the loop from
line~\ref{For-loop-start} to
line~\ref{call-app-diff-line-max-cover2}.

Assume that for each $s_{i,j}$
we have
\begin{eqnarray}
(1-\alpha_L)(1-\delta_L)|A_i-U(L'_*(j))|-\epsilon'|A_i|\le
s_{i,j}\le (1+\alpha_R)(1+\delta_R)|A_i-U(L'_*(j))|+\epsilon'|A_i|
\label{Aj-sj-ineqn}
\end{eqnarray}

If set $A_{v_j}$ makes $|A_{v_j}-U(L'_*(j))|$ be the largest, we
have
\begin{eqnarray}
|A_{v_j}-U(L'_*(j))|\ge {C^*(L,k)-|U(L'_*(j))|\over
k}.\label{Aj0-C*-ineqn}
\end{eqnarray}

A special case of inequality~(\ref{Aj-sj-ineqn}) is inequality
(\ref{Aj0-sj0-ineqn})
\begin{eqnarray}
(1-\alpha_L)(1-\delta_L)|A_{v_j}-U(L'_*(j))|-\epsilon'|A_{v_j}|\le
s_{v_j,j}\le
(1+\alpha_R)(1+\delta_R)|A_{v_j}-U(L'_*(j))|+\epsilon'|A_{v_j}|.\label{Aj0-sj0-ineqn}
\end{eqnarray}

Since $s_{v_j,j}\le s_{t_j,j}$, we have
$(1-\alpha_L)(1-\delta_L)|A_{v_j}-U(L'_*(j))|-\epsilon'|A_{v_j}|\le
s_{v_j,j}\le s_{t_j,j}\le
(1+\alpha_R)(1+\delta_R)|A_{t_j}-U(L'_*(j))|+\epsilon'|A_{t_j}|$ by
inequalities (\ref{Aj-sj-ineqn}) and (\ref{Aj0-sj0-ineqn}).

Therefore,
$(1-\alpha_L)(1-\delta_L)|A_{v_j}-U(L'_*(j))|-\epsilon'|A_{v_j}|\le
(1+\alpha_R)(1+\delta_R)|A_{t_j}-U(L'_*(j))|+\epsilon'|A_{t_j}|$.
Since $|A_{v_j}|\le C^*(L,k)$ and $|A_{t_j}|\le C^*(L,k)$, we have
${(1-\alpha_L)(1-\delta_L)\over (1+\alpha_R)(1+\delta_R)}\cdot
|A_{v_j}-U(L'_*(j))|-{2\epsilon'\over
(1+\alpha_R)(1+\delta_R)}C^*(L,k)\le |A_{t_j}-U(L'_*(j))|$. By
inequality (\ref{Aj0-C*-ineqn}), we have
\begin{eqnarray*}
\beta\cdot {C^*(L,k)-|U(L'_*(j))|\over k}-2\epsilon'C^*(L,k)\le
|A_{t_j}-U(L'_*(j))|,
\end{eqnarray*}
 where $\beta={(1-\alpha_L)(1-\delta_L)\over
(1+\alpha_R)(1+\delta_R)}$.

By Lemma~\ref{cover-lemma} and line~\ref{epsilon'-line} in
ApproximateMaximumCover(.), we have $L'_*$ with
\begin{eqnarray*}
|U(L'_*)|&\ge& (1-(1-{\beta\over k})^k)C^*(L,k)-k\cdot 2\epsilon'C^*(L,k)\\
&\ge& (1-(1-{\beta\over k})^k-\xi ) C^*(L,k).
\end{eqnarray*}

Case~\ref{case-union-approx} of this lemma can be proven by a simple
induction. We just need to prove after the $i$-th iteration of the
for loop from line~\ref{For-loop-start} to
line~\ref{call-app-diff-line-max-cover2} of this algorithm,
\begin{eqnarray}
((1-\alpha_L)(1-\delta_L)-i\epsilon')|\cup_{j\in H}A_j|\le z\le
((1+\alpha_R)(1+\delta_R)+i\epsilon')|\cup_{j\in
H}A_j|.\label{case-2-inequality}
\end{eqnarray}

It is trivial right after the initialization
(line~\ref{initial-start} to line~\ref{epsilon'-line}
of the algorithm) since $H=\emptyset$ and $z=0$. Assume inequality
(\ref{case-2-inequality}) holds after the $i$-th iteration of the
loop from line~\ref{For-loop-start} to
line~\ref{call-app-diff-line-max-cover2}.
By inequality (\ref{Aj-sj-ineqn}) and Lemma~\ref{main-lemma}
we have following inequality (\ref{induction-ineqn}) after the
$(i+1)$-th iteration of the loop from line~\ref{For-loop-start} to
line~\ref{call-app-diff-line-max-cover2}.

\begin{eqnarray}
((1-\alpha_L)(1-\delta_L)-(i+1)\epsilon')|\cup_{j\in H}A_j|\le z\le
((1+\alpha_R)(1+\delta_R)+(i+1)\epsilon')|\cup_{j\in
H}A_j|.\label{induction-ineqn}
\end{eqnarray}

Thus, at the end of the algorithm, we have
\begin{eqnarray}
((1-\alpha_L)(1-\delta_L)-k\epsilon')|\cup_{j\in H}A_j|\le z\le
((1+\alpha_L)(1+\delta_R)+k\epsilon')|\cup_{j\in H}A_j|.
\end{eqnarray}

Thus, by the end of the algorithm,
we have the following inequality (\ref{induction-final-ineqn}):

\begin{eqnarray}
((1-\alpha_L)(1-\delta_L)-\xi)|\cup_{j\in H}A_j|\le z\le
((1+\alpha_R)(1+\delta_R)+\xi)|\cup_{j\in
H}A_j|.\label{induction-final-ineqn}
\end{eqnarray}

\end{proof}

\fi

We need Lemma~\ref{help-app-lemma} to transform the approximation
ratio given by Lemma~\ref{appr1-cover-lemma} to constant $(1-{1\over
e})$ to match the classical ratio for the maximum coverage problem.

\begin{lemma}\label{help-app-lemma}
For each integer $k\ge 2$, and real $b\in [0,1]$, we have
\begin{enumerate}[1.]
\item\label{first-help-app-lemma}
$(1-{b\over k})^k\le {1\over e^{b}}-{\eta b\over 2e^{b}k}$, and
\item\label{secon-help-app-lemma}
If $\xi\le {\eta b \over 4e^{b}k}$, then
 $1-(1-{b\over k})^k-\xi> 1-{1\over
e^{b}}$,  where $\eta=e^{-{1\over 4}}$.
\end{enumerate}
\end{lemma}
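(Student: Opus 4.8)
The plan is to derive part~\ref{secon-help-app-lemma} from part~\ref{first-help-app-lemma} and then concentrate entirely on the one-variable estimate of part~\ref{first-help-app-lemma}. For the reduction, assuming the bound $(1-b/k)^k \le e^{-b} - \frac{\eta b}{2e^{b}k}$ of part~\ref{first-help-app-lemma} together with the hypothesis $\xi \le \frac{\eta b}{4e^{b}k}$, I would write
\[
1-(1-b/k)^k-\xi \;\ge\; 1-e^{-b}+\frac{\eta b}{2e^{b}k}-\xi \;\ge\; 1-e^{-b}+\frac{\eta b}{2e^{b}k}-\frac{\eta b}{4e^{b}k} \;=\; 1-e^{-b}+\frac{\eta b}{4e^{b}k},
\]
and for $b>0$ the last quantity is strictly larger than $1-e^{-b}$, which is exactly the conclusion of part~\ref{secon-help-app-lemma}. (At $b=0$ the hypothesis forces $\xi=0$ and the statement degenerates harmlessly.)

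For part~\ref{first-help-app-lemma} I would work with the multiplicative form $e^{b}(1-b/k)^k \le 1-\frac{\eta b}{2k}$ and proceed in two elementary steps. First, from the series $\ln(1-x)=-x-\frac{x^2}{2}-\frac{x^3}{3}-\cdots\le -x-\frac{x^2}{2}$ for $x\in[0,1)$, applied with $x=b/k\le\frac12$, I get $k\ln(1-b/k)\le -b-\frac{b^2}{2k}$ and hence $(1-b/k)^k\le e^{-b}\,e^{-b^2/(2k)}$. Second, I would invoke the convexity estimate $e^{-y}\le 1-\eta y$ valid for $y\in[0,\frac14]$ with $\eta=e^{-1/4}$ --- which holds since $y\mapsto 1-\eta y-e^{-y}$ vanishes at $0$ and is nondecreasing on $[0,\frac14]$ --- applied to $y=\frac{b^2}{2k}$; note $b\le1$ and $k\ge2$ guarantee $y\le\frac14$, so this is legitimate. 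Combining the two steps gives $(1-b/k)^k\le e^{-b}\bigl(1-\frac{\eta b^2}{2k}\bigr)$.

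This is precisely where the main obstacle lies, and I expect it to be decisive rather than cosmetic. The two clean steps above produce the correction term $\frac{\eta b^2}{2e^{b}k}$, with $b^2$ in the numerator, whereas the statement demands $\frac{\eta b}{2e^{b}k}$, with $b$; since $b\le 1$ we have $\frac{\eta b^2}{2k}\le\frac{\eta b}{2k}$, so the estimate this method delivers is \emph{weaker} than the stated one and does not imply it. The gap is structural: writing $G(b)=1-e^{b}(1-b/k)^k$ one computes $G(0)=0$ and $G'(b)=\frac{b}{k}e^{b}(1-b/k)^{k-1}$, so $G'(0)=0$ and $G(b)=\Theta(b^2/k)$ near the origin; a lower bound of the form $\frac{\eta b}{2k}$, which has strictly positive slope at $b=0$, therefore cannot hold uniformly for small $b$. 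The crux I would have to resolve is thus to sharpen the convexity estimate so as to replace $b^2$ by $b$ for \emph{every} $b\in[0,1]$; I expect this to be possible only once $b$ is bounded away from $0$ (quantitatively $b\ge\eta$), which in particular covers the principal application in Corollary~\ref{appr2-cover-corol}, where $\alpha_L=\alpha_R=\delta_L=\delta_R=0$ forces $\beta=1$ and the two correction terms coincide.
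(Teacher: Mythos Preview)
Your approach is identical to the paper's: derive part~\ref{secon-help-app-lemma} from part~\ref{first-help-app-lemma}, use the Taylor series of $\ln(1-x)$ to obtain $(1-b/k)^k \le e^{-b}e^{-b^2/(2k)}$, and then apply the linear bound $e^{-y}\le 1-\eta y$ on $[0,\tfrac14]$. Where you and the paper diverge is at the crucial step: the paper writes
\[
e^{-b-\frac{b^2}{2k}-\frac{b^3}{3k^2}-\cdots}\;\le\; e^{-b-\frac{b}{2k}},
\]
silently replacing $b^2/(2k)$ by $b/(2k)$ in the exponent. This would require $\frac{b^2}{2k}+\frac{b^3}{3k^2}+\cdots\ge\frac{b}{2k}$, which fails for small $b$. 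Your diagnosis via $G(b)=1-e^{b}(1-b/k)^k$ with $G(0)=G'(0)=0$ is exactly right and shows the lemma \emph{as stated} is false for all sufficiently small positive $b$; a direct check at $b=0.01$, $k=2$ already gives $(1-b/k)^k\approx 0.99003$ while $e^{-b}-\frac{\eta b}{2e^{b}k}\approx 0.98812$.

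So the gap you flagged is real and lies in the paper's proof, not in your reasoning. The honest inequality is the one you derived, $(1-b/k)^k\le e^{-b}\bigl(1-\frac{\eta b^2}{2k}\bigr)$, and this weaker form is entirely adequate for the downstream uses: in Theorems~\ref{appr-cover-theorem} and~\ref{appr2-cover-theorem} the hypothesis $\alpha_L,\alpha_R,\delta_L,\delta_R\in[0,1-\rho]$ forces $b=\beta\ge\rho^2/(2-\rho)^2$, a positive constant, so replacing $\frac{\eta\beta}{4e^{\beta}k}$ by $\frac{\eta\beta^2}{4e^{\beta}k}$ in the choice of $\xi$ affects only constants hidden in the $\bigO(\cdot)$ bounds. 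For Corollary~\ref{appr2-cover-corol}, where $\beta=1$, the two versions coincide, as you noted.
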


\ifdefined\RegularVersion

\begin{proof}
Let function $f(x)=1-\eta x-e^{-x}$. We have $f(0)=0$. Taking
differentiation, we get ${d f(x)\over dx}=-\eta +e^{-x}>0$ for all
$x\in (0,{1\over 4})$.

Therefore, for all $x\in (0,{1\over 4})$,
\begin{eqnarray}
e^{-x}\le 1-\eta x.\label{exp-expansion-ineqn}
\end{eqnarray}
The following Taylor expansion can be found in standard calculus
textbooks. For all $x\in (0,1)$,
\begin{eqnarray}
\ln (1-x)=-x-{x^2\over 2}-{x^3\over 3}-\cdots .
\end{eqnarray}

Therefore, we have
\begin{eqnarray}
(1-{b\over k})^k&=&e^{k\ln (1-{b\over k})}=e^{k(-{b\over
k}-{b^2\over 2k^2}-{b^3\over 3k^3}-\cdots)}
=e^{-b-{b^2\over 2k}-{b^3\over 3k^2}-\cdots}\\
&\le&e^{-b-{b\over 2k}}=e^{-b}\cdot e^{-{b\over 2k}}\label{maximal-cov-ineqn1}\\
&\le& e^{-b}\cdot (1-\eta \cdot {b\over 2k})\le{1\over e^{b}}-{\eta
b \over 2e^{b}k}\label{maximal-cov-ineqn2}.
\end{eqnarray}
Note that the transition from~(\ref{maximal-cov-ineqn1})
to~(\ref{maximal-cov-ineqn2}) is based on
inequality~(\ref{exp-expansion-ineqn}).

The part~\ref{secon-help-app-lemma} follows from
part~\ref{first-help-app-lemma}. This is because $1-(1-{b\over
k})^k-\xi\ge 1-{1\over e^{b}}+{\eta b \over 2e^{b}k}-\xi\ge
1-{1\over e^{b}}+{\eta b \over 4e^{b}k}$.
\end{proof}

\fi

\subsection{Multiple Rounds Random Sampling for Maximum
Coverage}\label{multi-round-sec}

Theorem~\ref{appr-cover-theorem} gives the performance of our
randomized greedy approximation algorithm for the maximum coverage
problem. It uses multiple rounds of random samplings since there is
a series of executions for calling ApproximateDifference(.) via
ApproximateSetDifferenceSize$(.)$, which is given at Virtual
Function Implementation 1. This shows maximum coverage problem has a
$\setCount\poly(k)$ time $(1-{1\over e})$-approximation for
$((0,0),(0,0))$-list as input (see Definition~\ref{app-list-def})
under the model that each input set $A_i$ provides $\bigO(1)$-time
to generate a random sample and answer one membership query.

\vskip 10pt

\addtocounter{problem-counter}{1}

 {\bf Algorithm~{\arabic{problem-counter} }: Virtual Function Implementation 1}

The parameters $L', A_i,s_i, R_i, \epsilon', \gamma, k,\setCount$
follow from those in Algorithm~{\arabic{problem-counter-max-cover}}.

\qquad RandomSamples$(A_i, R_i, \xi,\gamma,k,\setCount)$;

\qquad \{

\qquad \qquad Let $R_i=\emptyset$;

\qquad \}

\vskip 10pt

\qquad RandomTest(.)
\{ the same as that defined at Algorithm~{\arabic{RandomTest}}  \}

\vskip 10pt

\qquad ApproximateSetDifferenceSize$(L', A_i,s_i, R_i, \epsilon',
\gamma, k,\setCount)$;

\qquad \{

\qquad \qquad Let $\gamma'={\gamma\over 4k\setCount}$;

\qquad \qquad Return ApproximateDifference$(L', A_i, s_i,
R_i,\epsilon', \gamma')$;

\qquad \}

\vskip 10pt

\qquad  ProcessSet($A_i)$ \{   \} (Do nothing to set $A_i$)

\vskip 10pt

{\bf End of Algorithm}

\vskip 10pt

Lemma~\ref{appr-cover-complexity-lemma} gives the complexity of the
ApproximateMaximumCover(.) using multiple rounds of random samplings
from the input sets. It also gives a small failure probability of
the algorithm. Its complexity will be improved and shown at
Lemma~\ref{improve-ramdom-sample-complexity-lemma} in
Section~\ref{improved-sect}.

\begin{lemma}\label{appr-cover-complexity-lemma}
Let $\xi\in (0,1)$, and $\alpha_L,\alpha_R,\delta_L,\delta_R\in
[0,1)$. Assume that the algorithm ApproximateMaximumCover(.) is
executed with Virtual Function Implementation 1. Let $L'_*$ be the
list $L'$ after the completion of ApproximateMaximumCover(.)
Then
\begin{enumerate}[1.]
\item
With probability at most $\gamma$, there is a value $s_{i,j}$ in the
line~\ref{appr-diff-line} of the algorithm
ApproximateMaximumCover(.) does not satisfies
$$(1-\alpha_L)(1-\delta_L)|A_i-U(L'_*(j))|-\epsilon'|A_i|\le
s_{i,j}\le (1+\alpha_R)(1+\delta_R)|A_i-U(L'_*(j))|+\epsilon'|A_i|,\
\ {\rm and}$$
\item
The algorithm has  complexity $(T_1(\xi,\gamma,k,\setCount),
R_1(\xi,\gamma,k,\setCount), Q_1(\xi,\gamma,k,\setCount)$ with
\begin{eqnarray*}
T_1(\xi,\gamma,k,\setCount)&=&k\setCount T_D({\xi\over
4k},{\gamma\over 4k\setCount},k),\\
  R_1(\xi,\gamma,k,\setCount)&=&k\setCount R_D({\xi\over 4k},{\gamma\over 4k\setCount},k), \ \ {\rm
  and} \\
  Q_1(\xi,\gamma,k,\setCount)&=&k\setCount Q_D({\xi\over 4k},{\gamma\over 4k\setCount},k),
\end{eqnarray*}
  where $(T_D(.), R_D(.), Q_D(.))$ are the complexity
  functions defined in Lemma~\ref{difference-lemma}, $\beta$ is the same as that in Lemma~\ref{appr1-cover-lemma},
   and $\epsilon'$ is the same as that in ApproximateMaximumCover(.).
\end{enumerate}
\end{lemma}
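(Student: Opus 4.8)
The plan is to derive both parts by bookkeeping on top of Lemma~\ref{difference-lemma}. Under Virtual Function Implementation~1, the subroutine ApproximateSetDifferenceSize invoked at line~\ref{appr-diff-line} does nothing but set $\gamma'=\gamma/(4k\setCount)$ and call ApproximateDifference(.) (the pre-sample list $R_i$ handed to it is empty, since RandomSamples(.) returns $\emptyset$, so this call reduces exactly to the routine analyzed in Lemma~\ref{difference-lemma}), while ProcessSet(.) is a no-op. So the first step is to count the calls to ApproximateDifference(.): the loop of lines~\ref{For-loop-start}--\ref{call-app-diff-line-max-cover2} makes $k$ passes, and in its $j$-th pass the inner loop ranges over the $\setCount-(j-1)$ sets still in $L$, for a total of $\sum_{j=1}^k(\setCount-j+1)=k\setCount-\frac{k(k-1)}{2}\le k\setCount$ calls.

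For Part~1, I would fix one such call and note that at the instant it is made the list $L'$ holds exactly the sets appended during earlier passes (line~\ref{merege-line}), so its union is the set denoted $U(L'_*(j))$ in the statement. Instantiating Lemma~\ref{difference-lemma}(2) with $B=A_i$, $s_2=s_i$, $A=U(L'_*(j))$, $\epsilon=\epsilon'=\xi/(4k)$ (line~\ref{epsilon'-line}) and $\gamma=\gamma'$ --- whose hypotheses hold because $L$ is a $((\alpha_L,\alpha_R),(\delta_L,\delta_R))$-list and $\epsilon'\in(0,1)$ --- shows that the returned value $s_{i,j}$ violates inequality~(\ref{set-diff-approx-condition}), which is verbatim the displayed inequality of Part~1, with probability at most $\gamma'$. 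Summing over the at most $k\setCount$ calls via inequality~(\ref{prob-Add-Ineqn}) gives total failure probability at most $k\setCount\cdot\gamma/(4k\setCount)=\gamma/4<\gamma$.

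For Part~2, I would just tally the work. By Lemma~\ref{difference-lemma}(3) one execution of ApproximateDifference(.) on a list of $u$ sets costs $(T_D(\epsilon',\gamma',u),R_D(\epsilon',\gamma',u),Q_D(\epsilon',\gamma',u))$; here $u=|L'|\le k$, and since $T_D$ and $Q_D$ are nondecreasing in $u$ while $R_D$ does not depend on $u$, each call stays within $(T_D(\xi/(4k),\gamma/(4k\setCount),k),R_D(\xi/(4k),\gamma/(4k\setCount),k),Q_D(\xi/(4k),\gamma/(4k\setCount),k))$. Multiplying by the $\le k\setCount$ calls, and observing that every remaining operation --- the empty RandomSamples(.) calls, the no-op ProcessSet(.) calls, the Append, and the updates of $H$ and $z$ --- contributes only $\bigO(k\setCount)$ extra steps and no samples or queries, yields exactly the stated $T_1,R_1,Q_1$.

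The only point that needs real care is the adaptivity hidden in Part~1: the set $U(L'_*(j))$ whose uncovered fraction is estimated at each call is itself a random object determined by the outcomes of the earlier calls, so one cannot simply assert that the whole family of estimates is simultaneously accurate. The fix is routine: Lemma~\ref{difference-lemma} is valid for \emph{every} fixed input list, so conditioned on whatever the earlier passes produced, the current call still fails with probability at most $\gamma'$; iterating this through the at most $k\setCount$ calls and applying the union bound~(\ref{prob-Add-Ineqn}) closes the argument. Everything else is arithmetic with the functions $T_D,R_D,Q_D$ of Lemma~\ref{difference-lemma}.
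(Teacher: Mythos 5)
Your proposal is correct and follows essentially the same route as the paper's proof: apply Lemma~\ref{difference-lemma} to each of the at most $k\setCount$ calls to ApproximateDifference(.) with $\epsilon'=\xi/(4k)$ and $\gamma'=\gamma/(4k\setCount)$, union-bound the failure probabilities to get $\gamma/4<\gamma$, and multiply the per-call complexity by the number of calls. Your explicit handling of the adaptivity of $U(L'_*(j))$ via conditioning is a point the paper passes over silently, and is a welcome (though not route-changing) addition.
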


\ifdefined\RegularVersion

\begin{proof}
By Lemma~\ref{difference-lemma},  for each $s_{i,j}$
we have
\begin{eqnarray}
(1-\alpha_L)(1-\delta_L)|A_i-U(L'_*(j))|-\epsilon'|A_i|\le
s_{i,j}\le (1+\alpha_R)(1+\delta_R)|A_i-U(L'_*(j))|+\epsilon'|A_i|
\end{eqnarray}
 with probability at most $\gamma'$ (defined in Virtual Function Implementation 1 ) to fail. The total  probability
 that one of the $k\setCount$ cases fails
 is
at most $P_1=k\setCount\gamma'=k\setCount\cdot {\gamma\over
4k\setCount}\le {\gamma\over 4}$.

By Lemma~\ref{difference-lemma}, the function
ApproximateDifference($.$) at
line~\ref{call-app-diff-line-max-cover} has complexity
\begin{eqnarray*}
(T_D(\epsilon',\gamma',k), R_D(\epsilon',\gamma',k),
Q_D(\epsilon',\gamma',k)).
\end{eqnarray*}


The algorithm's complexity is
$(T_1(\xi,\epsilon,\gamma,k,\setCount),
R_1(\xi,\epsilon,\gamma,k,\setCount),
Q_1(\xi,\epsilon,\gamma,k,\setCount))$ with
\begin{eqnarray*}
 T_1(\xi,\epsilon,\gamma,k,\setCount)&=&k\setCount T_D(\epsilon',\gamma',k)=k\setCount T_D({\xi\over
4k},{\gamma\over 4k\setCount},k),\\
  R_1(\xi,\epsilon,\gamma,k,\setCount)&=&k\setCount R_D(\epsilon',\gamma',k)=k\setCount R_D({\xi\over 4k},{\gamma\over 4k\setCount},k), \ \ {\rm and}\\
  Q_1(\xi,\epsilon, \gamma,k,\setCount)&=&k\setCount Q_D(\epsilon',\gamma',k)=k \setCount Q_D({\xi\over 4k},{\gamma\over 4k\setCount},k).
\end{eqnarray*}
\end{proof}

\fi

\begin{theorem}\label{appr-cover-theorem}
Let $\rho$ be a constant in $(0,1)$. For  parameters
$\epsilon,\gamma\in (0,1)$ and
$\alpha_L,\alpha_R,\delta_L,\delta_R\in [0,1-\rho]$, there is an
algorithm to give a $(1-{1\over e^{\beta}})$-approximation for the
maximum cover problem ($\beta$ is defined in
Lemma~\ref{appr1-cover-lemma}), such that given a
$((\alpha_L,\alpha_R),(\delta_L,\delta_R))$-list $L$ of finite sets
$A_1,\cdots, A_{\setCount}$  and an integer $k$,  with probability
at least $1-\gamma$, it returns an integer $z$ and a subset
$H\subseteq \{1,2,\cdots, \setCount\}$ that satisfy
\begin{enumerate}[1.]
\item
$|\cup_{j\in H}A_j|\ge (1-{1\over e^{\beta}})C^*(L,k)$  and $|H|=k$,
\item
$((1-\alpha_L)(1-\delta_L)-\epsilon)|\cup_{j\in H}A_j|\le z\le
((1+\alpha_R)(1+\delta_R)+\epsilon)|\cup_{j\in H}A_j|$, and
\item
Its complexity is  $(T_C(\epsilon,\gamma,k,\setCount),
R_C(\epsilon,\gamma,k,\setCount), Q_C(\epsilon,\gamma,k,\setCount))$
where
\begin{eqnarray*}
T_C(\epsilon,\gamma,k,\setCount)&=&\bigO({k^6\setCount\over \epsilon^2}(\log \setCount+\log{1\over \gamma})),\\
  R_C(\epsilon,\gamma,k,\setCount)&=&\bigO({k^5\setCount\over \epsilon^2}(\log \setCount+\log{1\over
\gamma})), \ \ {\rm
  and} \\
  Q_C(\epsilon,\gamma,k,\setCount)&=&\bigO({k^6\setCount\over \epsilon^2}(\log \setCount+\log{1\over
\gamma})).
\end{eqnarray*}
\end{enumerate}
\end{theorem}

\ifdefined\RegularVersion

\begin{proof}
Select $\xi=\min({\epsilon\eta\beta \over 4e^{\beta}k},\epsilon)$,
where $\eta$ is defined in Lemma~\ref{help-app-lemma}. The accuracy
of approximation follows from Lemma~\ref{help-app-lemma},
Lemma~\ref{appr1-cover-lemma} and
Lemma~\ref{appr-cover-complexity-lemma}. The complexity follows from
the complexity functions $T_D(.), R_D(.),$ and $Q_D(.)$ in
Lemma~\ref{difference-lemma}. Since
$T_D(\epsilon,\gamma,k)=\bigO({k\over \epsilon^2}\log {1\over
\gamma})$, we have
\begin{eqnarray}
T_D({\xi \over 4k},{\gamma\over
4k\setCount},k)&=&\bigO({k^3\over\xi^2}\log
{4k\setCount\over \gamma})\\
&=&\bigO({k^5e^{2\beta}\over \epsilon^2\beta^2}(\log
\setCount+\log{1\over
\gamma}))\\
&=&\bigO({k^5\over \epsilon^2}(\log \setCount+\log{1\over \gamma})).
\end{eqnarray}

Similarly,
\begin{eqnarray}
R_D({\xi \over 4k},{\gamma\over 4k\setCount},k)&=&\bigO({k^4\over
\epsilon^2}(\log \setCount+\log{1\over \gamma}))\ \ \ {\rm and}\\
Q_D({\xi \over 4k},{\gamma\over 4k\setCount},k)&=&\bigO({k^5\over
\epsilon^2}(\log \setCount+\log{1\over \gamma})).
\end{eqnarray}

Thus,
\begin{eqnarray*}
T_C(\epsilon,\gamma,k,\setCount)&=&k\setCount T_D({\xi \over
4k},{\gamma\over
4k\setCount},k)=\bigO({k^6\setCount\over \epsilon^2}(\log \setCount+\log{1\over \gamma})),\\
  R_C(\epsilon,\gamma,k,\setCount)&=&k\setCount R_D({\xi \over 4k},{\gamma\over 4k\setCount},k)=\bigO({k^5\setCount\over
\epsilon^2}(\log \setCount+\log{1\over \gamma})), \ \ {\rm
  and} \\
  Q_C(\epsilon,\gamma,k,\setCount)&=&k\setCount Q_D({\xi \over 4k},{\gamma\over 4k\setCount},k)=\bigO({k^6\setCount\over
\epsilon^2}(\log \setCount+\log{1\over \gamma})).
\end{eqnarray*}
\end{proof}

\fi

\ifdefined\RegularVersion

\else
\let\clearpage\relax
\include{NonregularEnd}

\section{Technical Proofs for Section~\ref{union-sec}}

In this section, we give the proofs for the Lemmas and Theorems in
Section~\ref{union-sec}.

\subsection{Proof for Lemma~\ref{difference-lemma}}

\include{ProofForDifferenceLemma}

\subsection{Proof for Lemma~\ref{main-lemma}}
\include{ProofForMainLemma}

\subsection{Proof for Lemma~\ref{cover-lemma}}
\include{ProofForCoverLemma}

\subsection{Proof for Lemma~\ref{appr1-cover-lemma}}
\include{ProofForApp1CoverLemma}

\subsection{Proof for Lemma~\ref{help-app-lemma}}
\include{ProofForHelpAppLemma}

\subsection{Proof for Lemma~\ref{appr-cover-complexity-lemma}}
\include{ProofForApprCoverComplexityLemma}

\subsection{Proof for Theorem~\ref{appr-cover-theorem}}
\include{ProofForApprCoverTheorem}

\fi

\section{Faster Algorithm for Maximum
Coverage}\label{improved-sect}

In this section, we describe an improved approximation algorithm for
the maximum coverage problem. It has slightly less time and keeps
the same approximation ratio. We showed the multi-round random
sampling approach in Section~\ref{multi-round-sec} with
Theorem~\ref{appr-cover-theorem}. A single round random sampling
approach is given in this section with an improved time complexity.
It may help us under
how two different approaches affect the algorithm complexity.


In this section, we control the total number of random samples from
each set. The random samples from each set $S_i$ will be generated
in the beginning of algorithm. We show how to remove the samples
that are already in the selected sets saved in the list $L'$ of the
algorithm for ApproximateMaximumCover(.).

\newcommand{\sufficient}{{\rm sufficient random bound}}

\begin{definition}\label{shared-sample-def}
Assume that $\epsilon, \gamma\in (0,1)$ and
$k,\setCount\in\integerSet $. Let $L$ be a list of sets
$A_1,A_2,\cdots, A_{\setCount}$.
\begin{itemize}
\item
Define $h^*(k,\setCount)$ to be the number of subsets of size at
most $k$ in $\{1,2,\cdots, \setCount\}$.
\item
Define $\gamma_{k,\setCount}={\gamma\over nh^*(k,\setCount)}$.
\item
Define $g(\epsilon,
\gamma,k,\setCount)=R_D(\epsilon,\gamma_{k,\setCount },k)$, where
$R_D(.)$ is defined in Lemma~\ref{difference-lemma}.

\end{itemize}
\end{definition}



\begin{lemma}\label{improve-ramdom-sample-lemma}Assume parameters $\epsilon, \gamma,\alpha_L,\alpha_R,\delta_L,\delta_R\in (0,1)$ and
$k,\setCount\in\integerSet $. Let function $g(\epsilon,
\gamma,k,\setCount)$ be defined as in
Definition~\ref{shared-sample-def}. Let $L$ be a list of sets
$A_1,A_2,\cdots, A_{\setCount}$ such that each $A_i$ has a
$(\alpha_L,\alpha_R)$-biased random generator $\randomElm(A_i)$, and
an approximation $s_j$ for $|A_i|$ with $(1-\delta_L)|A_i|\le s_j\le
(1+\delta_R)|A_i|$.
\begin{enumerate}[1.]
\item\label{s2-lem} The function $g(.)$ has
$g(\epsilon,\gamma, k,\setCount)=\bigO({1\over \epsilon^2}(k\log
({3\setCount\over k})+\log {1\over \gamma}))$.
\item\label{s1-lem}
Let $R_i$ be a list of $w=g(\epsilon, \gamma,k,\setCount)$ random
samples from each set $A_i$ in the input list via the
$(\alpha_L,\alpha_R)$-biased generator $\randomElm(A_i)$, then with
failure probability at most $\gamma$, the value $s={r_{i,j}\over
w}\cdot s_i$ with $r_{i,j}=\test(L^*,R_i)$ (see
Definition~\ref{test-function-def})
satisfies the inequality~(\ref{set2-diff-approx-condition})
\begin{eqnarray}
(1-\alpha_L)(1-\delta_L)|A_i-A|-\epsilon|A_i|\le s\le
(1+\alpha_R)(1+\delta_R)|A_i-A|+\epsilon|A_i|,\label{set2-diff-approx-condition}
\end{eqnarray}
for every sublist $L^*=A_{t_1}, A_{t_2},\cdots, A_{t_j}$ with $j\le
k$ of $L$, where $A=A_{t_1}\cup A_{t_2}\cup\cdots\cup A_{t_j}$.

\end{enumerate}
\end{lemma}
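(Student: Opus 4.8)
The plan is to prove the two parts in order. For part~\ref{s2-lem}, I would unfold the definition of $g$: by Definition~\ref{shared-sample-def}, $g(\epsilon,\gamma,k,\setCount)=R_D(\epsilon,\gamma_{k,\setCount},k)$ with $\gamma_{k,\setCount}=\gamma/(n\,h^*(k,\setCount))$, and by Lemma~\ref{difference-lemma} we have $R_D(\epsilon,\gamma',u)=\bigO\!\big(\tfrac{1}{\epsilon^2}\log\tfrac1{\gamma'}\big)$. So the whole task is to bound $\log\tfrac{1}{\gamma_{k,\setCount}}=\log\setCount+\log h^*(k,\setCount)$. Here $h^*(k,\setCount)$ counts subsets of $\{1,\dots,\setCount\}$ of size at most $k$, which is $\sum_{i=0}^{k}\binom{\setCount}{i}\le (k+1)\binom{\setCount}{k}\le (k+1)\big(\tfrac{e\setCount}{k}\big)^k$, so $\log h^*(k,\setCount)=\bigO\!\big(k\log\tfrac{3\setCount}{k}\big)$ (absorbing $\log(k+1)$ and the $\log e$ into the constant; note $\log\setCount$ itself is $\bigO(k\log\tfrac{3\setCount}{k})$ as well since $k\ge1$). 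Substituting gives $g(\epsilon,\gamma,k,\setCount)=\bigO\!\big(\tfrac{1}{\epsilon^2}(k\log\tfrac{3\setCount}{k}+\log\tfrac1\gamma)\big)$ as claimed. This part is routine binomial-coefficient estimation.

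For part~\ref{s1-lem}, the key idea is a union bound over all possible sublists $L^*$. Fix one set $A_i$ and one sublist $L^*=A_{t_1},\dots,A_{t_j}$ with $j\le k$, and let $A=A_{t_1}\cup\cdots\cup A_{t_j}$. The list $R_i$ consists of $w=g(\epsilon,\gamma,k,\setCount)$ i.i.d.\ draws from the $(\alpha_L,\alpha_R)$-biased generator $\randomElm(A_i)$; the quantity $r_{i,j}=\test(L^*,R_i)$ counts how many of these land outside $A$. This is exactly the setting of Lemma~\ref{difference-lemma}(1) with $B=A_i$, with the role of its list being $L^*$, and with confidence parameter $\gamma_{k,\setCount}$: since $w=R_D(\epsilon,\gamma_{k,\setCount},k)$ is chosen so that $\mu(\epsilon/3)^w\le\gamma_{k,\setCount}/4$, Lemma~\ref{difference-lemma}(1) says that with probability at most $\gamma_{k,\setCount}$ the value $s=\tfrac{r_{i,j}}{w}s_i$ fails inequality~(\ref{set2-diff-approx-condition}) for this particular $(i,L^*)$.

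Now I would take the union bound over all $(i,L^*)$ pairs. The number of choices for $i$ is $\setCount$, and the number of sublists $L^*$ of size at most $k$ is $h^*(k,\setCount)$ (the ordering of a sublist does not change the union $A$, so counting subsets suffices — or one can just as safely overcount by orderings and still absorb it, but subsets is exactly what $h^*$ gives). Hence the total failure probability is at most $\setCount\cdot h^*(k,\setCount)\cdot\gamma_{k,\setCount}=\setCount\cdot h^*(k,\setCount)\cdot\tfrac{\gamma}{\setCount\, h^*(k,\setCount)}=\gamma$, using inequality~(\ref{prob-Add-Ineqn}). That is exactly the claimed bound. The main obstacle here — and the only substantive point — is making sure the union bound is taken over a large enough family: the guarantee must hold simultaneously for \emph{every} sublist $L^*$ that could arise during the execution of ApproximateMaximumCover(.), because the sublist built by the greedy algorithm is itself random and depends on the samples $R_i$; decoupling this dependence is precisely why we quantify over all $\le h^*(k,\setCount)$ possible sublists up front, so that the event "(\ref{set2-diff-approx-condition}) holds for all $i$ and all $L^*$'' is a fixed high-probability event under which the algorithm's actual trajectory is automatically covered. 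Once that is set up, the numbers match by the definition of $\gamma_{k,\setCount}$, and nothing further is needed.
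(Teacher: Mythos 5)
Your proposal is correct and follows essentially the same route as the paper: part~\ref{s2-lem} by estimating $\log h^*(k,\setCount)$ via ${\setCount\choose k}=\bigO((e\setCount/k)^k)$, and part~\ref{s1-lem} by applying Lemma~\ref{difference-lemma} with confidence parameter $\gamma_{k,\setCount}$ to each fixed pair $(i,L^*)$ and then union-bounding over the $\setCount\cdot h^*(k,\setCount)$ such pairs, exactly as the paper does; you also correctly articulate why the quantification over \emph{all} sublists is the essential point (the greedy trajectory depends on the samples $R_i$). One small repair in part~\ref{s2-lem}: the inequality $\sum_{i=0}^{k}{\setCount\choose i}\le (k+1){\setCount\choose k}$ is false when $k>\setCount/2$ (for $\setCount=k=4$ it reads $16\le 5$), because the largest term of the sum is then ${\setCount\choose \lfloor\setCount/2\rfloor}$ rather than ${\setCount\choose k}$; the paper handles this by splitting into the cases $k\le\setCount/2$ and $k>\setCount/2$, using $h^*(k,\setCount)\le 2^{\setCount}\le 4^{k}$ in the second case, which still yields $\log h^*(k,\setCount)=\bigO(k\log(3\setCount/k))$ and leaves your conclusion intact.
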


\begin{proof}Let $A_1,\cdots, A_{\setCount}$ be the input list of sets, and $k$ be the integer parameter in the input.
 Let $V_k$ be the class of subsets from $\{1,2,\cdots, \setCount\}$ with
size at most $k$. In other words, $V_k=\{H: H\subseteq \{1,2,\cdots,
\setCount\}\ {\rm and}\ |H|\le k\}$. Thus, we have
$|V_k|=h^*(k,\setCount)=\sum_{i=0}^k{\setCount\choose i}$. Let
$h(k,\setCount)=k{\setCount\choose
k}$ if $k\le {\setCount\over 2}$, and $2^{\setCount}$ 
otherwise. Clearly, $h^*(k,\setCount)\le h(k,\setCount)$. By the
classical Stirling formula $k!\sim \sqrt{2\pi k}\cdot ({k\over
e})^k$, we have

\begin{eqnarray}
{\setCount\choose k}\le {\setCount^k\over k!}=\bigO(
({e\setCount\over k})^k).\label{stirling-ineqn}
\end{eqnarray}

Let $\gamma_{k,\setCount }$ be given as
Definition~\ref{shared-sample-def}. There are two cases to be
discussed.

Case 1: $1\le k\le {\setCount\over 2}$. By inequality
(\ref{stirling-ineqn}), we have
\begin{eqnarray}\log
{1\over \gamma_{k,\setCount}}&=&\log {\setCount
h^*(k,\setCount)\over \gamma}\le \log
{\setCount h(k,\setCount)\over \gamma}\\
&=&O(k\log({3 \setCount\over k})+\log k\setCount+\log {1\over \gamma})\\
&=&O(k\log({3 \setCount\over k})+\log (k^2\cdot {3\setCount\over k})+\log {1\over \gamma})\\
&=&O(k\log({3 \setCount\over k})+2\log k+\log{3\setCount\over k}+\log {1\over \gamma})\\
&=&O(k\log({3 \setCount\over k})+\log {1\over \gamma}).
\end{eqnarray}

Case 2: $n\ge k> {\setCount\over 2}$. It is trivial that
\begin{eqnarray}\log
{1\over \gamma_{k,\setCount}}&=&\log {\setCount
h^*(k,\setCount)\over \gamma}\le \log
{\setCount h(k,\setCount)\over \gamma}=\log {\setCount 2^\setCount\over \gamma}\\
&=&O(k\log({3 \setCount\over k})+\log {1\over \gamma}).
\end{eqnarray}

Thus, ${1\over \gamma_{k,\setCount}}=O(k\log({3 \setCount\over
k})+\log {1\over \gamma})$ for all $1\le k\le \setCount $. Thus,
\begin{eqnarray}
g(\epsilon, \gamma,k,\setCount)=R_D(\epsilon,\gamma_{k,\setCount
},k)&=&\bigO({1\over \epsilon^2}\log {1\over \gamma_{k,\setCount
}})=\bigO({1\over \epsilon^2}(k\log ({3\setCount\over k})+\log
{1\over \gamma})).
\end{eqnarray}

By Lemma~\ref{difference-lemma}, with $g(\epsilon,
\gamma,k,\setCount)=R_D(\epsilon,\gamma_{k,\setCount},k)$ random
samples from each set $A_i$ from each set $A_i$, the probability
that one of at most $\setCount h^*(k,\setCount)$ cases fails to
satisfy inequality (\ref{set2-diff-approx-condition}) is at most
$\setCount h^*(k,\setCount)\cdot \gamma_{k, \setCount}\le \gamma$ by
inequality~(\ref{prob-Add-Ineqn}).
\end{proof}

The random samples from each input set $A_i$ are collected in the
beginning of the algorithm of ApproximateMaximumCoverage(.), and are
stored in the list $R_i$. Virtual Function Implementation 2 gives
such an consideration.

\vskip 10pt

\addtocounter{problem-counter}{1}

 {\bf Algorithm~{\arabic{problem-counter} }: Virtual Function Implementation 2}

The parameters $L', A_i,s_i, R_i, \epsilon', \gamma, k,\setCount$
follow from those in Algorithm~{\arabic{problem-counter-max-cover}}.

\qquad RandomSamples$(A_i, R_i, \xi,\gamma,k,\setCount)$

\qquad \{

\qquad \qquad Generate a list $R_i$ of $g(\epsilon,
\gamma,k,\setCount)$ random samples of $A_i$;

\qquad\qquad Mark all elements of $R_i$ as white.

\qquad \}

\vskip 10pt

\qquad RandomTest($L', A_i, w$)

\qquad \{



\qquad \qquad Let $A_{t_j}$ be the newly picked set saved in $L'$
($L'=A_{t_1},A_{t_2},\cdots, A_{t_j}$);

\qquad \qquad For each  $a$ in the list $R_i$ of random samples from
$A_i$

\qquad \qquad \{

\qquad \qquad\qquad If $a\in A_{t_j}$ then mark $a$ as black in
$R_i$;

\qquad \qquad \}

\qquad\qquad Let $r_{i,j}$ be  the number of white items in $R_i$
(it may have multiplicity);

\qquad\qquad return  $r_{i,j}$;

\qquad \}



 \vskip 10pt

\qquad ApproximateSetDifferenceSize$(L', A_i, R_i,s_i, \epsilon',
\gamma, k,\setCount)$

\qquad \{

\qquad \qquad Let $w=g(\epsilon,\gamma, k,\setCount)$;

\qquad \qquad Let $r_{i,j}=$RandomTest($L', A_i, w$);

\qquad \qquad Return $s={r_{i,j}\over w}\cdot s_i$ ;

\qquad \}

\vskip 10pt

\qquad  ProcessSet($A_i)$ \{  \} (Do nothing to set $A_i$)

\vskip 10pt

{\bf End of Algorithm}

\vskip 10pt

Lemma~\ref{improve-ramdom-sample-lemma} shows approximation for
maximum coverage is possible via one round random samplings from
input sets. It shows how to control the number of random samples
from each input set to guarantee small failure probability of the
approximation algorithm. It slightly reduces the complexity by
multiple rounds of random samplings described in
Theorem~\ref{appr-cover-theorem}.

Lemma~\ref{improve-ramdom-sample-complexity-lemma} gives the
complexity for the algorithm with Virtual Function Implementation 2.

\begin{lemma}\label{improve-ramdom-sample-complexity-lemma}Assume that $\epsilon, \gamma\in (0,1)$ and
$k,\setCount\in\integerSet $. Let $k$ be an integer parameter and
$L$ be a list of sets $A_1,A_2,\cdots, A_{\setCount}$ for a maximum
coverage problem. Let $L'_*$ be the sublist $L'$ after  algorithm
ApproximateMaximumCover(.) is completed. Assume that the algorithm
uses Virtual Function Implementation 2
\begin{enumerate}[1.]
\item\label{part1-improve-ramdom-sample-complexity-lemma}
After the $j$-th iteration of the loop from
line~\ref{For-loop-start} to
line~\ref{call-app-diff-line-max-cover2} of
 the algorithm ApproximateMaximumCover(.), the returned
value $r_{i,j}$ from RandomTest($L', A_i, w$) is equal to
$\test(L_*'(j), R_i)$ (see Definition~\ref{test-function-def}).


\item
The algorithm with Virtual Function Implementation 2 has
complexity\\
$(T_2(\xi, \gamma,k, \setCount), R_2(\xi, \gamma,k, \setCount),
Q_2(\xi, \gamma,k, \setCount))$ such that for each set $A_i$, it
maintains $t_{i,j}=|R_i-U(L'_*(j))|$, where
\begin{eqnarray}
T_2(\xi, \gamma,k, \setCount)&=&\bigO({k^3\over \xi^2}(k\log
{3\setCount\over k}+\log
{1\over \gamma})\setCount),\\
R_2(\xi, \gamma,k, \setCount)&=&\bigO({k^2\over \xi^2}(k\log
{3\setCount\over k}+\log
{1\over \gamma})\setCount),  \ \ \ {\rm and}\\
Q_2(\xi, \gamma,k, \setCount)&=&\bigO({k^3\over \xi^2}(k\log
{3\setCount\over k}+\log {1\over \gamma})\setCount),
\end{eqnarray}
where
$R_i$ is the list of random samples from $A_i$ and defined in
Lemma~\ref{improve-ramdom-sample-lemma}.
\end{enumerate}
\end{lemma}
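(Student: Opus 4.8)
The plan is to treat the two claims of the lemma separately. The first claim --- that the count $r_{i,j}$ returned by RandomTest in the $j$-th round equals $\test(L'_*(j),R_i)$ --- is a monotone-marking invariant that I would prove by induction on the outer-loop index $j$ (the loop of lines~\ref{For-loop-start}--\ref{call-app-diff-line-max-cover2}). The second claim --- the complexity bounds --- then follows by a direct accounting, once one substitutes the per-set sample size $w = g(\epsilon',\gamma,k,\setCount)$ with $\epsilon' = \xi/(4k)$ (line~\ref{epsilon'-line}) and invokes the estimate for $g$ from part~\ref{s2-lem} of Lemma~\ref{improve-ramdom-sample-lemma}.

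For the first claim I would fix an input set $A_i$ and follow the state of its sample list $R_i$ through the (at most $k$) RandomTest calls issued on behalf of $A_i$. The invariant I would carry is: immediately after the RandomTest call of round $j$, a sample occurrence $a\in R_i$ is coloured black exactly when $a\in U(L'_*(j))$, so that the reported number $r_{i,j}$ of white occurrences equals $|R_i - U(L'_*(j))|$ counted with multiplicity, which is precisely $\test(L'_*(j),R_i)$ of Definition~\ref{test-function-def} and the quantity $t_{i,j}$ named in the statement. The base case is immediate: RandomSamples creates $R_i$ entirely white, matching $L'_*(0)=\emptyset$. In the inductive step the only change effected by RandomTest in round $j$ is to blacken those occurrences lying in the single newly selected set that was appended to $L'$ at line~\ref{merege-line}; since blackening is idempotent and is never undone, the black set grows from $\{a\in R_i : a\in U(L'_*(j-1))\}$ to its union with that set, i.e.\ to $\{a\in R_i : a\in U(L'_*(j))\}$. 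Establishing this invariant simultaneously discharges the clause of the second claim asserting that for each $A_i$ the algorithm maintains $t_{i,j}=|R_i - U(L'_*(j))|$.

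The complexity accounting is then mechanical. RandomSamples is invoked once for each of the $\setCount$ input sets, each invocation drawing $w = g(\epsilon',\gamma,k,\setCount)$ elements; by part~\ref{s2-lem} of Lemma~\ref{improve-ramdom-sample-lemma} together with $\epsilon' = \xi/(4k)$ we get $w = \bigO(\frac{1}{(\epsilon')^2}(k\log\frac{3\setCount}{k}+\log\frac{1}{\gamma})) = \bigO(\frac{k^2}{\xi^2}(k\log\frac{3\setCount}{k}+\log\frac{1}{\gamma}))$, and, since this is the single-round scheme, no further samples are ever generated, so $R_2 = \setCount\cdot w$, which gives the stated bound. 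Each RandomTest call scans $R_i$ once, spending $\bigO(1)$ time and at most one membership query per occurrence (a running white-counter avoids a second pass), hence $\bigO(w)$ steps and at most $w$ queries per call; there are at most $k\setCount$ such calls (at most $\setCount$ surviving candidate sets over $k$ rounds), so $Q_2 = \bigO(k\setCount\,w)$, and adding the $\bigO(\setCount\,w)$ steps that generate all samples gives $T_2 = \bigO(k\setCount\,w)$ as well. Substituting the value of $w$ yields the displayed $T_2$, $R_2$, $Q_2$.

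I expect the only delicate point to be the bookkeeping for the first claim: because RandomTest re-examines only the most recently chosen set rather than the whole list $L'$, correctness hinges on the black/white marking of $R_i$ persisting between the separate RandomTest calls made for $A_i$ in consecutive rounds, and on the indexing convention by which the newly chosen $A_{t_j}$ is recorded in $L'$ before the tests of the next round are run --- idempotence of blackening is precisely what makes an inadvertent re-processing of an already-handled set harmless. Given that, everything in the second claim reduces to a routine substitution using Lemma~\ref{improve-ramdom-sample-lemma} and the value $\epsilon' = \xi/(4k)$.
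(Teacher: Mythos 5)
Your proposal matches the paper's proof: the same induction on the outer-loop index establishing that white samples in $R_i$ are exactly those outside $U(L'_*(j))$ (using that each round blackens only the occurrences lying in the newly appended set), and the same accounting $R_2=\setCount\cdot w$, $T_2=Q_2=\bigO(k\setCount w)$ with $w=g(\xi/(4k),\gamma,k,\setCount)=\bigO(\frac{k^2}{\xi^2}(k\log\frac{3\setCount}{k}+\log\frac{1}{\gamma}))$ from Lemma~\ref{improve-ramdom-sample-lemma}. No substantive differences.
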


\begin{proof}Let $A_1,\cdots, A_{\setCount}$ be the input list of sets, and $k$ be the integer parameter in the input.
 Consider $g(\epsilon',\gamma, k,\setCount)$ random (white) samples for each set as
in Lemma~\ref{improve-ramdom-sample-lemma}, where $\epsilon'$ is
defined in algorithm ApproximateMaximumCover(.). After a set
$A_{t_j}$ is added to $L'$, all the random samples in $R_i$ will be
checked if they are from $A_{t_j}$. For each white random sample $x$
in $R_i$ for all $t_j\not=i$, change $x$ to black if $x\in A_{t_j}$.
Thus, it takes $kg(\epsilon',\gamma, k,\setCount) \setCount$ time.
Count the white samples left in $R_i$ and save it in the variable
$r_{i,j}$. A simple induction can prove
Part~\ref{part1-improve-ramdom-sample-complexity-lemma}. In the
beginning all elements in $R_i$ are white. Assume that after
$j$-iterations of the loop from line~\ref{For-loop-start} to
line~\ref{call-app-diff-line-max-cover2} of
 the algorithm ApproximateMaximumCover(.), the number of white elements of $R_i$ is
$\test(L_*'(j), R_i)$. After $(j+1)$ iterations, list $L'$ has $j+1$
sets and $A_{t_{j+1}}$ as the last added. Since each white random
sample of $R_i$ in $A_{t_{j+1}}$ is changed to black for all
$t_{j+1}\not=i$, we have $r_{i,j+1}=\test(L_*'(j+1), R_i)$. Thus,
the returned value $r_{i,j}$ from RandomTest($L', A_i, w$) is equal
to $\test(L_*'(j), R_i)$


By Lemma~\ref{improve-ramdom-sample-lemma}, the algorithm has
complexity ($T_2(\xi, \gamma,\setCount), Q_2(\xi,
\gamma,\setCount),R_2(\xi, \gamma,\setCount)$) with
\begin{eqnarray}
T_2(\xi, \gamma,k, \setCount)&=&kg(\epsilon',\gamma,
k,\setCount)\setCount=\bigO({k^3\over \xi^2}(k\log ({3\setCount\over
k})+\log
{1\over \gamma})\setCount),\\
R_2(\xi, \gamma,k, \setCount)&=&g(\epsilon',\gamma,
k,\setCount)\setCount=\bigO({k^2\over
\xi^2}(k\log ({3\setCount\over k})+\log {1\over \gamma})\setCount), \ \ \ {\rm and}\\
Q_2(\xi, \gamma,k, \setCount)&=&kg(\epsilon',\gamma, k,\setCount)
\setCount =\bigO({k^3\over \xi^2}(k\log ({3\setCount\over k})+\log
{1\over \gamma})\setCount).\\
\end{eqnarray}
\end{proof}

Theorem~\ref{appr2-cover-theorem} states that the improved
approximation algorithm for the maximum coverage problem has a
reduced complexity while keeping the same approximation ratio
$(1-{1\over e})$ for $((0,0),(0,0))$-list as input (see
Definition~\ref{app-list-def}). The algorithm is based on one round
samplings from the input sets. Now we give the  proof of
Theorem~\ref{appr2-cover-theorem}.

\vskip 10pt

\begin{proof}[Theorem~\ref{appr2-cover-theorem}]. Use $g(\epsilon',\gamma, k,\setCount)$ random samples for each
set $S_i$. It follows from Lemma~\ref{improve-ramdom-sample-lemma},
and Lemma~\ref{appr1-cover-lemma}. Select
$\xi=\min({\epsilon\eta\beta \over 4e^{\beta}k},\epsilon)$,
 where $\eta$ is
defined in Lemma~\ref{help-app-lemma} and $\beta$ is the same as
that in Lemma~\ref{appr1-cover-lemma}. With the condition
$\epsilon\in (0,1)$, the accuracy of approximation follows from
Lemma~\ref{help-app-lemma}.

By Lemma~\ref{improve-ramdom-sample-complexity-lemma}, its
complexity is  $(T(\epsilon,\gamma,k,\setCount), R(\epsilon,\gamma
,k,\setCount), Q(\epsilon,\gamma ,k,\setCount))$
\begin{eqnarray*}
T(\epsilon,\gamma,k,\setCount)&=&T_2(\xi,
\gamma,k,\setCount)=\bigO({k^5\over \epsilon^2}(k\log
({3\setCount\over k})+\log
{1\over \gamma})\setCount),\\
R(\epsilon,\gamma,k,\setCount)&=&R_2(\xi,
\gamma,k,\setCount)=\bigO({k^4\over \epsilon^2}(k\log
({3\setCount\over k})+\log
{1\over \gamma})\setCount), \\
Q(\epsilon,\gamma,k,\setCount)&=&Q_2(\xi,
\gamma,k,\setCount)=\bigO({k^5\over \epsilon^2}(k\log
({3\setCount\over k})+\log {1\over \gamma})\setCount).
\end{eqnarray*}
\end{proof}

We have Theorem~\ref{appr3-cover-theorem} that gives a slightly less
approximation ratio and has a less time complexity. The function
$(1-{1\over x})^x$ is increasing for all $x\in [2,+\infty)$ and
$\lim_{x\rightarrow +\infty}(1-{1\over x})^x={1\over e}$. This
implies that $\lim_{k\rightarrow+\infty}(1-{\beta\over
k})^k=\lim_{k\rightarrow+\infty}((1-{\beta\over
k})^{k\over\beta})^{\beta}={1\over e^{\beta}}$, and
$(1-(1-{\beta\over k})^k-\xi)>(1-{1\over e^{\beta}}-\xi)$.

\begin{theorem}\label{appr3-cover-theorem} Let $\rho$ be a constant in
$(0,1)$.
For  parameters $\xi,\gamma\in (0,1)$ and
$\alpha_L,\alpha_R,\delta_L,\delta_R\in [0,1-\rho]$, there is an
algorithm to give a $(1-(1-{\beta\over k})^k-\xi)$-approximation for
the maximum cover problem, such that given a
$((\alpha_L,\alpha_R),(\delta_L,\delta_R))$-list $L$ of finite sets
 $A_1,\cdots, A_{\setCount}$ and an integer $k$,  with probability at
least $1-\gamma$, it returns an integer $z$ and a subset $H\subseteq
\{1,2,\cdots, \setCount\}$ that satisfy
\begin{enumerate}[1.]
\item
$|\cup_{j\in H}A_j|\ge (1-(1-{\beta\over k})^k-\xi)C^*(L,k)$ and
$|H|=k$,
\item
$((1-\alpha_L)(1-\delta_L)-\xi)|\cup_{j\in H}A_j|\le z\le
((1+\alpha_R)(1+\delta_R)+\xi)|\cup_{j\in H}A_j|$, and
\item
Its complexity is  $(T(\xi,\gamma,k,\setCount), R(\xi,\gamma
,k,\setCount  ), Q(\xi,\gamma ,k,\setCount))$ with
\begin{eqnarray*}
T(\xi,\gamma,k,\setCount)&=&\bigO({k^3\over \xi^2}(k\log({3
\setCount\over k})+\log
{1\over \gamma})\setCount),\\
R(\xi,\gamma,k,\setCount)&=&\bigO({k^2\over \xi^2}(k\log ({3
\setCount\over k})+\log
{1\over \gamma})\setCount),\ \ {\rm and} \\
Q(\xi,\gamma,k,\setCount)&=&\bigO({k^3\over \xi^2}(k\log({3
\setCount\over k})+\log {1\over \gamma})\setCount),
\end{eqnarray*}
where $\beta={(1-\alpha_L)(1-\delta_L)\over
(1+\alpha_R)(1+\delta_R)}$.
\end{enumerate}
\end{theorem}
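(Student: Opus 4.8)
The plan is to read Theorem~\ref{appr3-cover-theorem} directly off the single‑round sampling analysis of Section~\ref{improved-sect}, stopping one step earlier than the proof of Theorem~\ref{appr2-cover-theorem}: here we do \emph{not} apply Lemma~\ref{help-app-lemma} to push the guarantee up to $1-{1\over e^{\beta}}$, and it is exactly this omission that lets us keep $\xi$ as a free parameter instead of forcing it to shrink by a factor $\Theta(k)$. Concretely, I would run ApproximateMaximumCover$(L,k,\xi,\gamma)$ with Virtual Function Implementation~2, so that each input set $A_i$ receives one batch $R_i$ of $w=g(\epsilon',\gamma,k,\setCount)$ biased random samples at the outset, where $\epsilon'=\xi/(4k)$ as fixed in line~\ref{epsilon'-line} of the algorithm.

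First I would invoke Lemma~\ref{improve-ramdom-sample-lemma}, instantiated with $\epsilon\leftarrow\epsilon'$. Part~1 gives $g(\epsilon',\gamma,k,\setCount)=\bigO({k^2\over\xi^2}(k\log({3\setCount\over k})+\log{1\over\gamma}))$, and Part~2 says that with this many samples per set, with probability at least $1-\gamma$, every value $s_{i,j}$ computed on line~\ref{appr-diff-line} satisfies, simultaneously over all sublists $L'_*(j)$ with $j\le k$ that can occur, the two‑sided bound $(1-\alpha_L)(1-\delta_L)|A_i-U(L'_*(j))|-\epsilon'|A_i|\le s_{i,j}\le (1+\alpha_R)(1+\delta_R)|A_i-U(L'_*(j))|+\epsilon'|A_i|$. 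Part~1 of Lemma~\ref{improve-ramdom-sample-complexity-lemma} supplies the bookkeeping fact that RandomTest under Implementation~2 returns $r_{i,j}=\test(L'_*(j),R_i)$, so the hypothesis of Lemma~\ref{improve-ramdom-sample-lemma} genuinely applies to the sublists the run produces.

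Next I would feed this accuracy event into Lemma~\ref{appr1-cover-lemma}, whose hypothesis is precisely the two‑sided bound above with parameter $\epsilon'=\xi/(4k)$. Its conclusion is parts~1 and~2 of the theorem verbatim: $|H|=k$, $|\cup_{j\in H}A_j|\ge(1-(1-{\beta\over k})^k-\xi)C^*(L,k)$, and $((1-\alpha_L)(1-\delta_L)-\xi)|\cup_{j\in H}A_j|\le z\le((1+\alpha_R)(1+\delta_R)+\xi)|\cup_{j\in H}A_j|$ with $\beta={(1-\alpha_L)(1-\delta_L)\over(1+\alpha_R)(1+\delta_R)}$. For the complexity I would quote Lemma~\ref{improve-ramdom-sample-complexity-lemma}: the algorithm with Implementation~2 runs in $(T_2,R_2,Q_2)$ with $T_2=\bigO({k^3\over\xi^2}(k\log({3\setCount\over k})+\log{1\over\gamma})\setCount)$, $R_2=\bigO({k^2\over\xi^2}(k\log({3\setCount\over k})+\log{1\over\gamma})\setCount)$, and $Q_2=\bigO({k^3\over\xi^2}(k\log({3\setCount\over k})+\log{1\over\gamma})\setCount)$, which are the three displayed bounds.

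I do not expect a genuine obstacle; the work is entirely in matching hypotheses. One should check that the parameter ranges $\alpha_L,\alpha_R,\delta_L,\delta_R\in[0,1-\rho]$ make the input a valid $((\alpha_L,\alpha_R),(\delta_L,\delta_R))$-list and keep $\beta\in(0,1]$, so that the monotonicity remarks for $(1-1/x)^x$ on $[2,\infty)$ stated just before the theorem justify reporting the guarantee in the form $1-(1-{\beta\over k})^k-\xi$; and that the whole failure budget $\gamma$ is consumed by the single union bound over the $\setCount\,h^*(k,\setCount)$ set‑difference estimates from the proof of Lemma~\ref{improve-ramdom-sample-lemma}, so that no other event contributes to the failure probability. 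If anything is delicate it is only re‑confirming that the $k$ sublists $L'_*(1),\dots,L'_*(k)$ visited during the run all lie inside the family of size‑$\le k$ sublists over which Lemma~\ref{improve-ramdom-sample-lemma} already gives a uniform guarantee, which is immediate.
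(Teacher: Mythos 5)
Your proposal is correct and follows essentially the same route as the paper: the paper's proof of Theorem~\ref{appr3-cover-theorem} likewise derives the accuracy from Lemma~\ref{improve-ramdom-sample-lemma} together with Lemma~\ref{appr1-cover-lemma} (deliberately skipping Lemma~\ref{help-app-lemma}, which is only needed to convert the ratio to $1-1/e^{\beta}$ at the cost of shrinking $\xi$ by a factor of $\Theta(k)$) and quotes Lemma~\ref{improve-ramdom-sample-complexity-lemma} for the complexity. Your additional remarks about matching the sublists $L'_*(j)$ to the uniform guarantee and about the union-bound failure budget are consistent with how those lemmas are proved in the paper.
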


\begin{proof}  The accuracy of approximation follows from Lemma~\ref{improve-ramdom-sample-lemma},
and Lemma~\ref{appr1-cover-lemma}. By
Lemma~\ref{improve-ramdom-sample-complexity-lemma}, its complexity
is  $(T(\xi,\gamma,k,\setCount), R(\xi,\gamma ,k,\setCount),
Q(\xi,\gamma ,k,\setCount))$ with
\begin{eqnarray*}
T(\xi,\gamma,k,\setCount)&=&\bigO({k^3\over \xi^2}(k\log({3
\setCount\over k})+\log
{1\over \gamma})\setCount),\\
R(\xi,\gamma,k,\setCount)&=&\bigO({k^2\over \xi^2}(k\log ({3
\setCount\over k})+\log
{1\over \gamma})\setCount),\ \ {\rm and} \\
Q(\xi,\gamma,k,\setCount)&=&\bigO({k^3\over \xi^2}(k\log({3
\setCount\over k})+\log {1\over \gamma})\setCount).
\end{eqnarray*}
\end{proof}

Corollary~\ref{appr3-cover-corol} gives the case that we have exact
sizes for all input sets, and uniform random sampling for each of
them. Such an input is called $((0,0),(0,0))$-list according to
Definition~\ref{app-list-def}.

\begin{corollary}\label{appr3-cover-corol}
For parameters $\xi, \gamma\in (0,1)$, there is an algorithm to give
a $(1-(1-{1\over k})^k-\xi)$-approximation for the maximum  cover
problem, such that given a $((0,0),(0,0))$-list $L$ of finite sets
$A_1,\cdots, A_{\setCount}$ and an integer $k$,  with probability at
least $1-\gamma$, it returns an integer $z$ and a subset $H\subseteq
\{1,2,\cdots, \setCount\}$ that satisfy
\begin{enumerate}[1.]
\item
$|\cup_{j\in H}A_j|\ge (1-(1-{1\over k})^k-\xi)C^*(L,k)$ and
$|H|=k$,
\item
$(1-\xi)|\cup_{j\in H}A_j|\le z\le (1+\xi)|\cup_{j\in H}A_j|$, and
\item
Its complexity is  $(T(\xi,\gamma,k,\setCount),
R(\xi,\gamma,k,\setCount), Q(\xi,\gamma,k,\setCount))$ with
\begin{eqnarray*}
T(\xi,\gamma,k,\setCount)&=&\bigO({k^3\over \xi^2}(k\log ({3
\setCount\over k})+\log
{1\over \gamma})\setCount),\\
R(\xi,\gamma,k,\setCount)&=&\bigO({k^2\over \xi^2}(k\log ({3
\setCount\over k})+\log
{1\over \gamma})\setCount),\ \ {\rm and} \\
Q(\xi,\gamma,k,\setCount)&=&\bigO({k^3\over \xi^2}(k\log ({3
\setCount\over k})+\log {1\over \gamma})\setCount).
\end{eqnarray*}
\end{enumerate}
\end{corollary}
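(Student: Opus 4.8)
The plan is to obtain this statement as the special case of Theorem~\ref{appr3-cover-theorem} in which the input list carries exact cardinalities and perfectly uniform random generators, that is, $\alpha_L=\alpha_R=\delta_L=\delta_R=0$. First I would note that, by Definition~\ref{app-list-def}, a $((0,0),(0,0))$-list is precisely a $((\alpha_L,\alpha_R),(\delta_L,\delta_R))$-list with all four bias/accuracy parameters set to $0$; in particular the generator $\randomElm(A_i)$ satisfies $\prob(\randomElm(A_i)=y)=\frac{1}{|A_i|}$ for every $y\in A_i$, and the associated number is $s_i=|A_i|$. Since these four parameters lie in $[0,1-\rho]$ for, say, $\rho=\tfrac12$, the hypotheses of Theorem~\ref{appr3-cover-theorem} are met.

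Next I would substitute these values into the three conclusions of Theorem~\ref{appr3-cover-theorem}. Because $\beta=\frac{(1-\alpha_L)(1-\delta_L)}{(1+\alpha_R)(1+\delta_R)}=\frac{1\cdot 1}{1\cdot 1}=1$, part~1 of the theorem becomes $|\cup_{j\in H}A_j|\ge(1-(1-\tfrac1k)^k-\xi)C^*(L,k)$ with $|H|=k$, which is exactly conclusion~1 of the corollary. For conclusion~2, both multiplicative factors collapse, $(1-\alpha_L)(1-\delta_L)=1$ and $(1+\alpha_R)(1+\delta_R)=1$, so the sandwich $((1-\alpha_L)(1-\delta_L)-\xi)|\cup_{j\in H}A_j|\le z\le((1+\alpha_R)(1+\delta_R)+\xi)|\cup_{j\in H}A_j|$ reduces to $(1-\xi)|\cup_{j\in H}A_j|\le z\le(1+\xi)|\cup_{j\in H}A_j|$. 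Finally, the complexity triple $(T,R,Q)$ in Theorem~\ref{appr3-cover-theorem} does not mention the bias parameters once they are confined to $[0,1-\rho]$, so those bounds carry over verbatim, giving the stated $\bigO(\frac{k^3}{\xi^2}(k\log(\frac{3\setCount}{k})+\log\frac1\gamma)\setCount)$ for $T$ and $Q$ and $\bigO(\frac{k^2}{\xi^2}(k\log(\frac{3\setCount}{k})+\log\frac1\gamma)\setCount)$ for $R$.

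I expect no real obstacle here: the argument is a direct specialization, entirely parallel to the derivation of Corollary~\ref{appr2-cover-corol} from Theorem~\ref{appr2-cover-theorem}. The only thing that needs a moment's attention is confirming that a $((0,0),(0,0))$-list is a legitimate instance of the type~1 model used in Theorem~\ref{appr3-cover-theorem} — i.e.\ that the $(0,0)$-biased generator and the $(0,0)$-approximate cardinality degenerate to the exact/uniform case — which is immediate from Definitions~\ref{input-list-def1} and \ref{app-list-def}. So the proof is essentially a one-line invocation of Theorem~\ref{appr3-cover-theorem} followed by the substitution $\beta=1$.
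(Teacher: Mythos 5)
Your proposal is correct and matches the paper's intent exactly: the paper states Corollary~\ref{appr3-cover-corol} immediately after Theorem~\ref{appr3-cover-theorem} as the $((0,0),(0,0))$-list special case, in direct parallel to how Corollary~\ref{appr2-cover-corol} is derived from Theorem~\ref{appr2-cover-theorem} via the one-line observation that $\alpha_L=\alpha_R=\delta_L=\delta_R=0$ forces $\beta=1$. Your substitution of $\beta=1$ into all three conclusions is precisely the intended argument.
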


\section{Hardness of Maximum Coverage with Equal Size
Sets}\label{equali-size-section}

In this section, we show that the special case of maximum coverage
problem with equal size sets input is as hard as the general maximum
coverage problem. This gives a hard core for the maximum coverage
problem. When $A_1,\cdots, A_{\setCount}$ are of the same size
$\setSize$ , the input size is measured as $\setSize\setCount $.
Thus, the input size $\setSize\setCount $ for the maximum coverage
problem with equal set size is controlled by two independent
parameters $\setSize$ and $\setCount$. It helps us introduce the
notion of partial sublinear time computation at
Section~\ref{partial-sublinear-sec}.

The classical set cover problem is that given a set $U$ of elements
(called the universe) and a collection $S$ of sets whose union
equals the universe $U$, identify the smallest sub-collection of $S$
whose union equals the universe.

\begin{definition}
The {\it $s$-equal size maximum coverage} problem is the case of
maximum coverage problem when the input list of sets are all of the
same size $s$. The {\it equal size maximum coverage} problem is the
case of maximum coverage problem when the input list of sets are all
of the same size.
\end{definition}

\begin{theorem} Let $c$ be an positive real number and $s$ be an
integer parameter.
\begin{enumerate}
\item
There is a polynomial time reduction from a set cover problem with
set sizes bounded by $s$ to $s$-equal size maximum coverage problem.
\item
Assume there is a polynomial time $c$-approximation algorithm for
$s$-equal size maximum coverage problem, then there is a polynomial
time $(c-\littleo(1))$-approximation algorithm for the maximum
coverage problem with input sets $A_1,\cdots, A_{\setCount}$ of size
$|A_i|\le s$ for $i=1,2,\cdots,\setCount$ .
\end{enumerate}
\end{theorem}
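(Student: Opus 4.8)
The plan is to prove both parts with a single padding construction. Fix a pool $D=\{d_1,\dots,d_s\}$ of fresh ``dummy'' symbols disjoint from every element occurring in the instance, and replace a set $S$ of size $|S|\le s$ by $S'=S\cup\{d_1,\dots,d_{s-|S|}\}$, so that $|S'|=s$. The reason to use \emph{one shared} pool filled in \emph{nested} order is the identity
\[
 \left|\bigcup_{S\in T}S'\right|=\left|\bigcup_{S\in T}S\right|+\left(s-\min_{S\in T}|S|\right)
\]
valid for every subfamily $T$, since a union of initial segments of $D$ is again an initial segment of $D$; thus the ``dummy contribution'' of a union is a single number lying in $[0,s]$, not one that accumulates with $|T|$.

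For Part~1 I would apply this padding to a set cover instance $(U,\mathcal S,k)$ with every $|S|\le s$. By the identity, the maximum coverage value of any $k$-subfamily $T$ of the padded sets equals $|\bigcup_{S\in T}S|+\delta$ with $0\le\delta\le s$, and $|\bigcup_{S\in T}S|\le|U|$ with equality exactly when $T$ covers $U$. For bare NP-hardness the slack $\delta$ is sidestepped by restricting to the (already NP-hard) sub-case in which the original sets all have size exactly $s$, where no padding is needed and the reduction is essentially the identity map. For the inapproximability consequence I would feed in Feige's gap instances, where a NO instance fails to cover a $(1/e-\epsilon)$-fraction of $U$: then, once $|U|$ is large relative to $s$, the threshold $|U|$ separates YES instances (value $\ge|U|$) from NO instances (value $<(1-\frac1e+\epsilon)|U|+s<|U|$). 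This yields a polynomial-time reduction transferring NP-hardness, and the $(1-\frac1e+\epsilon)$-inapproximability, to $s$-equal-size maximum coverage.

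For Part~2, given a maximum coverage instance $A_1,\dots,A_m$ with $|A_i|\le s$ and parameter $k$, I would pad to $A_1',\dots,A_m'$ of common size $s$, run the hypothesized polynomial-time $c$-approximation for $s$-equal-size maximum coverage on $(A_1',\dots,A_m';k)$, and output the index set $T$ it returns. Writing $\mathrm{orig}(T)=|\bigcup_{i\in T}A_i|$, and letting $\mathrm{OPT}$ and $\mathrm{OPT}'$ denote the optima of the original and padded instances, the identity gives $\mathrm{orig}(T)\le|\bigcup_{i\in T}A_i'|\le\mathrm{orig}(T)+s$ for every $k$-subset, hence $\mathrm{OPT}\le\mathrm{OPT}'\le\mathrm{OPT}+s$; therefore the returned $T$ satisfies
\[
 \mathrm{orig}(T)\ge\left|\bigcup_{i\in T}A_i'\right|-s\ge c\,\mathrm{OPT}'-s\ge c\,\mathrm{OPT}-s=\left(c-\frac{s}{\mathrm{OPT}}\right)\mathrm{OPT}.
\]
Running the classical greedy algorithm in parallel (always a $(1-\frac1e)$-approximation) and returning the better of the two solutions gives ratio $\max\{1-\frac1e,\ c-s/\mathrm{OPT}\}$, which never falls below $1-\frac1e$ and tends to $c$ as $\mathrm{OPT}\to\infty$, i.e.\ a $(c-\littleo(1))$-approximation. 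Every step is polynomial: the padded universe has at most $ms+s$ elements, and the padding, the greedy run, and the $c$-approximation are all polynomial.

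The main obstacle throughout is controlling the additive ``$+s$'' introduced by padding. In Part~2 it is harmless once $\mathrm{OPT}\gg s$, which is precisely why pairing the reduction with greedy patches the small-$\mathrm{OPT}$ regime. In Part~1 it is the reason the reduction must be fed the \emph{gapped} source, so that a NO instance is bounded away from covering $U$ by a constant fraction of $|U|$ rather than by one element, making the fixed threshold $|U|$ safe. Verifying that YES and NO instances provably land on opposite sides of that threshold --- and handling the equal-original-size preprocessing for the plain NP-hardness claim --- is where essentially all the care goes; the combinatorial content is just the union-of-initial-segments identity together with the standard greedy guarantee.
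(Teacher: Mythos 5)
Your padding idea (a shared, nested dummy pool) is the same germ as the paper's construction, but in both parts the way you handle the resulting slack leaves a genuine gap. For Part~1: as you note, the padded union of a $k$-subfamily is $|\bigcup_{S\in T}S|+\delta$ with $\delta=s-\min_{S\in T}|S|\in[0,s]$, so the threshold test ``value $\ge |U|$'' is broken --- a non-covering family with $|\bigcup_{S\in T}S|=|U|-1$ and $\delta=s$ beats a covering family with $\delta=0$. Neither of your workarounds repairs the reduction: restricting to the sub-case where every set already has size exactly $s$ is not a reduction \emph{from} set cover with sizes bounded by $s$ (and quietly assumes that exact-size-$s$ set cover is NP-hard), while the Feige-gap route proves an inapproximability corollary rather than the stated reduction. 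The missing ingredient, which is the paper's fix, is small: add the pure-dummy set $A_0$ itself to the family and ask for a cover by $k+1$ sets. Since every padded nonempty set omits the last dummy ($s-|A_i|\le s-1$), any cover of $U\cup A_0$ must include $A_0$, and the remaining $k$ sets must cover $U$; this restores an exact equivalence for all bounded-size instances.

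For Part~2, your chain $\mathrm{orig}(T)\ge c\,\mathrm{OPT}-s$ is correct, but the resulting ratio $c-s/\mathrm{OPT}$ is not $c-\littleo(1)$: nothing forces $\mathrm{OPT}\to\infty$. With $s$ fixed you can have $k$ arbitrarily large yet $\mathrm{OPT}=O(s)$ (many nearly identical sets), in which case $c-s/\mathrm{OPT}$ is bounded away from $c$ by a constant and the greedy fallback only returns you to $1-\frac1e$, not to $c-\littleo(1)$. The paper avoids the additive loss entirely by padding from \emph{inside} the instance: with $A_1$ the largest set, it sets $A_j^*=A_j\cup(A_1-A_j)[\,|A_1|-|A_j|\,]$, so the padded universe equals the original one and the optimum can only increase. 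If the returned solution contains $A_1^*=A_1$, it is verbatim a $c$-approximate original solution; otherwise swapping the least-contributing selected set for $A_1$ costs at most a $\frac1k$ fraction of the union and absorbs all padding, giving ratio $(1-\frac1k)c\ge c-\littleo(1)$ uniformly over instances (with a brute-force fallback for constant $k$). That multiplicative $\frac1k$ loss, rather than your additive $s$, is what the claimed conclusion actually requires.
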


\begin{proof} We use the following two cases to prove the two
statements in the theorem, respectively.
\begin{enumerate}
\item\label{first-case-proof}
Let $A_1,A_2,\cdots, A_{\setCount}$ be the input for a set cover
problem, and none of $A_1,A_2,\cdots, A_{\setCount}$ is empty set.
Without loss of generality, assume $t=|A_1|=
\max\{|A_1|,|A_2|,\cdots, |A_{\setCount}|\}$. Let $A_0$  be a new
set with $|A_0|=t$ and $A_0\cap (A_1\cup A_2\cup\cdots\cup
A_{\setCount})=\emptyset$. Construct a new list of sets $A_0, A_1',
A_2',\cdots, A_{\setCount}'$ such that each $A_i'=A_i\cup
A_0[t-|A_i|]$ for $i=1,2,\cdots, \setCount$, where $A_0[u]$ is the
first $u$ elements of $A_0$ (under an arbitrary order for the
elements in $A_0$). It is easy to see that $A_1, A_2,\cdots,
A_{\setCount}$ has a $k$ sets solution if and only if $A_0, A_1',
A_2',\cdots, A_{\setCount}'$ has a $k+1$ sets solution for the set
cover problem.
\item
Since maximum coverage problem has a polynomial time $(1-{1\over
e})$-approximation algorithm, We assume that $c$ is a fixed positive
real number. When $k$ is fixed, a brute force polynomial time
solution is possible to find an maximum union solution for the
maximum coverage problem. Therefore, we assume that ${1\over
k}=\littleo(1)$.  Let $A_1,A_2,\cdots, A_{\setCount}$ be an input
for a maximum coverage problem with a integer parameter $k$. Without
loss of generality, assume $|A_1|$ is the largest as
Case~\ref{first-case-proof}. Let $A_1^*=A_1$ and
$A_j^*=(A_1-A_j)[|A_0|-|A_j|]\cup A_j$ for $j=2,3,\cdots,
\setCount$. Consider the maximum coverage problem $A_1^*, A_2^*,
A_3^*,\cdots, A_{\setCount}^*$. Assume that the maximum coverage
problem $A_1^*, A_2^*, A_3^*,\cdots, A_{\setCount}^*$ has a
$c$-approximation $A_{i_1}^*, A_{i_2}^*,\cdots, A_{i_k}^*$.

1)$1\in \{i_1,i_2,\cdots, i_k\}$ ($A_1^*$ is one of the sets in the
solution). We have that $A_{i_1},\cdots, A_{i_k}$ is a
$c$-approximation for the maximum coverage problem for the input
$A_1,A_2,\cdots, A_{\setCount}$.

2) $1\not\in \{i_1,i_2,\cdots, i_k\}$. Let $A_{i_j}^*$ be that set
in the solution such that $|A_{i_j}^*-\cup_{v\not=j} A_{i_v}^*|$ is
the least. Clearly,  $|A_{i_j}^*-\cup_{v\not=j} A_{i_v}^*|\le
{|A_{i_1}^*\cup\cdots\cup A_{i_k}^*|\over k}$. Thus, $|A_1\cup
(\cup_{v\not=j}A_{i_v})|=|A_1^*\cup (\cup_{v\not=j}A_{i_v}^*)|\ge
(1-{1\over k})|A_{i_1}^*\cup A_{i_2}^*\cup \cdots \cup A_{i_k}^*|$.
Therefore, we have a $(c-\littleo(1))$-approximation $|A_1\cup
(\cup_{v\not=j}A_{i_v})|$ for the maximum coverage problem with the
input $A_1,A_2,\cdots, A_{\setCount}$.
\end{enumerate}
\end{proof}

Our partial sublinear time algorithm can be also applied to the
equal size maximum coverage problem, which has size
$\setCount\setSize$ controlled by two parameters $\setCount$ and
$\setSize$. Our algorithm has a time complexity independent of
$\setSize$ in the first model that gives $\bigO(1)$ time random
element, and $\bigO(1)$ answer for any membership query. Our partial
sublinear time approximation algorithm for the maximum coverage
problem becomes sublinear time algorithm when $\setSize\ge
\setCount^c$ for a fixed $c>0$.

\section{Inapproximability of Partial Sublinear Time Computation}\label{partial-sublinear-sec}

In this section, we introduce the concept of partial sublinear time
computation. The maximal coverage has a partial sublinear constant
factor approximation algorithm. On the other hand, we show that an
inapproximability result for equal size maximum coverage, which is
defined in Section~\ref{equali-size-section} and is a special case
of maximum coverage problem, if the time is
$q(\setSize)\setCount^{1-\epsilon}$, where $\setCount$  is the
number of sets.  This makes the notion of partial sublinear
computation different from conventional sublinear computation.

The inapproximability result is derived on a randomized
computational model that includes the one used in developing our
randomized algorithms for the maximum coverage problem. The
randomized model result needs some additional work than a
deterministic model to prove the inapproximability. A deterministic
algorithm with $q(\setSize)\setCount^{1-\epsilon}$ time let some set
be never queried by the computation, but all sets can be queried in
randomized algorithm with $q(\setSize)\setCount^{1-\epsilon}$ time
as there are super-polynomial (of both $\setSize$ and $\setCount$)
many paths.


\subsection{Model for Inapproximation}

We define a more generalized randomized computation model than that
given by Definition~\ref{input-list-def}. In the model given by
definition~\ref{model-def-lower-bound}, it allows to fetch the
$j$-th element from an input set $A_i$. As we study a sublinear time
computation, its model is defined by
Definition~\ref{model-def-lower-bound}. It is more powerful than
that given in Definition~\ref{input-list-def}. The inapproximation
result derived in this model also implies a similar result in the
model of Definition~\ref{input-list-def}.


\begin{definition}\label{model-def-lower-bound}
A randomized computation $T(.,.)$ for the maximum coverage problem
is a tree $T$ that takes  an input $k$ of integer and a list of
finite sets defined in Definition~\ref{input-list-def}.
\begin{enumerate}[1.]
\item
Each node of $T((L,k),.)$ (with input list $L$ of sets and integer
$k$ for the maximum coverage) allows any operation in
Definition~\ref{input-list-def}.
\item
A fetching statement $x=A_i[j]$ ($1\le j\le s_i= |A_i|$) lets $x$
get the $a_{i,j}\in A_i$, where set $A_i$ contains the elements
$a_{i,1},\cdots a_{i, s_i}$ (which may be unsorted).
\item
A {\it branching point} of $p$ that has $s$ children
$p_1,p_2,\cdots, p_s$ and is caused  by the following two cases

\begin{itemize}
\item
 $\randomElm(A_i)$ returns  a random element in $A_i=\{a_{i,1},\cdots, a_{i,s_i}\}$ such
that $p_j$ is the case that $a_{i,j}$ is selected in $A_i$, and
$s=s_i$.
\item
$\randomNum(s)$ returns  a random element in $\{0,1,\cdots, s-1\}$
for an integer $s>0$ such that $p_j$ is the case that $j+1$ is
returned.
\end{itemize}

\item
A computation {\it path} is determined by a series of numbers $r_0,
r_1, r_2,\cdots, r_t$ such that the $r_j$ corresponds to the $j$-th
branching point for $j=1,2,\cdots,t-1$, and $r_0$ is the root, and
$r_t$ is a leaf.
\item
A {\it partial path} $p$ is an initial part of a path that starts
from the root $r_0$ of computation to a node $q$.
\item\label{root-weight}
The root node $r_0$ has weight $w(r_0)=1$.
\item\label{partial-extension-weight}
If a partial path $p$ from root $r_0$ to a node $q$ that has
children $p_1,\cdots p_s$, and weight $w(q)$. Then
$w(p_1)=w(p_2)=\cdots =w(p_s)={w(q)\over s}$, where $w(p_i)$ is the
weight for $p_i$.
\item
The weight of a path from the root $r_0$ to a leaf $q$ has the
weight $w(q)$, which is the weight of $q$.
\item
The output of the randomized computation $T((L,k),.)$ (with input
$(L,k)$) on a  path $p$ is defined to be $T((L,k),p)$.
\end{enumerate}
\end{definition}

The weight function $w(.)$ determines the probability of a partial
path or path is generated in the randomized computation.
In Definition~\ref{def-shared-path}, we give the concept of  a
shared path for randomized computation under two different inputs of
lists of sets. Intuitively, the computation of the two shared paths
with different inputs does not have any difference, gives both the
same output, and has the same weight.

\begin{definition}\label{def-shared-path}
\scrod
\begin{itemize}
\item
Let $L$ be a list of sets $A_1,\cdots, A_{\setCount}$, and $L'$ be
another input list of $\setCount$ sets $A_1',\cdots,
A_{\setCount}'$. If $|A_i|=|A_i'|$ for $i=1,2,\cdots, \setCount$,
then $L$ and $L'$ are called {\it equal size} list of sets.
\item
Let $L$ be a list of sets $A_1,\cdots, A_{\setCount}$ and $L'$ an
another input of $\setCount$ sets $A_1',\cdots, A_{\setCount}'$ such
that they are of equal size. A partial path $p$ is {\it shared} by
$T((L,k),.)$ and $T((L',k),.)$ if
\begin{itemize}
\item
path $p$ gets the same result for $Query(x, A_i)$ and
$Query(x,A_i')$ for all queries along $p$,
\item
path $p$ gets the same result for fetching between $x=A_i[j]$ and
$x=A_i'[j]$,
\item
path $p$ gets the same result for each random access to
$\randomElm(A_i)$ and $\randomElm(A_i')$, and
\item
path $p$ gets the same result for each random access to
$\randomNum(s)$.
\end{itemize}
\end{itemize}
\end{definition}


\begin{definition}Let $T((L,k),.)$ be a randomized computation for the maximum coverage with
an input list $L$ and an integer $k$.
\begin{itemize}
\item
 Define $P(1)$ to be the set that contains the
partial path with the root.
\item
If $p\in P(a)$ and $p$ is from root $r_0$ to a branching point $q$
with children $q_1,\cdots, q_t$, then each partial path from $r_0$
to $q_i$ belongs to  $P(a+1)$ for $i=1,2,\cdots, t$.
\item
$P(a+1)$ contains all paths (from the root to leave) of length at
most $a+1$ nodes.
\end{itemize}
\end{definition}

\begin{lemma}\label{path-weight-sum-lemma}
Let $T((L,k),.)$ be a randomized computation for the maximum coverage with
an input $(L,k)$.
\begin{enumerate}
\item\label{case1-lemma}
$\sum_{p\in P(a)}w(p)=1$ for all $a\ge 1$.
\item\label{case2-lemma}
$\sum_{path\ p}w(p)=1$.
\end{enumerate}
\end{lemma}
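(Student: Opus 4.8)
The plan is to derive both parts from a single \emph{flow-conservation} identity for the weight function $w(\cdot)$ and then read off the two statements. First I would record the basic fact that makes everything work: by clause~\ref{partial-extension-weight} of Definition~\ref{model-def-lower-bound}, every non-leaf node $q$ with children $q_1,\cdots,q_s$ satisfies $\sum_{i=1}^{s} w(q_i) = s\cdot\frac{w(q)}{s} = w(q)$; this covers ordinary deterministic nodes as the degenerate case $s=1$ as well as the genuine branching points produced by $\randomElm$ or $\randomNum$. Together with $w(r_0)=1$ from clause~\ref{root-weight}, this says $w$ is a unit flow that is conserved at every internal node. I would also note that since the weight of a (partial) path is, by definition, the weight of its terminal node, summing $w(p)$ over a set of partial paths equals summing $w(\cdot)$ over the set of their endpoints, so the argument is really about node-weights over suitable ``frontiers'' of the tree.

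For part~\ref{case1-lemma} I would induct on $a$. The base case $a=1$ is immediate: $P(1)$ is the single partial path ending at $r_0$, with weight $1$. For the inductive step I would spell out the combinatorial relationship between $P(a)$ and $P(a+1)$: each $p\in P(a)$ either already terminates at a leaf, in which case it carries over to $P(a+1)$ unchanged, or it terminates at an internal node $q$, in which case it is removed from the frontier and replaced by the partial paths ending at the children $q_1,\cdots,q_s$ of $q$; distinct partial paths of $P(a)$ are extended through disjoint endpoints, so $P(a+1)$ is partitioned into these blocks. By the conservation identity the block replacing an internal $p$ contributes $\sum_{i=1}^{s} w(q_i) = w(q) = w(p)$, while a leaf-terminated $p$ contributes its own $w(p)$; hence $\sum_{p'\in P(a+1)}w(p') = \sum_{p\in P(a)}w(p) = 1$ by the induction hypothesis.

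For part~\ref{case2-lemma} I would use that the computation halts, so every root-to-leaf path is finite, and since the tree is finitely branching (each branching point has $s_i=|A_i|$ or $s$ children, all finite), K\"onig's lemma makes $T$ a finite tree of some maximum node-depth $D$. Then for every $a\ge D$ the frontier $P(a)$ consists only of leaf-terminated partial paths, i.e.\ $P(a)$ is exactly the set of all root-to-leaf paths; applying part~\ref{case1-lemma} with such an $a$ gives $\sum_{path\ p}w(p) = \sum_{p\in P(D)}w(p) = 1$.

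I expect the main obstacle to be purely one of bookkeeping rather than of mathematical content: one must pin down precisely the passage $P(a)\to P(a+1)$ as ``keep every leaf-terminated partial path, and replace every other partial path by the list of its one-step extensions,'' reconciling this with the terse recursive definition of $P(\cdot)$ (in particular treating deterministic one-child nodes as trivial branching points), and verify that this replacement is exhaustive and non-overlapping. Once that is settled, the only arithmetic used is $s\cdot\frac{w(q)}{s}=w(q)$, so the rest is routine.
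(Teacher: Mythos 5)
Your proposal is correct and takes essentially the same route as the paper: an induction on $a$ whose base case is $w(r_0)=1$ and whose inductive step is the conservation identity $\sum_i w(q_i)=w(q)$ from item~\ref{partial-extension-weight} of Definition~\ref{model-def-lower-bound}, with part~\ref{case2-lemma} deduced from part~\ref{case1-lemma}. The only difference is that you make explicit two points the paper leaves implicit — the exact frontier bookkeeping in the passage from $P(a)$ to $P(a+1)$, and the finiteness of the computation tree that guarantees $P(a)$ eventually coincides with the set of all root-to-leaf paths — both of which are handled correctly.
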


\begin{proof}
It can be proven via an induction. It is true for $a=1$ by
definition. Assume $\sum_{p\in P(a)}w(p)=1$. We have $\sum_{p\in
P(a+1)}w(p)=1$ by item~\ref{partial-extension-weight} of
Definition~\ref{model-def-lower-bound}. Statement~\ref{case2-lemma}
follows from Statement~\ref{case1-lemma}.
\end{proof}

\begin{definition}\label{appr-def}
A {\it partial sublinear} $\bigO(t_1(\setCount)t_2(\setSize))$ time
$(u(\setCount ,\setSize), v(\setCount ,\setSize))$-approximation
algorithm $T(.,.)$ for the maximum coverage problem if it satisfies
the following conditions:
\begin{enumerate}[1.]
\item
It runs $\bigO(t_1(\setCount)t_2(\setSize))$ steps along every path.
\item
The two functions have $t_1(\setCount)=\littleo(\setCount)$ or
$t_2(\setSize)=\littleo(\setSize)$, and
\item
The sum of weights $w(p)$ of paths $p$ that satisfy ${C^*(L,k)\over
u(\setCount ,\setSize)}-v(\setCount ,\setSize)\le T((L,k),p)\le
u(\setCount ,\setSize)\cdot C^*(L,k)+v(\setCount ,\setSize)$ is  at
least ${3\over 4}$, where $\setSize=\max\{|A_i|:i=1,\cdots,
\setCount\}$.
\end{enumerate}
\end{definition}

\subsection{Inapproximation for Equal Size Maximum Coverage}

We derive an inapproximability result for the equal size maximum
coverage problem in partial sublinear
$p(\setSize)\setCount^{1-\epsilon}$ time model. It contrasts the
partial sublinear time $\bigO(\setCount\poly(k))$ constant factor
approximation  for it.

It will be proven by contradiction. Assuming that there exists a
constant factor $q(\setSize)\setCount^{1-\epsilon}$ time randomized
algorithm for the maximum coverage. We construct two lists
$L:A_1,A_2,\cdots, A_{\setCount}$ and $L':A_1',A_2',\cdots,
A_{\setCount}'$ of sets that have the same size. In list $L$, we let
$A_1=A_2=\cdots =A_n$. All input sets in both $L$ and $L'$ are of
the same size. Two inputs for $L$ and $L'$ with the same parameter
$k$ for the maximum coverage will share most of their paths. There
are small number of sets $A_{i_1},A_{i_2},\cdots, A_{i_d}$ in $L$
such that they are queried by a small percent of paths (with sum of
their weights less than $1\%$). In list $L'$ we let $A_i'=A_i'$ for
all the sets except $A_{i_1}',\cdots, A_{i_d}'$, and let each
$A_{i_s}'\cap A_{i_t}'=\emptyset$. There is a large difference for
their maximum coverage solutions with the same parameter $k$ if $d$
is reasonably large.  This is possible as the time is controlled by
$q(\setSize)\setCount^{1-\epsilon}$ in each path. In other words,
$C^*(L,k)=|A_1|$ and $C^*(L,k)=d|A_1'|=d|A_1|$. There is an
approximate value from a shared path to be close to $C^*(L,k)$ and
$C^*(L',k)$ of two maximum coverage problems, respectively. It
derives a contradiction.

\begin{theorem}\label{separate-theorem}For nondecreasing functions
$t_1(\setCount), t_2(\setSize), v(\setSize): \integerSet\rightarrow
\realSet^+$ with $v(\setSize)=\littleo(\setSize)$ and
$t_1(\setCount)=\littleo(\setCount)$, the function $C^*(L,k)$ for
the equal size maximum coverage problem has no partial sublinear
$t_1(\setCount)t_2(\setSize)$ time $(u, v(\setSize))$-approximation
for any fixed $u>0$.
\end{theorem}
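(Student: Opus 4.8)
The plan is to argue by contradiction inside the randomized tree model of Definition~\ref{model-def-lower-bound}. Suppose $T(\cdot,\cdot)$ is a partial sublinear $\bigO(t_1(\setCount)t_2(\setSize))$ time $(u,v(\setSize))$-approximation algorithm for equal size maximum coverage, and fix a constant $c$ so that every path of $T$ runs at most $c\,t_1(\setCount)t_2(\setSize)$ operations; in particular each path touches (queries, fetches from, or randomly samples) at most $c\,t_1(\setCount)t_2(\setSize)$ of the input sets. I will build two equal size instances $(L,k)$ and $(L',k)$ whose shared paths (Definition~\ref{def-shared-path}) carry weight more than $3/4$ but whose optima differ by an arbitrarily large factor, and then show $T$ cannot be simultaneously correct on both along any one shared path.

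First I fix the parameters. Put $d=\ceiling{u^2}+2$ and $k=d$, and use $v(\setSize)=\littleo(\setSize)$ to pick $\setSize$ large enough that $\frac{d}{u}\setSize>u\setSize+2v(\setSize)$. With $\setSize$ now fixed, $t_2(\setSize)$ is a constant, so I use $t_1(\setCount)=\littleo(\setCount)$ to pick $\setCount$ large enough that $4dc\,t_1(\setCount)t_2(\setSize)\le\setCount-d$. Let $L$ have $A_1=\cdots=A_{\setCount}=S$ with $|S|=\setSize$, so $C^*(L,k)=\setSize$. Considering the computation $T((L,k),\cdot)$, let $W_i$ be the total weight of paths touching $A_i$. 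Since each path touches at most $c\,t_1(\setCount)t_2(\setSize)$ sets and $\sum_{path\ p}w(p)=1$ by Lemma~\ref{path-weight-sum-lemma}, summing over paths gives $\sum_{i=1}^{\setCount}W_i\le c\,t_1(\setCount)t_2(\setSize)$; hence at most $4dc\,t_1(\setCount)t_2(\setSize)\le\setCount-d$ indices satisfy $W_i\ge\frac{1}{4d}$, so at least $d$ indices $i_1,\dots,i_d$ satisfy $W_{i_j}<\frac{1}{4d}$. Define $L'$ by replacing $A_{i_1},\dots,A_{i_d}$ with pairwise disjoint sets $A'_{i_1},\dots,A'_{i_d}$ each of size $\setSize$, and leaving $A'_i=A_i=S$ otherwise; then $L,L'$ are equal size and $C^*(L',k)=d\setSize$.

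Next I would show that every path $p$ of $T((L,k),\cdot)$ that touches none of $A_{i_1},\dots,A_{i_d}$ is a shared path of $T((L,k),\cdot)$ and $T((L',k),\cdot)$ with the same output and the same weight. Indeed $L$ and $L'$ agree at every position except $i_1,\dots,i_d$, and along $p$ every query, fetch, or random sample refers to some $A_i$ with $i\notin\{i_1,\dots,i_d\}$, where $A_i=A'_i$; by induction on the length of $p$ the algorithm issues identical operations, observes identical outcomes, takes identical branches (the branching factors $|A_i|=|A'_i|=\setSize$ and all $\randomNum(\cdot)$ values agree), and thus outputs the same value and accumulates the same weight. Therefore the shared paths have total weight at least $1-\sum_{j=1}^{d}W_{i_j}>3/4$. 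Applying condition 3 of Definition~\ref{appr-def} to $(L,k)$ and to $(L',k)$ separately, the paths $p$ with $\frac{C^*(L,k)}{u}-v(\setSize)\le T((L,k),p)\le u\,C^*(L,k)+v(\setSize)$ have weight $\ge 3/4$, and likewise for $(L',k)$. Intersecting these two families with the weight-$>3/4$ family of shared paths (a common weight space under the weight-preserving identification), a union bound leaves a family of weight at least $3/4-(1/4+1/4)>0$, so some shared path $p$ is good for both inputs. On $p$ the common value $z=T((L,k),p)=T((L',k),p)$ satisfies $z\le u\setSize+v(\setSize)$ and $z\ge\frac{d\setSize}{u}-v(\setSize)$, contradicting $\frac{d}{u}\setSize>u\setSize+2v(\setSize)$.

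I expect the crux to be the shared-path step: one must verify carefully that replacing the $d$ never-touched sets alters neither the operation sequence nor the branching weights along such a path, and that the two separate $\ge 3/4$ correctness guarantees can be combined with the $>3/4$-weight family of shared paths to produce one path that is good for both instances. This is precisely where the randomized model is harder than a deterministic one: a deterministic $t_1(\setCount)t_2(\setSize)$-time algorithm literally never reads some set, whereas here we can only find sets that are read with small total probability, and the accounting must be done through the path-weight function $w(\cdot)$.
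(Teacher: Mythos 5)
Your proof is correct and follows essentially the same route as the paper: take $L$ with all $\setCount$ sets identical, use the weight-counting bound $\sum_i W_i\le c\,t_1(\setCount)t_2(\setSize)$ to find $d$ rarely-touched indices, replace them by pairwise disjoint sets to get $L'$ with a factor-$d$ larger optimum, and extract a shared path that must approximate both optima. Your version differs only in cosmetic parameter choices (sets of size $\setSize$ with $C^*(L',k)=d\setSize$ rather than size $\setSize/d$ with $C^*(L',k)=\setSize$; threshold $1/(4d)$ rather than $0.01/d$) and is, if anything, slightly more careful than the paper in counting all accesses (queries, fetches, and samples) rather than only queries and in making the final union bound over the shared-path weight space explicit.
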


\begin{proof}It is proven by contradiction. Let $u$ be a fixed positive integer. Assume that the
maximum coverage problem $C^*(L,k)$ has a partial sublinear
$t_1(\setCount)t_2(\setSize)$ time $(u,v(\setSize))$-approximation
by a randomized computation $T(.,.)$.

Let
\begin{eqnarray}
c_1&=&200, \label{c1-set-eqn}{\rm \ \ \ and}\\
k&=&d=c_1u^2.\label{d-set-eqn}
\end{eqnarray}

Since $v(\setSize)=\littleo(\setSize)$, select $\setSize$  to be an
integer such that
\begin{eqnarray}
c_1u\cdot v(\setSize)&<& \setSize, \label{m2-select-ineqn}{\rm \ \ \
and}\\
\setSize &=&0({\rm mod}\ d).
\end{eqnarray}
 Select an integer $\setCount$  to be large enough such that
 \begin{eqnarray}
\max(2d, c_1\cdot d\cdot t_1(\setCount)t_2(\setSize))&<&
n\label{n2-select-ineqn}.
\end{eqnarray}

Let $\integerSet_\setSize$ be the set of integers $\{1,2,\cdots,
\setSize\}$. Let sets $A_1=A_2=\cdots = A_{\setCount}=\{1,2,\cdots,
{\setSize\over  d}\}$. Let $L$ be the list of sets $A_1, A_2,\cdots
, A_{\setCount}$.

For each $A_i$, define $Q(A_i)=\sum_{path\ p \ in \ T((L,k),.)\
queries\ A_i} w(p)$. If there are more than ${\setCount\over 2}$
sets $A_i$ with $Q(A_i)> {0.01\over d}$, then
\begin{eqnarray}
\sum_{i=1}^n Q(A_i)> {0.01\setCount\over 2d}.\label{sum-Q-Ai-ineqn}
\end{eqnarray}
For a path $p$, define $H(p)$ to be the number of sets $A_i$ queried
by $p$. Clearly, $H(p)\le t_1(\setCount)t_2(\setSize)$ since each
path runs in at most $t_1(\setCount)t_2(\setSize)$ steps.  We have
\begin{eqnarray*}
\sum_{i=1}^n Q(A_i)&\le& \sum_{p} w(p)H(p)\\
&\le&
t_1(\setCount)t_2(\setSize)\sum_{p}w(p)\\
&=&t_1(\setCount)t_2(\setSize).\ \ \ \ ({\rm by\
Lemma~\ref{path-weight-sum-lemma}}) \label{sum2-Q-Ai-ineqn}
\end{eqnarray*}

By inequalities (\ref{sum-Q-Ai-ineqn}) and (\ref{sum2-Q-Ai-ineqn}),
we have ${0.01\setCount\over 2d}< t_1(\setCount)t_2(\setSize)$,
which implies $n<200\cdot d\cdot t_1(\setCount)t_2(\setSize)$. This
contradicts inequality (\ref{n2-select-ineqn}). Therefore, there are
at least ${\setCount\over 2}$ sets $A_j$ with $Q(A_j)< {0.01\over
d}$. Let $J$ be the set $\{j:Q(A_j)< {0.01\over d}\}$. We have
$|J|\ge {\setCount\over 2}\ge d$ (by inequality
(\ref{n2-select-ineqn})).


Let $i_1<i_2<\cdots <i_d$ be the first $d$ elements in set $J$.
Define the list $L'$ of sets $A_1', A_2',\cdots, A_{\setCount}'$
with
\begin{eqnarray*}
A_{i_1}'&=&\{1,2,\cdots, {\setSize\over  d}\},\\
A_{i_2}'&=&\{{\setSize\over  d}+1,{\setSize\over  d}+2,\cdots, {2\setSize\over  d}\},\\
\cdots &&\cdots\\
A_{i_d}'&=&\{{(d-1)\setSize\over  d}+1,{(d-1)\setSize\over  d}+2,\cdots, \setSize\},\ \ {\rm and}\\
A_j'&=&A_j \ \ \ {\rm for\  every\ } j\in \{1,2,\cdots,
\setCount\}-\{i_1,i_2,\cdots, i_d\}.
\end{eqnarray*}

Clearly, the two lists $L$ and $L'$ are of the same size, and we
have
\begin{eqnarray}
C^*(L',k)&=&\setSize,\ \ \ \ {\rm and}\label{first-union-size-eqn}\\
C^*(L,k)&=&{\setSize\over  d}. \label{second-union-size-eqn}
\end{eqnarray}



A shared path has the same weight in both $T((L,k),.)$ and
$T(L',.)$. We have $\{p: $ $p$ is shared by $T((L,k),.)$ and
$T(L',.)\}$ has weight of at least $0.99$ in total. This is because
the difference between list $L$ and list $L'$ is between the sets
with indices in $\{i_1,i_2,\cdots, i_d\}$. The sum of weights $w(p)$
from the paths $p$ in $T((L,k),.)$ that $p$  queries $A_j$ with
$j\in \{i_1,i_2,\cdots, i_d\}$ is at most $d\cdot {0.01\over
d}=0.01$ since $Q(A_j)<{0.01\over d}$ for each $j\in J$.

There exists a $z$, which is equal to $T((L,k),p)=T((L',k),p)$ for
some shared path $p$, such that $z$ is a
$(u,v(\setSize))$-approximation for both $C^*(L',k)$ and $C^*(L,k)$.
Therefore,
\begin{eqnarray}
{C^*(L,k)\over u }-v(\setSize)&\le& z\le
u \cdot C^*(L,k)+v(\setSize),{\rm \ \ \ and}\\
{C^*(L',k)\over u }-v(\setSize)&\le& z\le u \cdot
C^*(L',k)+v(\setSize).
\end{eqnarray}

Therefore,
\begin{eqnarray}
{C^*(L',k)\over u }-v(\setSize)&\le& z\le u \cdot
C^*(L,k)+v(\setSize).
\end{eqnarray}

By equations (\ref{first-union-size-eqn}) and
(\ref{second-union-size-eqn}),
\begin{eqnarray}
{\setSize\over  u }-v(\setSize)&\le& {u \setSize\over
d}+v(\setSize).
\end{eqnarray}

Therefore,
\begin{eqnarray}
 {\setSize(d-u ^2)\over du }\le 2v(\setSize).\label{d-u-v-ineqn}
\end{eqnarray}

By inequality (\ref{c1-set-eqn}), and equation (\ref{d-set-eqn}),
and inequality (\ref{d-u-v-ineqn}), we have
\begin{eqnarray}
{\setSize\over  2u }\le {\setSize(d-{d\over 2})\over du }\le
{\setSize(d-u ^2)\over du }\le 2v(\setSize).
\end{eqnarray}

Therefore,
\begin{eqnarray}
\setSize\le  4u\cdot v(\setSize).
\end{eqnarray}

This brings a contradiction by inequality (\ref{m2-select-ineqn}).

\end{proof}

 Theorem~\ref{separate-theorem} implies there is no
 $\bigO((\setCount \setSize)^{1-\epsilon})$ time approximation for the maximum coverage
 problem. Thus, Theorem~\ref{separate-theorem} gives a natural example that has
partial sublinear time constant factor approximation, but has no
sublinear time approximation.

\begin{corollary}For nondecreasing functions
$t_1(\setCount), t_2(\setSize), v(\setSize): \integerSet\rightarrow
\realSet^+$ with $v(\setSize)=\littleo(\setSize)$ and
$t_1(\setCount)=\littleo(\setCount)$, the function the equal size
maximum coverage problem has no partial sublinear
$t_1(\setCount)t_2(\setSize)$ time $(u, v(\setSize))$-approximation
for any fixed $u>0$.
\end{corollary}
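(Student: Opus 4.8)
The plan is to obtain this as an immediate consequence of Theorem~\ref{separate-theorem}: that theorem already establishes, for exactly the same class of rate functions $t_1(\setCount)=\littleo(\setCount)$, arbitrary nondecreasing $t_2(\setSize)$, and $v(\setSize)=\littleo(\setSize)$, that no partial sublinear $t_1(\setCount)t_2(\setSize)$ time $(u,v(\setSize))$-approximation exists for the equal size maximum coverage problem, for every fixed $u>0$. The only point worth spelling out is that the object in the present statement is captured by precisely the notion of approximation algorithm fixed in Definition~\ref{appr-def}: the output $T((L,k),p)$ on a computation path is a number that must lie in $[\,C^*(L,k)/u - v(\setSize),\; u\cdot C^*(L,k)+v(\setSize)\,]$ on a set of paths of total weight at least $3/4$. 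Hence a hypothetical algorithm for the corollary is already a hypothetical algorithm in the sense of Theorem~\ref{separate-theorem}, and the contradiction transfers verbatim. (If one instead reads the corollary as the search version that must return a size-$k$ index set $H$, one first post-processes: an estimate $z$ of $|\cup_{j\in H}A_j|$ can be produced within the same time-and-weight budget by the set-difference sampling of Section~\ref{union-sec}, reducing to the value version; under Definition~\ref{appr-def} this extra step is not even needed.)

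For completeness I would recall the mechanism behind Theorem~\ref{separate-theorem}, since that is where all the work sits and it is the part one must be sure still applies. Fix $u>0$, put $c_1=200$ and $k=d=c_1u^2$; using $v(\setSize)=\littleo(\setSize)$ pick $\setSize$ large with $c_1u\cdot v(\setSize)<\setSize$ and $d\mid\setSize$; using $t_1(\setCount)=\littleo(\setCount)$ and monotonicity of $t_2,v$ pick $\setCount$ large with $\max(2d,\;c_1 d\, t_1(\setCount)t_2(\setSize))<\setCount$. Take $L$ with $A_1=\cdots=A_{\setCount}=\{1,\dots,\setSize/d\}$, so $C^*(L,k)=\setSize/d$. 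A weight-counting argument on the computation tree (Lemma~\ref{path-weight-sum-lemma}: path weights sum to $1$, and each path visits at most $t_1(\setCount)t_2(\setSize)$ sets) shows that all but at most $\setCount/2$ of the sets are queried only by paths of total weight $<0.01/d$; choose $d$ such ``rarely queried'' indices $i_1<\cdots<i_d$ and replace $A_{i_1},\dots,A_{i_d}$ by pairwise disjoint blocks of size $\setSize/d$, leaving all other coordinates unchanged, to obtain an equal size list $L'$ with $C^*(L',k)=\setSize$.

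The crux is then that $L$ and $L'$ share computation paths of total weight at least $0.99$: a shared path returns the same value and carries the same weight in $T((L,k),\cdot)$ and $T((L',k),\cdot)$ (Definition~\ref{def-shared-path}), and any path distinguishing $L$ from $L'$ must query some $A_{i_s}$, so the distinguishing paths have total weight at most $d\cdot(0.01/d)=0.01$. Therefore some single value $z$ is simultaneously a $(u,v(\setSize))$-approximation of $C^*(L,k)=\setSize/d$ and of $C^*(L',k)=\setSize$, forcing $\setSize/u - v(\setSize)\le u\setSize/d + v(\setSize)$, i.e. $\setSize(d-u^2)/(du)\le 2v(\setSize)$; with $d=c_1u^2$ and $c_1=200$ this yields $\setSize\le 4u\cdot v(\setSize)$, contradicting $c_1u\cdot v(\setSize)<\setSize$.

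The only genuine obstacle — and the reason the randomized model needs handling beyond the deterministic one — is the passage from ``few steps per path'' to ``most paths are shared'': a deterministic $\littleo(\setCount)$-time algorithm trivially leaves some set untouched, whereas a randomized one, having super-polynomially many paths, can reach every set along some path; the remedy is to average over the tree and use the branching/weight structure of Definition~\ref{model-def-lower-bound} to get $\sum_i Q(A_i)\le t_1(\setCount)t_2(\setSize)$, whence the existence of a positive-weight family of shared paths. Everything after that is arithmetic in the fixed constants $c_1,u,d$. Since Theorem~\ref{separate-theorem} already carries this out under hypotheses identical to those of the corollary, no further argument is required: the corollary is that theorem, restated.
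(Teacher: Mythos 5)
Your proposal is correct and matches the paper exactly: the paper states this corollary without a separate proof, treating it as an immediate restatement of Theorem~\ref{separate-theorem}, which is precisely your reading, and your recapitulation of that theorem's adversary argument (the weight-counting bound $\sum_i Q(A_i)\le t_1(\setCount)t_2(\setSize)$, the choice of $d$ rarely-queried sets, the shared paths of weight at least $0.99$, and the final arithmetic contradiction with $d=c_1u^2$) is faithful to the paper's own proof.
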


A lot of computational problems can be represented by a function on
a list of sets. For example, any bipartite graph $G(V_1, V_2,E)$ can
be represented by a list of sets $A_1,A_2,\cdots, A_{\setCount}$,
where $\setCount =|V_1|$ and $V_1=\{v_1,v_2,\cdots,
v_{\setCount}\}$, and each $A_i$ is a subset of $V_2$ with
$\setSize=|V_2|$ such that each one of $A_i$ has an edge adjacent to
$v_i\in V_1$. If we define $f(A_1,A_2,\cdots, A_{\setCount})$
 to be length of longest path in $G(V_1, V_2,E)$, the function $f(.)$ is a NP-hard.
If $g(A_1,A_2,\cdots, A_{\setCount})$ is defined to
 be number of paths of longest paths in $G(V_1, V_2,E)$, the function $g(.)$ is a \#P-hard.

\section{Maximum Coverage on Concrete
Models}\label{concrete-models-section}


In this section, we show some data structures that can support the
efficient implementation of the algorithm. We define the time
complexity of a randomized algorithm in our computation model. We
only use type 0 model for maximum coverage (see
Definition~\ref{input-list-def}).

\begin{definition}\label{app-epsilon-def} For  parameters $\epsilon,c\in (0,1)$ an
algorithm to give a $(c, \epsilon)$-approximation for the maximum
cover problem in type 0 model, such that given a list $L$ of finite
sets $A_1,\cdots, A_{\setCount}$ and an integer $k$, it returns an
integer $z$ and a subset $H\subseteq \{1,2,\cdots, \setCount\}$ that
satisfy
\begin{enumerate}[1.]
\item
$|\cup_{j\in H}A_j|\ge c C^*(L,k)$ and $|H|=k$, and
\item
$(1-\epsilon)|\cup_{j\in H}A_j|\le z\le (1+\epsilon)|\cup_{j\in
H}A_j|$.
\end{enumerate}
\end{definition}

\begin{definition}\label{complexity-def}
Assume that the complexity for getting one random sample from set
$A_i$ is $r(|A_i|)$ and the complexity for making one membership
query for set $A_i$ is $q(|A_i|)$. If an algorithm has a complexity
$(T(\epsilon, \gamma, k,\setCount), R(\epsilon, \gamma,
k,\setCount), Q(\epsilon, \gamma, k,\setCount))$, define its time
complexity by $T(\epsilon, \gamma, k,\setCount)+R(\epsilon, \gamma,
k,\setCount)r(\setSize)+Q(\epsilon, \gamma,
k,\setCount)q(\setSize)$.
\end{definition}

Theorem~\ref{appr2-cover-theorem} can be restated in the following
format. It will be transformed into several version based on the
data structures to store the input sets for the maximum coverage.

\begin{theorem}\label{main2-theorem}Assume that each set of size $\setSize$  can
generate a random element in $r(\setSize)$ time and answer a
membership query in $q(\setSize)$ time. Then there is a randomized
algorithm such that with probability at most $\gamma$,
ApproximateMaximumCover($A_1,A_2,\cdots, A_{\setCount}, \epsilon,
\gamma$) does not give an  $((1-{1\over e}),
\epsilon)$-approximation for the maximum coverage in type 0 model.
Furthermore, its complexity is
  $T(\epsilon,
\gamma, k, \setCount)+R(\epsilon, \gamma, k,
\setCount)r(\setSize)+Q(\epsilon, \gamma, k, \setCount)q(\setSize)$,
where $T(.), R(.)$, and $Q(.)$ are the same as those in
Theorem~\ref{appr2-cover-theorem}.
\end{theorem}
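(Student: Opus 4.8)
The plan is to obtain Theorem~\ref{main2-theorem} as an immediate specialization of Theorem~\ref{appr2-cover-theorem}, followed by the cost translation prescribed in Definition~\ref{complexity-def}. First I would observe that a type~0 input (Definition~\ref{input-list-def}) is exactly a $((0,0),(0,0))$-list in the sense of Definition~\ref{app-list-def}: the exact cardinalities $\setSize_i=|A_i|$ serve as the approximate sizes $s_i$, so $\delta_L=\delta_R=0$, and the uniform generator $\randomElm(A_i)$ is a $(0,0)$-biased generator, so $\alpha_L=\alpha_R=0$. Consequently $\beta=\frac{(1-\alpha_L)(1-\delta_L)}{(1+\alpha_R)(1+\delta_R)}=1$, and every prefactor of the form $(1-\alpha_L)(1-\delta_L)$ or $(1+\alpha_R)(1+\delta_R)$ occurring in the statement of Theorem~\ref{appr2-cover-theorem} collapses to $1$. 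These parameters lie in $[0,1-\rho]$ for $\rho$ any constant in $(0,1)$, so the hypotheses of Theorem~\ref{appr2-cover-theorem} are met.

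Applying Theorem~\ref{appr2-cover-theorem} (equivalently Corollary~\ref{appr2-cover-corol}) then yields, with probability at least $1-\gamma$, a subset $H$ with $|H|=k$ and $|\cup_{j\in H}A_j|\ge(1-{1\over e})C^*(L,k)$, together with an integer $z$ satisfying $(1-\epsilon)|\cup_{j\in H}A_j|\le z\le(1+\epsilon)|\cup_{j\in H}A_j|$ --- precisely the $((1-{1\over e}),\epsilon)$-approximation of Definition~\ref{app-epsilon-def}. The complementary event, on which ApproximateMaximumCover fails to return such an output, therefore has probability at most $\gamma$, which is the first assertion. For the running time I would invoke that the same theorem bounds the algorithm's complexity by the triple $(T(\epsilon,\gamma,k,\setCount),R(\epsilon,\gamma,k,\setCount),Q(\epsilon,\gamma,k,\setCount))$, where $T$ counts elementary steps (each call to $\randomElm(\cdot)$ or $\mathrm{Query}(\cdot)$ being charged one), $R$ counts the total number of calls to $\randomElm(\cdot)$, and $Q$ counts the total number of calls to $\mathrm{Query}(\cdot)$. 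Under the concrete assumption that a random element of a size-$\setSize$ set costs $r(\setSize)$ time and a membership query costs $q(\setSize)$ time, each of the $R$ sampling operations is re-charged at $r(\setSize)$ and each of the $Q$ query operations at $q(\setSize)$, while the remaining $T$ bookkeeping steps remain $\bigO(1)$ apiece; summing gives the claimed time bound $T(\epsilon,\gamma,k,\setCount)+R(\epsilon,\gamma,k,\setCount)\,r(\setSize)+Q(\epsilon,\gamma,k,\setCount)\,q(\setSize)$, exactly as in Definition~\ref{complexity-def}, with $T,R,Q$ the functions of Theorem~\ref{appr2-cover-theorem}.

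The only delicate point --- and the closest thing to an obstacle --- is the charging convention: one must verify that every step of ApproximateMaximumCover other than the explicitly counted sampling and query calls is genuinely $\bigO(1)$ in a uniform-cost RAM model, so that no hidden $\setSize$-dependence leaks into the $T(\cdot)$ term. This is immediate by inspection of the algorithm together with Virtual Function Implementation~2: the precomputed sample lists $R_i$ are manipulated only by setting white/black marks and counting white elements, and the input sets are never touched except through the counted operations $\randomElm(\cdot)$ and $\mathrm{Query}(\cdot)$. Hence the result is pure bookkeeping on top of Theorem~\ref{appr2-cover-theorem} and requires no new probabilistic analysis.
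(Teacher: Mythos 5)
Your proposal is correct and follows essentially the same route as the paper, which proves this theorem by combining Definition~\ref{complexity-def} with Theorem~\ref{appr2-cover-theorem} (via the $((0,0),(0,0))$-list specialization, i.e.\ $\beta=1$). You have simply spelled out the specialization and the cost re-charging more explicitly than the paper does.
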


\begin{proof} It follows from
Definition~\ref{complexity-def},
Lemma~\ref{improve-ramdom-sample-lemma} and
Theorem~\ref{appr2-cover-theorem}.
\end{proof}

\subsection{Maximum Coverage on High Dimension Space}\label{Apply-high-dimension}

In this section, we apply the randomized algorithm for high
dimensional maximum  coverage problem. It gives an application to a
$\#$P-hard problem. The geometric maximum coverage with 2D
rectangles problem was studied and proven to be NP-hard in
~\cite{MegiddoSupowit84}. An approximation scheme for the 2D maximum
coverage with rectangles was developed in~\cite{LiWangZhangZhang15}
with time $\bigO({n\over \epsilon}\log ({1\over \epsilon})+m({1\over
\epsilon})^{\bigO(\min(\sqrt{m},{1\over\epsilon})})$.

\begin{definition}
An axis aligned rectangular shape $R$ is called integer rectangle if
all of its corners are lattice points. A special integer rectangle
is a called {\it 0-1-rectangle } if each corner  $(x_1,x_2,\cdots,
x_n)$ has $x_i\in\{0,1\}$ for $i=1,2, \cdots, \setCount$.
\end{definition}

{\it Geometric Integer Rectangular Maximum Coverage Problem:} Given
a list of integer rectangles $R_1, R_2, \cdots, R_{\setCount}$ and
integer parameter $k$, find $k$ of them $R_{i_1},\cdots, R_{i_k}$
that has the largest number of lattice points. The 0-1 Rectangle
Maximum Coverage problem is the Geometric Integer Rectangular
Maximum Coverage Problem with each rectangle to be 0-1 rectangle.

 This problem is
\#P-hard even at the special case $k=\setCount$ for counting the
total number of lattice points in the union of the $\setCount$
rectangles. A logical formula is considered to be in DNF if and only
if it is a disjunction of one or more conjunctions of one or more
literals. Counting the number of assignments to make a DNF true is
\#P-hard~\cite{ValiantSharp}.

\begin{proposition}
The 0-1 Rectangle Maximum Coverage problem is \#P-hard.
\end{proposition}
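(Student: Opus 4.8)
The plan is to reduce the $\#$P-complete problem of counting satisfying assignments of a DNF formula ($\#$DNF, which is $\#$P-hard by~\cite{ValiantSharp}) to evaluating the optimum $C^*(L,k)$ of a 0-1 Rectangle Maximum Coverage instance with $k=\setCount$. The geometric idea is that each term of a DNF corresponds to a subcube of $\{0,1\}^n$, and the satisfying assignments of the whole formula are exactly the lattice points lying in the union of these subcubes.

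In detail, let $\phi=T_1\vee\cdots\vee T_m$ be a DNF formula over Boolean variables $x_1,\dots,x_n$. First I would normalize $\phi$: delete any term containing a variable both positively and negatively (such a term is unsatisfiable, and removing it changes neither the truth value nor the count) and remove repeated literals within a term, so I may assume each $T_j$ is a consistent conjunction of literals. For each $j$ define a $0$-$1$-rectangle $R_j\subseteq\realSet^n$ coordinatewise: the $i$-th coordinate interval of $R_j$ is $[1,1]$ if $x_i$ occurs positively in $T_j$, $[0,0]$ if $x_i$ occurs negatively in $T_j$, and $[0,1]$ if $x_i$ does not occur in $T_j$. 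Each $R_j$ is an axis-aligned integer rectangle all of whose corners lie in $\{0,1\}^n$, described by $O(n)$ bits, so the list $R_1,\dots,R_m$ together with the parameter $k=m$ is computable from $\phi$ in polynomial time.

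The key step, which I would verify directly, is that the lattice points contained in $R_j$ are exactly the assignments $a\in\{0,1\}^n$ satisfying $T_j$: in every coordinate the defining interval of $R_j$ lies inside $[0,1]$, so its only integer points are $0$ and/or $1$, and for each coordinate the permitted values are precisely those forced (or left free) by the literals of $T_j$. Hence the set of lattice points of $\Union_{j=1}^m R_j$ is exactly the set of satisfying assignments of $\phi$, so $\bigl|\Union_{j=1}^m R_j\bigr|=\#{\rm DNF}(\phi)$. Since $k=m=\setCount$, every feasible solution of the Maximum Coverage instance must select all $m$ rectangles, and therefore $C^*(L,k)=\bigl|\Union_{j=1}^m R_j\bigr|=\#{\rm DNF}(\phi)$. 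Thus an algorithm computing the optimum union size of a 0-1 Rectangle Maximum Coverage instance computes $\#{\rm DNF}$, which gives $\#$P-hardness under polynomial-time (Turing) reductions.

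The argument has essentially no hard step; the only things requiring care are the routine normalization of the DNF (deleting contradictory terms, since otherwise no valid rectangle corresponds to the term) and the observation that a $0$-$1$-rectangle contains no lattice point other than the $\{0,1\}$-vectors consistent with its defining term, so that ``number of lattice points in the union'' coincides exactly with $\#{\rm DNF}(\phi)$. If $\#$P-hardness is desired for a threshold/decision formulation instead, I would note that binary search over $t\in\{0,1,\dots,2^n\}$ with queries ``is $C^*(L,k)\ge t$?'' recovers the exact count with polynomially many calls.
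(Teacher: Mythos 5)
Your reduction is exactly the one the paper uses: each DNF term becomes the $0$-$1$-rectangle whose lattice points are the term's satisfying assignments, and with $k=\setCount$ the optimum union size equals $\#{\rm DNF}(\phi)$. Your write-up simply fills in the routine details (normalizing contradictory terms, verifying the lattice points of each rectangle) that the paper's one-line proof leaves implicit.
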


\begin{proof}
The reduction is from \#DNF, which is \#P-hard~\cite{ValiantSharp},
to the 0-1 Rectangle Maximum Coverage problem. For each conjunction
of literals $x_1^*x_2^*\cdots x_k^*$ (each $x_i^*\in
\{x_i,\overline{x_i}\}$ is a literal), all of the satisfiable
assignments to this term form the corners of a 0-1 rectangle.
\end{proof}

\begin{theorem}For parameters $\epsilon, \gamma\in (0,1)$, there is a randomized  an  $((1-{1\over e}),
\epsilon)$-approximation algorithm for the $d$-dimensional Geometric
Integer Rectangular Maximum Coverage Problem in type 0 model for,
and has time complexity $\bigO(T(\epsilon, \gamma,k,
\setCount)+R(\epsilon, \gamma,k,\setCount)d+Q(\epsilon, \gamma,
k,\setCount)d)$, where $T(.), R(.)$, and $Q(.)$ are the same as
those in Theorem~\ref{appr2-cover-theorem}. Furthermore, the failure
probability is at most $\gamma$.
\end{theorem}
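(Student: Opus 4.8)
The plan is to reduce the $d$-dimensional Geometric Integer Rectangular Maximum Coverage Problem to an instance of the maximum coverage problem in the type 0 model (Definition~\ref{input-list-def}) whose random-element and membership operations each cost $\bigO(d)$ time, and then invoke Theorem~\ref{main2-theorem}.

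First I would observe that an integer rectangle $R_i$ is a product $\prod_{j=1}^d [a_{i,j}, b_{i,j}]$ of intervals with integer endpoints, so the set $A_i$ of lattice points it contains is exactly $\prod_{j=1}^d \{a_{i,j}, a_{i,j}+1, \ldots, b_{i,j}\}$. Hence $|A_i| = \prod_{j=1}^d (b_{i,j}-a_{i,j}+1)$ can be computed exactly in $\bigO(d)$ arithmetic steps; this gives the exact cardinalities, i.e.\ $\delta_L=\delta_R=0$ in the terminology of Definition~\ref{app-list-def}. A call to \randomElm$(A_i)$ is implemented by drawing, for each coordinate $j$, an independent uniform value in $\{a_{i,j},\ldots,b_{i,j}\}$ (one call to \randomNum per coordinate); because $A_i$ is a Cartesian product, this yields a perfectly uniform lattice point of $R_i$, so $\alpha_L=\alpha_R=0$, and it costs $\bigO(d)$ time. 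A membership query Query$(x,A_i)$ for $x=(x_1,\ldots,x_d)$ is answered by checking $a_{i,j}\le x_j\le b_{i,j}$ for every $j$, again in $\bigO(d)$ time.

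Therefore the list $R_1,\ldots,R_{\setCount}$, together with the computed sizes and coordinate-wise samplers, is a legitimate $((0,0),(0,0))$-list in the type 0 model with per-operation costs $r(\setSize)=\bigO(d)$ and $q(\setSize)=\bigO(d)$, where $\setSize=\max_i|A_i|$ is the largest number of lattice points. Feeding this input to the algorithm of Theorem~\ref{main2-theorem} (which restates Theorem~\ref{appr2-cover-theorem}, equivalently Corollary~\ref{appr2-cover-corol}, with explicit per-operation costs $r(\cdot),q(\cdot)$ via Definition~\ref{complexity-def}) produces, with probability at least $1-\gamma$, an index set $H$ of size $k$ with $|\cup_{j\in H}A_j|\ge (1-{1\over e})C^*(L,k)$ together with an integer $z$ satisfying $(1-\epsilon)|\cup_{j\in H}A_j|\le z\le(1+\epsilon)|\cup_{j\in H}A_j|$; i.e.\ an $((1-{1\over e}),\epsilon)$-approximation with failure probability at most $\gamma$. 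The running time is $T(\epsilon,\gamma,k,\setCount)+R(\epsilon,\gamma,k,\setCount)\,r(\setSize)+Q(\epsilon,\gamma,k,\setCount)\,q(\setSize)=\bigO\left(T(\epsilon,\gamma,k,\setCount)+R(\epsilon,\gamma,k,\setCount)\,d+Q(\epsilon,\gamma,k,\setCount)\,d\right)$, exactly as claimed; in particular it is independent of $\setSize$, even though counting the lattice points in a union is $\#$P-hard by the preceding proposition.

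There is no substantial obstacle here: the work is entirely in verifying that the three black-box operations of Definition~\ref{input-list-def} are realized \emph{exactly} (no sampling bias, exact counts) and in $\bigO(d)$ time for axis-aligned integer rectangles, after which the result is an immediate instantiation of Theorem~\ref{main2-theorem}. The only point demanding a line of care is the uniformity of the coordinate-wise sampler, which holds precisely because an axis-aligned rectangle's lattice-point set factors as a product; for more general high-dimensional bodies (e.g.\ balls whose center is not a lattice point) one would instead fall back on the type 1 model with $(\alpha_L,\alpha_R)$-biased generators and approximate counts and apply Theorem~\ref{appr2-cover-theorem} directly, but that generality is not needed for the rectangular case.
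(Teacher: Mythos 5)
Your proposal is correct and follows essentially the same route as the paper: verify that an axis-aligned integer rectangle yields exact lattice-point counts, a perfectly uniform $\bigO(d)$-time sampler, and $\bigO(d)$-time membership queries, so the input is a $((0,0),(0,0))$-list, and then invoke Theorem~\ref{main2-theorem} with $r(\setSize)=q(\setSize)=\bigO(d)$. Your write-up simply spells out the Cartesian-product justification that the paper leaves implicit.
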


\begin{proof}
For each integer rectangular shape $R_i$, we can find the number of
lattice points in $R_i$. It takes $\bigO(d)$ time to generate a
random lattice at a $d$-dimensional rectangle, and answer a
membership query to an input rectangle. There is uniform random
generator for the set of lattice points in $R_i$. The input list of
sets is a $((0,0),(0,0))$-list as there is a perfect uniform random
sampling, and has the exact number of lattice points for each set.
 It follows from
Theorem~\ref{main2-theorem}.
\end{proof}

\subsection{Maximum Coverage with Sorted List}\label{sorted-list-section}

In this section, we discuss that each input set for the maximum
coverage problem is in a sorted array. We have the
Theorem~\ref{sorted-theorem}.

\begin{theorem}\label{sorted-theorem} Assume each input set is in a sorted list. Then
with probability at least $1-\gamma$,
ApproximateMaximumCover($A_1,A_2,\cdots, A_{\setCount}, \epsilon,
\gamma$) outputs a $(1-{1\over e})$-approximation in time
$\bigO(T(\epsilon, \gamma, k,\setCount)+R(\epsilon,
\gamma,k,\setCount)+Q(\epsilon, \gamma,k, \setCount)\log \setSize)$,
where $T(.), R(.)$, and $Q(.)$ are the same as those in
Theorem~\ref{appr2-cover-theorem}, parameters $\epsilon,\gamma\in
(0,1)$ and $\setSize=\max\{|A_1|,|A_2|,\cdots,|A_{\setCount}|\}$.
\end{theorem}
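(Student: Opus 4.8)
The plan is to exhibit a sorted array as a concrete realization of the $((0,0),(0,0))$-list model of Definition~\ref{app-list-def} and then invoke Theorem~\ref{main2-theorem} (equivalently, Theorem~\ref{appr2-cover-theorem} together with Definition~\ref{complexity-def}). So first I would check what operations a sorted array $a_{i,1}<a_{i,2}<\cdots<a_{i,|A_i|}$ supports and at what cost. The length of the array is exactly $|A_i|$, so we have an exact cardinality, i.e.\ $\delta_L=\delta_R=0$. The function \randomElm$(A_i)$ can be implemented by drawing a uniform index $j\in\{1,\dots,|A_i|\}$ with \randomNum and returning $a_{i,j}$; since the stored array has no repeated elements, this is a uniform (hence $(0,0)$-biased) generator, and it costs $r(\setSize)=\bigO(1)$. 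The function Query$(x,A_i)$ is implemented by binary search, which costs $q(\setSize)=\bigO(\log\setSize)$, where $\setSize=\max_i|A_i|$. Therefore the input list $L$ is a $((0,0),(0,0))$-list.

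Next I would run ApproximateMaximumCover$(A_1,\dots,A_{\setCount},\epsilon,\gamma)$ with Virtual Function Implementation~2 (the one-round sampling variant); note that ProcessSet$(A_i)$ does nothing here, so there is no sorting overhead — the arrays are already sorted. By Theorem~\ref{appr2-cover-theorem} and Corollary~\ref{appr2-cover-corol}, with probability at least $1-\gamma$ the algorithm returns an integer $z$ and a set $H\subseteq\{1,\dots,\setCount\}$ with $|\cup_{j\in H}A_j|\ge(1-\tfrac1e)C^*(L,k)$, $|H|=k$, and $(1-\epsilon)|\cup_{j\in H}A_j|\le z\le(1+\epsilon)|\cup_{j\in H}A_j|$; that is, a $((1-\tfrac1e),\epsilon)$-approximation in the sense of Definition~\ref{app-epsilon-def}. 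The complexity triple $(T,R,Q)$ is the one stated in Theorem~\ref{appr2-cover-theorem}, and by Lemma~\ref{improve-ramdom-sample-lemma} the estimates $s_{i,j}$ used by the greedy selection are accurate enough everywhere with the stated failure probability.

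Finally, I would plug $r(\setSize)=\bigO(1)$ and $q(\setSize)=\bigO(\log\setSize)$ into Definition~\ref{complexity-def} (or directly into Theorem~\ref{main2-theorem}), giving total time $T(\epsilon,\gamma,k,\setCount)+R(\epsilon,\gamma,k,\setCount)\cdot\bigO(1)+Q(\epsilon,\gamma,k,\setCount)\cdot\bigO(\log\setSize)=\bigO\!\left(T(\epsilon,\gamma,k,\setCount)+R(\epsilon,\gamma,k,\setCount)+Q(\epsilon,\gamma,k,\setCount)\log\setSize\right)$, which is exactly the claimed bound.

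\textbf{Main obstacle.} There is no deep obstacle; this is essentially a bookkeeping instantiation of the general theorem. The only points requiring care are that the uniform random element really costs $\bigO(1)$ (the array is indexable and duplicate-free, so a uniform index yields a uniform distinct element), and that \emph{every} membership test performed by RandomTest in the inner loop of the greedy step is charged to $Q$, so no hidden $\log\setSize$ factor is introduced into $R$ or into the $\bigO(1)$-per-step count $T$. Once that is checked, the bound follows from Theorem~\ref{main2-theorem} verbatim.
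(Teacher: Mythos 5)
Your proposal is correct and follows exactly the paper's own route: observe that a sorted array realizes a $((0,0),(0,0))$-list with $r(\setSize)=\bigO(1)$ (uniform index) and $q(\setSize)=\bigO(\log\setSize)$ (binary search), then invoke Lemma~\ref{improve-ramdom-sample-lemma} and Theorem~\ref{main2-theorem} to convert the $(T,R,Q)$ triple of Theorem~\ref{appr2-cover-theorem} into the stated running time. Your version is, if anything, more explicit than the paper's about why the sampling is unbiased and where the $\log\setSize$ factor attaches.
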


\begin{proof} The input list of sets is a $((0,0),(0,0))$-list as a sorted list provides perfect uniform random sampling, and has
the exact number of items for each set. It takes $\bigO(1)$ steps to
get a random sample from each input set $A_i$, and $\bigO(\log
\setSize)$ steps to check membership.
 It follows from Lemma~\ref{improve-ramdom-sample-lemma} and
Theorem~\ref{main2-theorem}.
\end{proof}

\subsection{Maximum Coverage with Input as Unsorted Arrays}

In this section, we show our approximation when each input set is an
unsorted array of elements. A straightforward method is to sort each
set or generate a B-tree for each set. This would take
$\bigO(\setCount m\log \setSize)$ time, where $\setSize$  is the
size of the largest set.

The following implementations will be used to support the case when
each input set is unsorted. Whenever a set is selected to the
solution, it will be sorted so that it will be efficient to test if
other random samples from the other sets belong to it.

\vskip 10pt

\addtocounter{problem-counter}{1}

 {\bf Algorithm~{\arabic{problem-counter} }: Virtual Function Implementation 3}

The parameters $L', A_i,s_i, R_i, \epsilon', \gamma, k,\setCount$
follow from those in Algorithm~{\arabic{problem-counter-max-cover}}.

\qquad RandomSamples$(A_i, R_i, \xi,\gamma,k,\setCount)$

\qquad \{

\qquad \qquad    The same as that in Virtual Function Implementation
2;

\qquad \}

\vskip 10pt

\qquad RandomTest(.)\{ the same as that in Virtual Function
Implementation 2;\qquad \}

\vskip 10pt

\qquad ApproximateSetDifferenceSize$(L', A_i, R_i, s_i, R_i,
\epsilon', \gamma, k,\setCount)$

\qquad \{   The same as that in Virtual Function Implementation 2;
\qquad \}

\vskip 10pt

\qquad  ProcessSet($A_i)$

\qquad \{

\qquad \qquad Sort $A_i$;

\qquad \}

\vskip 10pt

{\bf End of Algorithm}

\vskip 10pt

\begin{lemma}\label{improve-ramdom-sample-unsorted-lemma}Assume that $\epsilon, \gamma\in (0,1)$ and
$k,\setCount\in\integerSet $. The algorithm can be implemented with
sort merge for the function Merge(.) in complexity ($T_3(\xi,
\gamma,k,\setCount), Q_3(\xi, \gamma,k,\setCount),R_3(\xi,
\gamma,k,\setCount)$) such that for each set $A_i$, it maintains
$t_{i,j}=|R_i-U(L'(j))|$, where
\begin{eqnarray}
T_3(\xi, \gamma,k, \setCount)&=&\bigO(kg(\epsilon',\gamma, k,\setCount)\setCount+k\setSize(\log k+\log \setSize)),\\
Q_3(\xi, \gamma,k, \setCount)&=&kg(\epsilon',\gamma, k,\setCount) \setCount , \\
R_3(\xi, \gamma,k, \setCount)&=&g(\epsilon',\gamma, k,\setCount)\setCount, \ \ \ {\rm and}\\
\end{eqnarray}
 $\epsilon'$ is defined in algorithm ApproximateMaximumCover(.),
and $R_i$ is the set of random samples from $A_i$ and defined in
Lemma~\ref{improve-ramdom-sample-lemma}.
\end{lemma}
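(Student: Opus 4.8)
The plan is to reduce everything to the analysis already done for Virtual Function Implementation~2 in Lemma~\ref{improve-ramdom-sample-complexity-lemma}, since Implementation~3 differs from Implementation~2 only in that ProcessSet$(A_i)$ now sorts $A_i$, while RandomSamples$(.)$, RandomTest$(.)$ and ApproximateSetDifferenceSize$(.)$ are literally unchanged. First I would note that Part~\ref{part1-improve-ramdom-sample-complexity-lemma} of Lemma~\ref{improve-ramdom-sample-complexity-lemma} survives verbatim: by the same induction on the iteration index $j$ of the loop from line~\ref{For-loop-start} to line~\ref{call-app-diff-line-max-cover2} of ApproximateMaximumCover$(.)$, after the $j$-th pass a sample in $R_i$ is white precisely when it lies outside $A_{t_1}\cup\cdots\cup A_{t_j}=U(L'(j))$, so the count $t_{i,j}=r_{i,j}$ of white samples equals $\test(L'(j),R_i)=|R_i-U(L'(j))|$ (with multiplicity). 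Sorting a chosen set changes neither its element set nor the order in which sets are appended to $L'$, so this invariant is insensitive to the change in ProcessSet.

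Next I would tally the three measures. Random elements are produced only in the initial loop, where RandomSamples draws $w=g(\epsilon',\gamma,k,\setCount)$ samples for each of the $\setCount$ input sets and nothing more, so $R_3=g(\epsilon',\gamma,k,\setCount)\setCount$. In iteration $j$, for each of the at most $\setCount$ sets $A_i$ still in $L$, RandomTest tests each of the $w$ samples of $R_i$ for membership in the single set just appended to $L'$, i.e. $w$ queries per pair $(A_i,j)$, hence at most $k\,g(\epsilon',\gamma,k,\setCount)\setCount$ queries altogether, giving $Q_3=k\,g(\epsilon',\gamma,k,\setCount)\setCount$. For $T_3$, the marking loops cost $\bigO(1)$ bookkeeping per (sample, set, iteration) on top of the query, hence $\bigO(k\,g(\epsilon',\gamma,k,\setCount)\setCount)$ overall; drawing the $g(\epsilon',\gamma,k,\setCount)\setCount$ samples from the arrays is $\bigO(1)$ each and is absorbed into that term; and the only genuinely new cost is ProcessSet, which sorts one set of size at most $\setSize$ in each of the $k$ iterations, so a comparison sort contributes $\bigO(k\setSize\log\setSize)$ and the sort-merge book-keeping over the $k$ sorted selected sets a further $\bigO(k\setSize\log k)$, for the stated $\bigO(k\setSize(\log k+\log\setSize))$ term. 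Summing these yields the claimed $(T_3,Q_3,R_3)$, and applying Lemma~\ref{improve-ramdom-sample-lemma} to the sublists $L'(j)$ with $j\le k$ gives the accuracy of each $s_{i,j}$ needed downstream.

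The accounting above is routine, so the point I expect to warrant actual care — rather than the main difficulty, since there is no deep one — is to check that every membership query the algorithm issues is against an already-sorted set, so that when this lemma is combined with Definition~\ref{complexity-def} one may take $q(\setSize)=\bigO(\log\setSize)$ instead of $\bigO(\setSize)$. This is exactly why RandomTest in Implementations~2 and~3 queries only the most recently chosen set $A_{t_j}$, on which ProcessSet has just been invoked, and never the still-unsorted candidates $A_i\in L$; and one must also observe that each chosen set is sorted exactly once, immediately upon selection, so the total sorting work is $\bigO(k\setSize\log\setSize)$ and does not accumulate a spurious factor of $k$ or $\setCount$. With these observations the lemma follows along the same lines as Lemma~\ref{improve-ramdom-sample-complexity-lemma}.
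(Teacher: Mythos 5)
Your proposal is correct and follows essentially the same route as the paper: reduce to the analysis of Virtual Function Implementation~2 (so $R_3$ and $Q_3$ are unchanged and the white-sample invariant $t_{i,j}=|R_i-U(L'(j))|$ carries over), and charge the only new cost, sorting/merging each selected set once upon selection, as the extra $\bigO(k\setSize(\log k+\log\setSize))$ term in $T_3$. Your added observation that membership queries are issued only against the just-selected (hence already sorted) set is a point the paper leaves implicit, but it does not change the argument.
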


\begin{proof}Let $A_1,\cdots, A_{\setCount}$ be the input list of sets, and $k$ be the integer parameter in the input.

Consider $g(\epsilon',\gamma, k,\setCount)$ random samples for each
set as in Lemma~\ref{improve-ramdom-sample-lemma}, where $\epsilon'$
is defined in algorithm ApproximateMaximumCover(.). After a set
$A_i$ is selected, it takes $\bigO(\setSize\log \setSize)$ to sort
the elements in newly selected set, and adjust the random samples
according to ApproximateSetDifferenceSize(.) in Virtual Function
Implementation 3.

It let $R_j$ become $R_j-A_i$ for all $j\not=i$. Thus, it takes
$kg(\epsilon',\gamma, k,\setCount) \setCount $ time.
Mark those samples that are in the selected sets, and count the
samples left (unmarked). It is similar to update $R_j$ and $t_{i,j}$
as in the proof of
Lemma~\ref{improve-ramdom-sample-complexity-lemma}.
 The approximation $t_{i,j}$ to
$|A_i-U(L'(j))|$ can be computed as ${t_{i,j}\over w}\cdot s_i$
($w=g(\epsilon,\gamma, k,\setCount)$) as in line
\ref{alg-approx-diff-return-line} in ApproximateDifference(.).

\end{proof}

\begin{theorem}Assume each input set is an unsorted list. Then
with probability at least $1-\gamma$,
ApproximateMaximumCover($A_1,A_2,\cdots, A_{\setCount}, \epsilon,
\gamma$) outputs a an  $((1-{1\over e}), \epsilon)$-approximation
for the maximum coverage in type 0 model. Its time complexity is
$\bigO({k^5\over \epsilon^2}(k\log({3 \setCount\over k})+\log
{1\over \gamma})\setCount)+k\setSize(\log k+\log \setSize)))$, where
$\epsilon,\gamma\in (0,1)$ and
$\setSize=\max\{|A_1|,|A_2|,\cdots,|A_{\setCount}|\}$.
\end{theorem}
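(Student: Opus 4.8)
The plan is to obtain the statement as the specialization of Theorem~\ref{main2-theorem} to the unsorted-array representation, using Lemma~\ref{improve-ramdom-sample-unsorted-lemma} for the cost of Virtual Function Implementation~3 and the remark that an unsorted array of the (distinct) elements of $A_i$ is a $((0,0),(0,0))$-list in the sense of Definition~\ref{app-list-def}. First I would check the list-type claim: the array length is exactly $|A_i|$, so $\delta_L=\delta_R=0$, and $\randomElm(A_i)$, implemented by returning $A_i[\randomNum(|A_i|)+1]$, is a perfectly uniform generator, so $\alpha_L=\alpha_R=0$; hence $\beta={(1-\alpha_L)(1-\delta_L)\over(1+\alpha_R)(1+\delta_R)}=1$ and the ratio $1-{1\over e^{\beta}}$ of Theorem~\ref{appr2-cover-theorem} becomes $1-{1\over e}$.

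Next I would argue correctness. Virtual Function Implementation~3 coincides with Implementation~2 except that ProcessSet additionally sorts each newly selected set, and sorting alters neither the sample list $R_i$, nor its white/black marking, nor the counts $r_{i,j}$, nor the greedy choices $t_j$; so the inductive argument behind Lemma~\ref{improve-ramdom-sample-unsorted-lemma} shows that each $r_{i,j}$ returned by RandomTest equals $\test(L_*'(j),R_i)$, and ApproximateMaximumCover with Implementation~3 returns exactly the pair $(z,H)$ it would return with Implementation~2. Taking $\xi=\min({\epsilon\eta\beta\over 4e^{\beta}k},\epsilon)$ with $\eta=e^{-1/4}$ as in the proof of Theorem~\ref{appr2-cover-theorem}, the accuracy half of Lemma~\ref{improve-ramdom-sample-lemma} feeds Lemma~\ref{appr1-cover-lemma}, and Lemma~\ref{help-app-lemma} with $\beta=1$ upgrades the ratio $1-(1-{\beta\over k})^k-\xi$ to $1-{1\over e}$; together they give that with probability at least $1-\gamma$ the output satisfies $|\cup_{j\in H}A_j|\ge(1-{1\over e})C^*(L,k)$, $|H|=k$, and $(1-\epsilon)|\cup_{j\in H}A_j|\le z\le(1+\epsilon)|\cup_{j\in H}A_j|$, i.e. a $((1-{1\over e}),\epsilon)$-approximation in the type~0 model.

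For the running time I would apply Definition~\ref{complexity-def} to the triple $(T_3,R_3,Q_3)$ of Lemma~\ref{improve-ramdom-sample-unsorted-lemma}: drawing a uniform element of an unsorted array costs $r(\setSize)=\bigO(1)$, while a membership query is only ever issued against a set that ProcessSet has already sorted, so it is a binary search of cost $q(\setSize)=\bigO(\log\setSize)$, and each of the $k$ selected sets is sorted once, which is the source of the additive $\bigO(k\setSize(\log k+\log\setSize))$ term. Substituting $\epsilon'={\xi\over 4k}$ with $\xi=\Theta(\epsilon/k)$ (valid since $\beta=1$) into $g(\epsilon',\gamma,k,\setCount)=\bigO({1\over(\epsilon')^2}(k\log({3\setCount\over k})+\log{1\over\gamma}))$ from Part~\ref{s2-lem} of Lemma~\ref{improve-ramdom-sample-lemma} gives $g=\bigO({k^4\over\epsilon^2}(k\log({3\setCount\over k})+\log{1\over\gamma}))$, so the sampling-and-query work $\bigO(kg\setCount)$ equals $\bigO({k^5\over\epsilon^2}(k\log({3\setCount\over k})+\log{1\over\gamma})\setCount)$; adding the sorting overhead yields the stated bound.

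The step I expect to need the most care is this last time accounting: pinning down exactly how the membership-query cost against already-sorted selected sets is folded into the $T_3(\xi,\gamma,k,\setCount)$ of Lemma~\ref{improve-ramdom-sample-unsorted-lemma}, and confirming that only the $k$ selected sets (never the full input list) get sorted, so that sorting contributes the additive $k\setSize(\log k+\log\setSize)$ term rather than a multiplicative $\log\setSize$ on the dominant term. There is no new mathematics beyond assembling the cited results; this theorem is the unsorted-array instantiation of Theorem~\ref{main2-theorem}, with Lemmas~\ref{improve-ramdom-sample-lemma} and~\ref{improve-ramdom-sample-unsorted-lemma} supplying the sample-complexity and data-structure bookkeeping.
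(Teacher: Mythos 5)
Your proposal is correct and follows essentially the same route as the paper: specialize to a $((0,0),(0,0))$-list so $\beta=1$, get accuracy from Lemmas~\ref{improve-ramdom-sample-lemma}, \ref{appr1-cover-lemma} and \ref{help-app-lemma} with $\xi=\min({\epsilon\eta\beta\over 4e^{\beta}k},\epsilon)$, and get the time bound from Lemma~\ref{improve-ramdom-sample-unsorted-lemma} plus the one-time $\bigO(k\setSize(\log k+\log\setSize))$ cost of organizing each selected set (the paper builds a B-tree where you sort and binary-search, an immaterial difference). You in fact supply more of the arithmetic (the substitution $\epsilon'=\Theta(\epsilon/k^2)$ into $g$) than the paper's own proof does.
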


\begin{proof}At line~\ref{merege-line} in ApproximateMaximumCover(.), we build up
a B-tree to save all the elements in the sets that have been
collected to $L'$. The total amount time to build up $L'$ in the
entire algorithm is $\bigO(k\setSize(\log k+\log \setSize))$.

Use $g(\epsilon',\gamma, k,\setCount)$ random samples for each set
$S_i$. It follows from Lemma~\ref{improve-ramdom-sample-lemma},
Lemma~\ref{appr1-cover-lemma}, and
Lemma~\ref{improve-ramdom-sample-unsorted-lemma}. Select
$\xi=\min({\epsilon\eta\beta \over 4e^{\beta}k},\epsilon)$,
 where $\eta$ is
defined in Lemma~\ref{help-app-lemma} and $\beta$ is the same as
that in Lemma~\ref{appr1-cover-lemma}. With the condition
$\epsilon\in (0,1)$, the accuracy of approximation follows from
Lemma~\ref{help-app-lemma}.
\end{proof}

\subsection{Maximum Coverage with B-Tree}\label{online-section}

In this section, we discuss an online model. The sorted array
approach is not suitable for the online model as insertion or
deletion may take $\Omega(\setSize)$ steps in the worst case.
Therefore, we discuss the following model.

The B-tree implementation is suitable for dynamic sets that can
support insertion and deletion for their members. It is widely used
in database as a fundamental data structure. It takes $\bigO(\log
\setSize)$ time for query, insertion, or deletion. B-tree can be
found in a standard text book of algorithm (for example,
\cite{CormenLeisersonRivestStein01}).

We slightly revise the B-tree structure for its application to the
maximum coverage problem.  Each set for the maximum coverage problem
is represented by a B-tree that saves all data in the leaves. Let
each internal node $t$ contain a number $C(t)$ for the number of
leaves under it. Each element has a single leaf in the B-tree (we do
not allow multiple copies to be inserted in a B-tree for one
element). We also let each node contain the largest values from the
subtrees with roots at its children.  It takes $\bigO(\log
\setSize)$ time to generate a random element, and $\bigO(\log
\setSize)$ time to check if an element belongs to a set.

\begin{definition}
For two real numbers $a$ and $b$, define N$[a,b]$ to be the set of
integers $x$ in $[a,b]$.
\end{definition}

\vskip 10pt

\addtocounter{problem-counter}{1}

 {\bf Algorithm~{\arabic{problem-counter} }:} Rand($T, t)$

Input a B-tree $T$ with a node $t$ in $T$.

{\bf Steps:}

\begin{enumerate}[1.]

\item \qquad If $t$ is a leaf, return $t$;

\item
\qquad Let $t_1,\cdots, t_k$ be the children of $t$;

\item
\qquad Select a random integer $i\in N[1,C(t_1)+\cdots +C(t_k)]$;

\item
\qquad Partition N$[1,C(t_1)+\cdots +C(t_k)]$ into

\qquad $I_1=$N$[1, C(t_1)], I_2=$N$[C(t_1)+1,
C(t_1)+C(t_2)],\cdots,$ and

\qquad $I_k=$N$[C(t_1)+\cdots+C(t_{k-1})+1, C(t_1)+\cdots+C(t_{k}]$;

\item

\qquad Find $I_j$ with $i\in I_j$ and return Rand($T, t_j)$;

\end{enumerate}

{\bf End of Algorithm}

\vskip 10pt

\begin{lemma}\label{random-b-tree-lemma}
There is a B-tree implementation such that it can generate a random
element in $\bigO(\log \setSize)$ time, where $\setSize$  is the
number of elements saved in the B-tree.
\end{lemma}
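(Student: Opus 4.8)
The plan is to verify two things about the procedure Rand$(T,t)$: that calling it at the root returns a \emph{uniformly} random leaf of $T$ (hence a uniformly random element of the set it represents), and that the call finishes in $\bigO(\log \setSize)$ time. The correctness argument is an induction on the height of the subtree rooted at $t$, asserting that Rand$(T,t)$ returns each leaf $\ell$ below $t$ with probability exactly $1/C(t)$.

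The base case is immediate: if $t$ is a leaf then $C(t)=1$ and Step~1 returns $t$ deterministically. For the inductive step, let $t$ have children $t_1,\dots,t_k$. Since every leaf below $t$ lies below exactly one child, $C(t)=\sum_{j=1}^{k}C(t_j)$; this is precisely the bookkeeping that makes Step~4 a valid partition of $\mathrm{N}[1,C(t)]$ into intervals $I_1,\dots,I_k$ with $|I_j|=C(t_j)$. The integer $i$ drawn in Step~3 is uniform on $\mathrm{N}[1,C(t)]$ (realized by one call to $\randomNum(C(t))$, shifted into the range $\mathrm{N}[1,C(t)]$, which the model of Definition~\ref{model-def-lower-bound} supplies), so $\Pr[i\in I_j]=C(t_j)/C(t)$. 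Conditioned on $i\in I_j$ the procedure recurses into $t_j$, and by the induction hypothesis it then returns any fixed leaf $\ell$ below $t_j$ with probability $1/C(t_j)$. Hence, for the unique child $t_j$ lying above a given leaf $\ell$, $\Pr[\mathrm{Rand}(T,t)=\ell]=\frac{C(t_j)}{C(t)}\cdot\frac{1}{C(t_j)}=\frac{1}{C(t)}$, which completes the induction. Applied at the root this yields probability $1/C(\mathrm{root})=1/\setSize$ for every element, i.e. uniform sampling. For the running time, recall that a B-tree on $\setSize$ elements has height $\bigO(\log\setSize)$ and that each node has at most a fixed constant number of children. Before its recursive call, each invocation of Rand$(T,t)$ does only $\bigO(1)$ work: summing the $\bigO(1)$ counts $C(t_1),\dots,C(t_k)$, one draw of $\randomNum(\cdot)$, forming the $\bigO(1)$ intervals $I_j$, and scanning them to locate the one containing $i$. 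Since the recursion descends exactly one tree level per call and halts at a leaf, the total cost is $\bigO(1)$ times the height, namely $\bigO(\log\setSize)$. (The counts $C(t)$ are maintained during insertions and deletions by adjusting the $\bigO(\log\setSize)$ counters along the affected root-to-leaf path, at no extra asymptotic cost, which is what makes the structure usable in the online setting of Section~\ref{online-section}.)

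I do not anticipate a real obstacle here. The only points that need care are the identity $C(t)=\sum_j C(t_j)$, which underlies both the partitioning step and the conditional-probability computation, and the observation that because the B-tree branching factor is a fixed constant, the per-level overhead of selecting and locating $I_j$ is genuinely $\bigO(1)$ rather than $\bigO(\log k)$; both follow directly from the revised B-tree definition stated just above the lemma.
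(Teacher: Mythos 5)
Your proof is correct and follows essentially the same route as the paper: descend from the root, branching into child $t_j$ with probability $C(t_j)/\sum_i C(t_i)$, and conclude uniformity by induction on the subtree height. The paper merely asserts that "a trivial induction" gives equal probability for each leaf; you have simply written out that induction (using $C(t)=\sum_j C(t_j)$ and the product $\frac{C(t_j)}{C(t)}\cdot\frac{1}{C(t_j)}=\frac{1}{C(t)}$) together with the standard $\bigO(\log\setSize)$ height bound, so no difference in substance.
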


\begin{proof}
All data are in the leaves. Let each internal node $t$ contain a
number $C(t)$ for the number of leaves below it. Start from the
root. For each internal node $t$ with children $t_1,\cdots t_k$.
With probability ${C(t_i)\over C(t_1)+\cdots +C(t_k)}$, go the next
node $N_i$. Clearly, a trivial induction can show that each leaf has
an equal chance to be returned.
\end{proof}

Each set is represented by a B-tree that saves all data in the
leaves. Let each internal node contain a number for the number of
leaves below it. It takes $\bigO(\log \setSize)$ time to generate a
random element, and $\bigO(\log \setSize)$ time to check if an
element belongs to a set. We have $r(\setSize)=\bigO(\log \setSize)$
and $q(\setSize)=\bigO(\log \setSize)$.

\begin{theorem}Assume each input set is in a B-tree. Then
we have \begin{itemize}
\item
it takes $\bigO(\log \setSize)$ time for insertion and deletion, and
\item
with probability at least $1-\gamma$,
ApproximateMaximumCover($A_1,A_2,\cdots, A_{\setCount}, \epsilon,
\gamma$) outputs a $(1-{1\over e})$-approximation in time
$\bigO(T(\epsilon, \gamma, \setCount)+R(\epsilon,
\gamma,\setCount)\log \setSize +Q(\epsilon, \gamma, \setCount)\log
\setSize)$, where $T(.), R(.)$, and $Q(.)$ are the same as those in
Theorem~\ref{appr2-cover-theorem}, $\epsilon,\gamma\in (0,1)$ and
$\setSize=\max\{|A_1|,|A_2|,\cdots,|A_{\setCount}|\}$.
\end{itemize}
\end{theorem}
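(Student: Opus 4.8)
The plan is to obtain the theorem as a direct instantiation of Theorem~\ref{main2-theorem} once the per-operation cost of the two black-box primitives on a B-tree is pinned down. First I would record the two standard facts about the augmented B-tree of Section~\ref{online-section} (cf.~\cite{CormenLeisersonRivestStein01}): a B-tree holding $\setSize$ elements has height $\bigO(\log \setSize)$, so a membership query Query$(x,A_i)$ walks a single root-to-leaf path guided by the stored subtree maxima and runs in $\bigO(\log \setSize)$ time; and an insertion or deletion likewise touches only one root-to-leaf path (plus $\bigO(1)$ nodes in the rebalancing), along which the leaf-count fields $C(\cdot)$ and the stored maxima can be repaired in $\bigO(1)$ time per node, so insertion and deletion also take $\bigO(\log \setSize)$ time. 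This establishes the first bullet.

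Next I would invoke Lemma~\ref{random-b-tree-lemma}: using the $C(\cdot)$ counts, the procedure Rand$(T,t)$ started at the root returns a uniformly random leaf in $\bigO(\log \setSize)$ time, so \randomElm$(A_i)$ has cost $r(\setSize)=\bigO(\log \setSize)$, while Query has cost $q(\setSize)=\bigO(\log \setSize)$ as above. Moreover the count stored at the root of the B-tree for $A_i$ equals $|A_i|$ exactly, and the generator of Lemma~\ref{random-b-tree-lemma} is perfectly uniform; hence the input list $L$ is a $((0,0),(0,0))$-list in the sense of Definition~\ref{app-list-def} (that is, $\alpha_L=\alpha_R=\delta_L=\delta_R=0$), which is precisely the setting in which Theorem~\ref{main2-theorem} and Corollary~\ref{appr2-cover-corol} yield the ratio $1-\frac{1}{e}$ with failure probability at most $\gamma$.

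Finally I would feed these costs into Definition~\ref{complexity-def}: ApproximateMaximumCover$(A_1,\dots,A_{\setCount},\epsilon,\gamma)$ has complexity triple $(T(\epsilon,\gamma,k,\setCount),R(\epsilon,\gamma,k,\setCount),Q(\epsilon,\gamma,k,\setCount))$ with $T,R,Q$ as in Theorem~\ref{appr2-cover-theorem}, so its running time is $T(\epsilon,\gamma,k,\setCount)+R(\epsilon,\gamma,k,\setCount)\,r(\setSize)+Q(\epsilon,\gamma,k,\setCount)\,q(\setSize)=\bigO\!\bigl(T(\epsilon,\gamma,\setCount)+R(\epsilon,\gamma,\setCount)\log \setSize+Q(\epsilon,\gamma,\setCount)\log \setSize\bigr)$, which together with the first bullet gives the claim. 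I do not anticipate a genuine obstacle; the only point needing care is that throughout the run (in particular inside ProcessSet$(A_{t_j})$ of the Virtual Function Implementation) we keep each $A_i$ in its B-tree representation rather than converting to a sorted array as in the unsorted-input version, so that the $\bigO(\log \setSize)$ per-operation bounds — and hence the online insertion/deletion guarantee of the first bullet — are preserved for the entire computation.
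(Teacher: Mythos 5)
Your proposal is correct and follows essentially the same route as the paper: establish $r(\setSize)=q(\setSize)=\bigO(\log \setSize)$ via the augmented B-tree and Lemma~\ref{random-b-tree-lemma}, observe that exact leaf counts and uniform leaf sampling make the input a $((0,0),(0,0))$-list, and then plug into Theorem~\ref{main2-theorem} with Definition~\ref{complexity-def}. Your write-up is in fact more explicit than the paper's (which dispatches the claim in two sentences), and your closing remark about keeping the B-tree representation throughout ProcessSet is a sensible clarification rather than a deviation.
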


\begin{proof} The input list of sets is a $((0,0),(0,0))$-list as B-tree provide perfect uniform random sampling, and has
the exact number of items for each set.
 It follows from Lemma~\ref{improve-ramdom-sample-lemma} and
Theorem~\ref{main2-theorem}.
\end{proof}

\subsection{Maximum Coverage with Hashing Table}

Each set $S_i$ is saved in an unsorted  array $A_i[\ ]$. A hashing
table $H_i[\ ]$ is used to indicate if an element belongs to a set.
We can let each cell $j$ of hashing table to contain a linked list
that holds all the elements $x$ in $S_i$ with $H_i(x)=j$.

\begin{definition}
Let $\alpha(\setSize)$ and $\beta(\setSize)$ be two functions from
$\integerSet$ to $\integerSet$ . A set $S$ of $\setSize$  elements
is saved in a $(\alpha(.), \beta(.))$-hashing table
$H[1...\hashingSize]$  if the following conditions are satisfied:
\begin{itemize}
\item
There is an integer $\hashingSize\le \alpha(\setSize) \setSize$.
\item
There is a hashing function $h(.)$ with range $\{1,2,\cdots,
\hashingSize\}$ such that there are most $\beta(\setSize)$ elements
in $S$ to be mapped to the same value by the function $h(.)$. In
other words, $|x: x\in S$ and $h(x)=j\}|\le \beta(\setSize)$ for
every $j\in \{1,2,\cdots, \hashingSize\}$.
\item
The table $H[1...\hashingSize]$ is of size $\hashingSize\le
\alpha(\setSize)\setSize$ such that entry $H[j]$ points to a B-tree
that stores all the elements $x\in S$ with $h(x)=j$.
\end{itemize}
\end{definition}

Assume each input set is saved in  a $(\alpha(.), \beta(.))$-hashing
table. Each set $S_i$ is saved in an unsorted  array $A_i[\ ]$. A
hashing table $H_i[\ ]$ is used to indicate if an element belong to
a set. It takes $\bigO(1)$ time to generate a random element of
$S_i$ by accessing $A_i[\ ]$. It takes $\bigO(\log\beta(\setSize))$
time to check the membership problem by accessing the hashing table
$H_i(\ )$. This method makes it easy to add and delete elements from
the set. It takes $\bigO(\log\beta(\setSize))$ time for insertion
and deletion when the $A_i[\ ]$ and $H[\ ]$ are not full. It needs
to increase the hashing table size when it is full, and take
$\bigO(\setSize(\alpha(\setSize)+\log \beta(\setSize)))$ time to
build a new table. Thus, it takes $\bigO(\alpha(\setSize)+\log
\beta(\setSize))$ amortized time for insertion and deletion.

The array size $A_i[\ ]$ and hashing table size $H_i[\ ]$ are larger
than the size of the set $S_i$. When one of them is full, it will be
doubled by applying for a double size memory. If its usage is less
than half, it can be shrunk  by half. We show the existence of a
$(O(1),O(\log \setCount))$-Hashing Table for a set of size
$\setSize$  under some assumption.
\begin{definition}
For a hashing table $H[1...\hashingSize]$ of size $M$ and a hashing
function $h(.)$ with range $\{1,2,\cdots, \hashingSize\}$, function
$h(.)$ is $d$-uniform if $\prob(h(x)=j)\le {d\over \hashingSize}$
for every $j\in \{1,2,\cdots, \hashingSize\}$, where $d$ is a real
number in $[1,+\infty)$.
\end{definition}


\begin{theorem}
Let $\alpha(\setSize)$ and $\beta(\setSize)$ be two functions from
$\integerSet$ to $\integerSet$ . Assume each input set is saved in
a $(\alpha(.), \beta(.))$-hashing table. Then there is a
$\bigO(T(\epsilon, \gamma, k,\setCount)+R(\epsilon,
\gamma,k,\setCount)+Q(\epsilon,
\gamma,k,\setCount)\log\beta(\setSize))$ time randomized algorithm
ApproximateMaximumCover(.) such that with probability at least
$1-\gamma$, ApproximateMaximumCover($A_1,A_2,\cdots, A_{\setCount},
\epsilon, \gamma$) does not output an  $((1-{1\over e}),
\epsilon)$-approximation for the maximum coverage in type 0 model,
where $\epsilon,\gamma\in (0,1)$,  $T(.), R(.)$, and $Q_U(.)$ are
the same as those in Theorem~\ref{appr2-cover-theorem}.
\end{theorem}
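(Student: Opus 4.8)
The plan is to reduce this statement directly to Theorem~\ref{main2-theorem}, by checking that a $(\alpha(\cdot),\beta(\cdot))$-hashing table implements the two primitives of the type 0 model (Definition~\ref{input-list-def}) with perfect uniformity, exact cardinality, and the stated per-operation costs. First I would observe that each set $S_i$ is stored in a companion unsorted array $A_i[\ ]$ holding all $\setSize_i=|A_i|$ elements, so the cardinality is known exactly and a call to $\randomElm(A_i)$ can be realized by drawing a uniform index via $\randomNum(\setSize_i)$ and returning the corresponding array entry. This takes $\bigO(1)$ time and is a perfectly uniform (that is, $(0,0)$-biased) generator, so the input is a $((0,0),(0,0))$-list in the sense of Definition~\ref{app-list-def}, with sampling cost $r(\setSize)=\bigO(1)$.

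Next I would analyze the membership primitive. To answer a query "$x\in A_i?$" one computes $h_i(x)$, follows the pointer $H_i[h_i(x)]$ to the bucket, and searches the B-tree stored there. Because the table is $(\alpha(\cdot),\beta(\cdot))$-hashing, every bucket contains at most $\beta(\setSize)$ elements, so the B-tree search runs in $\bigO(\log\beta(\setSize))$ time; hence $q(\setSize)=\bigO(\log\beta(\setSize))$.

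With $r(\setSize)=\bigO(1)$ and $q(\setSize)=\bigO(\log\beta(\setSize))$ in hand, I would invoke Theorem~\ref{main2-theorem}: running ApproximateMaximumCover$(A_1,\dots,A_{\setCount},\epsilon,\gamma)$ gives, with failure probability at most $\gamma$, an $((1-{1\over e}),\epsilon)$-approximation for the maximum coverage in the type 0 model (the ratio $1-{1\over e}$ arises because $\beta=1$ for a $((0,0),(0,0))$-list, exactly as in Corollary~\ref{appr2-cover-corol}), and by Definition~\ref{complexity-def} the time is $T(\epsilon,\gamma,k,\setCount)+R(\epsilon,\gamma,k,\setCount)\,r(\setSize)+Q(\epsilon,\gamma,k,\setCount)\,q(\setSize)=\bigO\bigl(T(\epsilon,\gamma,k,\setCount)+R(\epsilon,\gamma,k,\setCount)+Q(\epsilon,\gamma,k,\setCount)\log\beta(\setSize)\bigr)$, where $T,R,Q$ are the functions of Theorem~\ref{appr2-cover-theorem}. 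I do not expect a real obstacle; the only point needing a little care is distinguishing the static query/sampling cost from the amortized cost of growing or shrinking the table. Since the theorem concerns a single run of ApproximateMaximumCover on a fixed input, no insertions or deletions occur during the approximation, so the static bounds $r(\setSize)=\bigO(1)$ and $q(\setSize)=\bigO(\log\beta(\setSize))$ are all that is used, and the dynamic-maintenance discussion in the surrounding text does not enter the stated complexity.
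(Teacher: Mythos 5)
Your proposal is correct and follows essentially the same route as the paper: verify that the hashing-table representation realizes the type 0 model as a $((0,0),(0,0))$-list with $r(\setSize)=\bigO(1)$ and $q(\setSize)=\bigO(\log\beta(\setSize))$, then plug these into Theorem~\ref{main2-theorem} and Definition~\ref{complexity-def}. In fact your derivation of the query cost as $\bigO(\log\beta(\setSize))$ via the B-tree buckets is the reading that matches the stated complexity bound, whereas the paper's own one-line proof loosely writes $\bigO(\beta(\setSize))$ for the membership query; your version is the more careful one.
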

\begin{proof} It takes $\bigO(1)$ time to generate a random element, and
$\bigO(\beta(\setSize))$ time to make a membership query when a set
is saved in a $(\alpha(.), \beta(.))$-hashing table. It follows from
Theorem~\ref{main2-theorem}.
\end{proof}

We tend to believe that a set can be saved in a
$(O(1),O(1))$-hashing table. Assume that each set can be saved in a
$(O(1), O(1))$-hashing table. It takes $\bigO(1)$ time to generate a
random element of $S_i$ by accessing $A_i[\ ]$. It takes $\bigO(1)$
time to check the membership problem by accessing the hashing table
$H_i(\ )$. We have $r(\setSize)=\bigO(1)$ and
$q(\setSize)=\bigO(1)$.

\begin{proposition}Let $d$ be a fixed real in $[1,+\infty)$. Let $a_1\le a_2$ be fixed real numbers in $(0,+\infty)$,
 and $\epsilon$ be fixed real numbers in $(0,1)$.
Let $h(.)$ be a $d$-uniform hashing function for some $\hashingSize$
in the range $[a_1\setSize, a_2\setSize)$ for fixed $a_1$, and
$a_2$. Then every set of size $\setSize$ has a $(O(1),
O({b(\setSize)\log \setSize\over \log\log \setSize}))$-Hashing Table
via $h(.)$ with probability at least $1-\littleo(1)$ for any
nondecreasing unbounded function $b(\setSize):$
$\integerSet\rightarrow \integerSet$.
\end{proposition}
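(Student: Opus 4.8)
The plan is to run the classical ``balls into bins'' maximum--load argument, using the Chernoff upper--tail estimate already recorded as Theorem~\ref{ourchernoff2-theorem}. View the $\setSize$ distinct elements of the set $S$ as balls and the $\hashingSize$ hash--table slots as bins. By $d$-uniformity each ball lands in a fixed slot $j$ with probability at most $d/\hashingSize\le d/(a_1\setSize)$, and I use the standard modelling assumption that the slot choices $h(x_1),\dots,h(x_{\setSize})$ of the $\setSize$ elements are mutually independent (this is the ``assumption about the performance of the hashing function'' that the paper's earlier discussion of hashing defers). Fix a slot $j$ and write $X_j=\sum_{i=1}^{\setSize}Y_i$, where $Y_i$ indicates $h(x_i)=j$; then each $Y_i$ equals $1$ with probability at most $p=d/(a_1\setSize)$, so $p\setSize=d/a_1=:\mu_0$ is a constant depending only on $d$ and $a_1$.

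First I would apply Theorem~\ref{ourchernoff2-theorem} with $\sampleCount=\setSize$ and $1+\delta=t/\mu_0$, which gives, for every $t>\mu_0$,
\[
\Pr(X_j>t)<\left[\frac{e^{\delta}}{(1+\delta)^{1+\delta}}\right]^{\mu_0}\ \le\ \left(\frac{e\mu_0}{t}\right)^{t}.
\]
Then I would set the threshold $t=t(\setSize)=b(\setSize)\,\frac{\log\setSize}{\log\log\setSize}$. Since $b$ is nondecreasing and unbounded, $t\to\infty$ and $\log t\ge\frac12\log\log\setSize$ for all large $\setSize$, so
\[
\log\bigl((e\mu_0/t)^t\bigr)=t\bigl(1+\log\mu_0-\log t\bigr)\le-\frac12\,t\log t\le-\frac14\,b(\setSize)\log\setSize
\]
once $\setSize$ is large; hence $\Pr(X_j>t)\le\setSize^{-b(\setSize)/4}$ eventually. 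A union bound (inequality~(\ref{prob-Add-Ineqn})) over the at most $a_2\setSize$ slots bounds the probability that \emph{some} slot receives more than $t$ elements by $a_2\,\setSize^{\,1-b(\setSize)/4}$, which is $\littleo(1)$ because $b(\setSize)\to\infty$. On the complementary event every slot holds at most $t=O\bigl(b(\setSize)\log\setSize/\log\log\setSize\bigr)$ elements; combined with $\hashingSize<a_2\setSize$ (so the size parameter is $O(1)$) and storing each slot's bucket in a B-tree, this is exactly an $\bigl(O(1),\,O(b(\setSize)\log\setSize/\log\log\setSize)\bigr)$-hashing table, which proves the statement.

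Everything else is bookkeeping: the two asymptotic estimates ($\log t\to\infty$ and $t\log t\ge\frac12 b(\setSize)\log\setSize$) hold for \emph{every} unbounded nondecreasing $b$ --- a faster--growing $b$ only strengthens the tail bound, and it also trivially dominates the a~priori bound ``max load $\le\setSize$'' --- so no case analysis on the growth rate of $b$ is needed, and the constants hidden in the two $O(\cdot)$'s depend only on $d,a_1,a_2$ and on $b$. The one point that deserves care, and which I would state as an explicit hypothesis rather than prove, is the independence of $h(x_1),\dots,h(x_{\setSize})$: marginal $d$-uniformity alone forces no non-trivial maximum--load bound, and this is precisely the gap that the paper's informal remark about needing ``some assumption about the performance of hashing function'' is flagging.
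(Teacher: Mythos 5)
Your proof is correct and follows essentially the same route as the paper's: apply Theorem~\ref{ourchernoff2-theorem} to the load of a fixed slot (where $p\setSize=d/a_1$ is a constant), take the threshold $t=b(\setSize)\log\setSize/\log\log\setSize$, and union-bound over the $O(\setSize)$ slots. The only differences are that you work out the tail asymptotics explicitly (the paper merely asserts $P_1=\littleo(1/\setSize)$) and that you flag the independence assumption on $h(x_1),\dots,h(x_{\setSize})$, which the paper uses silently; both are refinements of rigor rather than a change of method.
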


\begin{proof}
Consider an arbitrary integer $j$ with $1\le j\le \hashingSize$ in
the hashing table. Let $S$ have the elements $x_1, x_2, \cdots,
x_{\setSize}$.
Let $\hashingSize$ be an arbitrary $a_1\setSize\le \hashingSize\le
a_2\setSize$ for two fixed $a_1,a_2\in (0,+\infty)$. The probability
that $h(x)=j$ is at most ${d\over \hashingSize}\le {d\over
a_1\setSize}=p$. Let $y(\setSize)={b(\setSize)\log \setSize\over
\log\log \setSize}$. By Theorem~\ref{ourchernoff2-theorem}, with
probability at most $P_1=({e^{y(\setSize)}\over
(1+y(\setSize))^{1+y(\setSize)}})^{p\setSize}\le
({e^{y(\setSize)}\over (1+y(\setSize))^{1+y(\setSize)}})^{d\over
a_1}=\littleo({1\over \setSize})$, we have $|\{a_i:h(a_i)=j\}|\ge
(1+y(\setSize))p\setSize=(1+y(\setSize)){d\over
a_1}=O(y(\setSize))$.

With probability at most $\hashingSize\cdot P_1=\littleo(1)$, one of
the $\hashingSize$ positions in the hashing table has more than
$cy(\setSize)$ elements of $S$ to be mapped by $h(.)$ for some fixed
$c>0$. Thus, set $S$ has a $(O(1), O(y(\setSize)))$-Hashing Table
via $h[.]$ with probability at least $1-\littleo(1)$.
\end{proof}

\begin{proposition}Let $d$ be a fixed real in $[1,+\infty)$. Let $1\le a$ be fixed real numbers in $(0,+\infty)$,
 and $\epsilon$ be fixed real numbers in $(0,1)$.
Let $h(.)$ be a $d$-uniform hashing function for some $M$ in the
range $[\setSize^{1+\epsilon}, a\setSize^{1+\epsilon}]$. Then every
set of size $\setSize$  has a $(\setSize^{\epsilon}, O(1))$-Hashing
Table via $h(.)$ with probability at least $1-\littleo(1)$.
\end{proposition}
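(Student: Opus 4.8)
The plan is to follow the template of the preceding proposition, but take advantage of the fact that the table size $M$ is now super-linear in $\setSize$, which lets us push the maximum bin load all the way down to a constant. Fix an arbitrary set $S=\{x_1,\ldots,x_{\setSize}\}$ of size $\setSize$; the asserted probability is over the random hash function $h$, and, exactly as in the proof of the previous proposition, we work in the model in which the hash values $h(x_1),\ldots,h(x_{\setSize})$ are mutually independent, with $\prob(h(x_i)=j)\le d/M$ for every bin $j\in\{1,\ldots,M\}$.

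First I would fix a single bin $j$ and bound $\prob(X\ge c)$, where $X=|\{i:h(x_i)=j\}|$ and $c=c(\epsilon)$ is a constant to be chosen. By independence and $d$-uniformity, any fixed family of $c$ distinct elements all land in bin $j$ with probability at most $(d/M)^c$; summing over the ${\setSize \choose c}\le\setSize^c$ such families gives $\prob(X\ge c)\le\setSize^c(d/M)^c\le d^c\setSize^{-\epsilon c}$, using $M\ge\setSize^{1+\epsilon}$. (One could instead invoke Theorem~\ref{ourchernoff2-theorem} with $p=d/M$ and $1+\delta=c/(p\setSize)$; since $p\setSize\le d\setSize^{-\epsilon}\to 0$ this yields essentially the same bound $(ed/c)^c\setSize^{-\epsilon c}$.)

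Then I would union-bound over all $M\le a\setSize^{1+\epsilon}$ bins: the probability that some bin holds at least $c$ elements of $S$ is at most $a\setSize^{1+\epsilon}\cdot d^c\setSize^{-\epsilon c}=a\,d^c\,\setSize^{\,1-\epsilon(c-1)}$. Choosing $c=\ceiling{2/\epsilon}+1$, a constant since $\epsilon$ is fixed, forces the exponent $1-\epsilon(c-1)$ to be $\le -1$, so this probability is $\bigO(\setSize^{-1})=\littleo(1)$. Hence with probability at least $1-\littleo(1)$ every bin of $h$ contains at most $c-1=\bigO(1)$ elements of $S$; combined with $M\le a\setSize^{1+\epsilon}=(a\setSize^{\epsilon})\setSize$ (so the table has size $\bigO(\setSize^{\epsilon})\cdot\setSize$) and the fact that each $H[j]$ then stores only $\bigO(1)$ elements, this exhibits $S$ stored in a $(\bigO(\setSize^{\epsilon}),\bigO(1))$-hashing table via $h$, which is the claimed $(\setSize^{\epsilon},\bigO(1))$ bound once the fixed factor $a$ is absorbed into the $\bigO(\cdot)$.

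The only subtle ingredient — and it is an assumption inherited from the previous proposition rather than a genuine obstacle — is the mutual independence of the hash values of distinct elements, which is what justifies both the $(d/M)^c$ estimate for a $c$-subset and the application of Theorem~\ref{ourchernoff2-theorem}. Everything else is a routine two-level union bound, and the only real choice is the constant $c$, dictated by the requirement $\epsilon(c-1)>1$ needed to beat the $\setSize^{1+\epsilon}$ factor coming from the number of bins.
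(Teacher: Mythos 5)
Your proof is correct, and it takes a genuinely different route from the paper's. The paper fixes a bin $j$, sets $p=d/M$, and applies the multiplicative Chernoff bound (Theorem~\ref{ourchernoff2-theorem}) with a \emph{superconstant} relative deviation $\delta=c\setSize^{\epsilon}$, exploiting the fact that the expected load $p\setSize\le d\setSize^{-\epsilon}$ is vanishing so that the absolute threshold $(1+\delta)p\setSize$ collapses to the constant $2cd=200d/\epsilon$; the tail probability then comes out as roughly $(c_1\setSize)^{-cd\epsilon}$, small enough to survive the union bound over the $M=\Theta(\setSize^{1+\epsilon})$ bins. You instead bound $\prob(X\ge c)$ directly by a first-moment count over $c$-element subsets, getting $\binom{\setSize}{c}(d/M)^c\le d^c\setSize^{-\epsilon c}$, and choose $c=\ceiling{2/\epsilon}+1$ so that the bin union bound contributes only $\bigO(\setSize^{-1})$. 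Both arguments rest on the same unstated-but-necessary independence of the hash values (which you rightly flag, and which the paper also needs to invoke Theorem~\ref{ourchernoff2-theorem}), and both gloss over the same cosmetic issue that $M$ may be as large as $a\setSize^{1+\epsilon}$ rather than $\setSize^{1+\epsilon}$, so the first parameter is really $\bigO(\setSize^{\epsilon})$. Your version is more elementary (no Chernoff machinery) and more transparent about where the constant comes from, yielding a bin bound of $\ceiling{2/\epsilon}$ versus the paper's $200d/\epsilon$; the paper's version has the mild advantage of reusing a theorem already stated for the preceding proposition and of producing an explicitly polynomially small failure probability per bin with a large exponent. Either argument establishes the proposition.
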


\begin{proof}
Consider a position $j$ with $1\le j\le \hashingSize$ in the hashing
table. Let $S$ have the elements $x_1, x_2, \cdots, x_{\setSize}$.
Select a constant $c={ec_1}$ with $c_1={100\over \epsilon}$. Let
$g(\setSize)=\setSize^{\epsilon}$. Let $\hashingSize\ge \setSize
g(\setSize)$. Let $p={d\over \hashingSize}$. By
Theorem~\ref{ourchernoff2-theorem}, with probability at most
$P_1=({e^{cg(\setSize)}\over
(1+cg(\setSize))^{1+cg(\setSize)}})^{p\setSize}$,
$|\{a_i:h(a_i)=j\}|\ge (1+cg(\setSize))p\setSize$.

On the other hand, $(1+cg(\setSize))p\setSize\le
2cg(\setSize)p\setSize=2cg(\setSize)\cdot {d\over \hashingSize}\cdot
\setSize\le 2cg(\setSize)\cdot {d\over \setSize g(\setSize)}\cdot
\setSize =2cd$. Let $k=2cd=2ec_1d=2e\cdot {100\over \epsilon}\cdot
d={200d\over \epsilon}$.

We have
\begin{eqnarray}
P_1&=&({e^{cg(\setSize)}\over
(1+cg(\setSize))^{1+cg(\setSize)}})^{p\setSize}
\le({e^{cg(\setSize)}\over (1+cg(\setSize))^{1+cg(\setSize)}})^{d\setSize\over  \hashingSize}\\
&\le&({e^{cg(\setSize)}\over
(1+cg(\setSize))^{cg(\setSize)}})^{d\setSize\over  \hashingSize}
\le({e^{cg(\setSize)}\over (cg(\setSize))^{cg(\setSize)}})^{d\setSize\over  \hashingSize}\\
&\le&({e^{cg(\setSize)}\over
(cg(\setSize))^{cg(\setSize)}})^{d\setSize\over \setSize
g(\setSize)} \le({1\over c_1g(\setSize)})^{cd} \le({1\over
c_1\setSize^{\epsilon}})^{cd} \le({1\over
c_1\setSize})^{cd\epsilon}.
\end{eqnarray}

With probability at most $\hashingSize P_1\le 2g(\setSize)\setSize
({1\over c_1\setSize})^{cd\epsilon}=\littleo(1)$, one of the
$\hashingSize$ positions in the hashing table has more than $k$
elements of $S$ to be mapped by $h(.)$.

\end{proof}


\ifdefined\RegularVersion

\section{Conclusions}

We developed a  randomized greed approach for the maximum coverage
problem. It obtains the same approximation ratio $(1-{1\over e})$ as
the classical approximation for the maximum coverage problem,  while
its computational time is independent of the cardinalities of input
sets under the model that each set answers query and generates one
random sample in $\bigO(1)$ time. It can be applied to find
approximate maximum volume by selecting $k$ objects among a list of
objects such as rectangles in high dimensional space. It can provide
an efficient online implementation if each set is saved in a B-tree.
Our approximation ratio depends on the how much the random sampling
is biased, and the initial approximation accuracy for the input set
sizes. The two accuracies are determined by the parameters
$\alpha_L, \alpha_R,\delta_L$ and $\delta_R$ in a
$((\alpha_L,\alpha_R),(\delta_L,\delta_R))$-list. It seems that our
method can be generalized to deal with more general version of the
maximum coverage problems, and it is expected to obtain more results
in this direction. The notion of partial sublinear time algorithm
will be used to characterize more computational problems than the
sublinear time algorithm.

\section{Acknowledgements}
The author is grateful to Jack Snoeyink for his helpful comments
that improve the presentation of this paper.


\fi

\end{document}